\DeclareFontFamily{U}{rcjhbltx}{}
\DeclareFontShape{U}{rcjhbltx}{m}{n}{<->rcjhbltx}{}
\DeclareSymbolFont{hebrewletters}{U}{rcjhbltx}{m}{n}
\DeclareMathSymbol{\shin}{\mathord}{hebrewletters}{152}
\DeclareMathOperator{\bad}{\mathrm{bad}}
\newcommand{\QQ}{\mathbb Q}
\newtheorem{thm}[subsection]{Theorem}
\newtheorem{cor}[subsection]{Corollary}
\newtheorem{lem}[subsection]{Lemma}
\newtheorem{prop}[subsection]{Proposition}
\theoremstyle{definition}
\newtheorem{defn}[subsection]{Definition}
\newtheorem{conjecture}[subsection]{Conjecture}
\newtheorem{rem}[subsection]{Remark}
\newtheorem{example}[subsection]{Example}
\numberwithin{equation}{section}
\renewcommand\section{\@startsection {section}{1}{\z@}%
	{-3.25ex \@plus -1ex \@minus -.2ex}%
	{1.8ex \@plus.2ex}%
	{\centering\normalfont\normalsize\bfseries}}%
\renewcommand\subsection{\@startsection {subsection}{1}{\z@}%
	{-2.0ex \@plus -1ex \@minus -.2ex}%
	{-1.8ex \@plus.2ex}%
	{\normalfont\normalsize\bfseries}}%
\newcommand{\quash}[1]{}
\newcommand{\R}{\mathbb R}
\newcommand{\RR}{\mathbb R}
\newcommand{\CC}{\mathbb C}
\newcommand{\rank}{\mathop{\rm rank}}
\newcommand{\ZZ}{\mathbb Z}
\newcommand{\TT}{\mathbb T}
\newcommand{\cM}{{\mathcal M}}
\newcommand{\cP}{{\mathcal P}}
\newcommand{\tcM}{{\mathcal{M}}}
\newcommand{\cS}{{\mathcal S}}
\newcommand{\cA}{{\mathcal A}}
\newcommand{\cH}{{\mathcal H}}
\newcommand{\tcA}{{\tilde{\mathcal A}}}
\newcommand{\Qin}{Q_{\mathrm{in}}}
\newcommand{\Qout}{Q_{\mathrm{out}}}
\newcommand{\Sig}{{\mathscr S}}  
\newcommand{\nodcnt}{\phi}  
\newcommand{\ta}{{\widetilde\alpha}}
\newcommand{\diag}{\mathop {\rm diag}}
\DeclareMathOperator{\spec}{spec}
\DeclareMathOperator{\Ran}{Ran}
\DeclareMathOperator{\ind}{ind}
\DeclareMathOperator{\Hess}{Hess}
\DeclareMathOperator{\codim}{codim}
\newcommand{\f}{\mathrm{\scriptscriptstyle{free}}}
\begin{document}

\title{Smooth critical points of eigenvalues on the torus of magnetic perturbations of graphs}
\author{Lior Alon}
\address{Massachusetts Institute of Technology, Cambridge, MA, USA}

\author{Gregory Berkolaiko}
\address{Texas A\&M University, College
  Station, TX 77843-3368, USA}

\author{Mark Goresky}
\address{Institute for Advanced Study, Princeton, NJ, USA}

\begin{abstract}
Motivated by the nodal distribution universality conjecture for discrete operators on graphs and by the spectral analysis of their maximal abelian covers, we consider a family of Hermitian matrices $h_{\alpha}$ obtained by varying the complex phases of individual matrix elements.  This family is
parametrized by a $\beta$-dimensional torus, where $\beta$ is the first Betti number of the underlying graph.  The eigenvalues of each matrix are ordered, enabling us to treat the $k$-th eigenvalue $\lambda_k$ as a function on the torus.  We classify the smooth critical points of $\lambda_k$, describe their structure and Morse index in terms of the support and nodal count, that is, the number of sign changes 
between adjacent vertices of the corresponding eigenvector.  In general, the families under consideration exhibit critical submanifolds rather than isolated critical points.  These critical manifolds appear frequently and cannot be removed through perturbations.  We provide an algorithmic way of determining all critical submanifolds by investigating finitely many eigenvalue problems: the $2^\beta$ real symmetric matrices $h_\alpha$ in the family under consideration as well as their principal minors.
\end{abstract}	

\maketitle

\section{Introduction}

\subsection{Graph setting and the magnetic torus}

Throughout this paper  $G = (V=[n],E)$ is a simple connected graph on $n$ ordered vertices and first Betti number $\beta = |E|-n+1$.  An $n \times n$ matrix $h$ is {\em supported on $G$} (resp.~{\em strictly supported on $G$}) if, 
 for any $r \ne s$, $h_{rs} \ne 0 \implies (rs) \in E$ (resp.~$h_{rs} \ne 0 \iff (rs) \in E$). 
 Associated to a graph $G$ there are three vector spaces of matrices:
 \begin{itemize}
 \item $\cH (G)$ is the set of \emph{complex hermitian} matrices supported on $G$,
 \item $\cS(G)\subset \cH (G)$ is the set of \emph{real symmetric} matrices supported on $G$,
 \item $\cA(G)$ is the set of \emph{real antisymmetric} matrices supported on $G$.
 \end{itemize}

 For $h \in \cH(G)$, the eigenvalues 
$\lambda_1(h) \le \lambda_2(h) \le \cdots \le \lambda_n(h)$ are real and ordered.  
Hermitian matrices $h,h' \in \cH(G)$ are said to be \emph{gauge
  equivalent} (notation $h\sim h'$, cf.~\S \ref{subsec-gauge-invariance}) 
  if for some $\theta \in \RR^V$,
\begin{equation}
  \label{eq:gauge_equiv_def}
  h' = e^{i\theta} h e^{-i\theta},
  \qquad
  \text{where }e^{i\theta} = \diag(e^{i\theta_1}, e^{i\theta_2},\cdots, e^{i\theta_n}).
\end{equation}

The torus of \emph{magnetic perturbations} (cf.~\S \ref{subsec-magnetic-perturbation}) of a real symmetric matrix $h\in\cS(G)$, assumed to be \emph{strictly}\footnote{Being strictly supported on $G$ will be our standing assumption on $h\in \cS(G)$ throughout this paper.} supported on $G$, is the set of gauge-equivalence classes of Hermitian matrices,
\begin{equation}
  \label{eq:Mh_def}
  \tcM_h := \{h_{\alpha}\in\cH(G) \colon
  (h_{\alpha})_{rs} := e^{i\alpha_{rs}} h_{rs}, \ 
  \alpha\in\cA(G)\} / \sim.
\end{equation}
Of particular importance are the equivalence classes in $\tcM_h$ containing the {\em signings} of $h$: real symmetric matrices $h'$ with
$h'_{rr}=h_{rr}$ and $h'_{rs}=\pm h_{rs}$ for all $r\neq s$.

It is well known that $\tcM_h$ is
diffeomorphic to a torus\footnote{In \S\ref{subsec-gauge-invariance} below (see also \cite[Lem.~2.1]{LieLos_dmj93}) this torus is identified as $H^1(G, \RR) / H^1(G, 2\pi \ZZ)$, which is sometimes known as the Jacobi torus \cite{KotSun_aam00} of the graph $G$.} of dimension $\beta=|E|-n+1$.   
Because $h$ an $h'\sim h$ are unitarily equivalent, their spectra are identical, and the $k$-th
eigenvalue function on $\cH(G)$ descends to a function on $\tcM_h$,
\begin{equation}
  \label{eqn-lambda}
  \lambda_k:\tcM_h \to \RR
\end{equation}
which may be considered to be a sort of generalized Morse function \cite{BerZel_im24}, which loses its smoothness only at points of eigenvalue multiplicity.
We are interested in the \emph{smooth critical
points} $h_\alpha$ of this function, with motivation coming from two applications:
\begin{enumerate}
\item Band structure of the spectrum of periodic graphs, reviewed in
  \S\ref{sec:abelian_cover};
\item Distribution of the number of sign changes of graph
  eigenvectors, reviewed in \S\ref{sec:intro_nodal}.
\end{enumerate}

\subsection{Overview of results}
\label{sec:results_overview}

To give a high-level summary of our results,
\begin{itemize}
\item We establish that every smooth critical point of $\lambda_k:\tcM_h\to\RR$ lies on a critical submanifold associated to one of the finitely many choices of the ``critical data'' computable from $h$.  The computation boils down to finding the eigenvalue decomposition of a principal minor of a signing of $h$.
\end{itemize}

To give more detail, let $h \in \cS(G)$ be generic in the sense of \S\ref{ass}.
Fix $k \ge 1$ and suppose that $h_{\alpha} \in \tcM_h$ is a critical point of $\lambda_k$.
Suppose $\lambda = \lambda_k(h_{\alpha})$ is a simple eigenvalue with eigenvector $\psi$.  Then 
\begin{enumerate}
\item[(i)] the subgraph $G_N$ that is induced from the vertices 
\[ V_N = \left\{ v \in V:  \psi(v) \ne 0\right\}\] 
is an {\em admissible support} in the sense of \S \ref{subsec-admissible}.
The restriction $h_{\alpha}|G_N$ is gauge equivalent to a signing  $h_N$ of $ h|G_N$.  The restriction $\psi|V_N$ is gauge equivalent to an eigenvector
$\psi_N$ of $h_N$ with eigenvalue equal to $\lambda$.  The quadruple $(V_N, h_N, \psi_N, \lambda)$ is {\em critical data}
for $h$ in the sense of \S \ref{subsec-data}
\item[(ii)] The point $h_{\alpha}$ lies in the associated submanifold $F \subset \tcM_h$, see \eqref{eq:F}, consisting of the critical points that
share the same critical data $(V_N, h_N, \psi_N, \lambda)$.
It is diffeomorphic to a product \eqref{eqn-topology} of a torus and several 
{\em linkage spaces}.\footnote{Linkage spaces describe possible configurations of rigid links of prescribed length connected into a planar loop by joints; they are reviewed in Appendix \ref{sec:linkage}.}  It is stable (Theorem \ref{thm-stability})
with respect to perturbations of the underlying matrix $h \in \cS(G)$.
\item[(iii)] The Morse index of $\lambda_k$ at the point $h_{\alpha}$ is given by equation
\eqref{eq:MBindex} in terms of the number of sign changes of $\psi$ and the location of $\lambda$ in
the spectrum of $h_{\alpha}$ and its restrictions.
(The Morse index of $\lambda_k$ may fail to be constant on $F$, see Remark \ref{rem:jump_of_index}.)
\end{enumerate}

\subsection{Motivation 1: Band structure of the spectrum of the maximal abelian cover of \texorpdfstring{$G$}{G}.}
\label{sec:abelian_cover}

\newcommand{\wh}{{\widehat{h}}}
\newcommand{\wG}{{\widehat{G}}}

The maximal (or ``universal'') abelian cover $\wG$ of a finite
graph $G$ with the Betti number $\beta$ is the $\ZZ^\beta$-periodic graph
such that $\wG/\ZZ^\beta=G$.  The operator $h$ is lifted to a
periodic discrete Schr\"odinger operator $\wh$ acting on
$L^{2}\left(\wG\right)$, see
\cite{Sun_incol08,Sunada_TopologicalCrystallography,baez2025topologicalcrystals}.

Any operator can be decomposed into a direct sum (or integral) over the
irreducible representations of a (sub)group of its symmetries.  In the case when the symmetry group is $\ZZ^\beta$ (and the irreps are one-dimensional), this decomposition is known as the
Floquet--Bloch representation \cite{Kuc_bams16}.  As a result, we
obtain $\wh$ as a direct integral of $h_\alpha$ over a
$\beta$-dimensional torus of representations which one can recognize as the torus of magnetic perturbations $\tcM_h$.

The graph of the eigenvalues of $h_\alpha$ over $\tcM_h$ is known as
the ``dispersion relation'' or ``Bloch variety''.  It determines many
physical properties of the crystal $\wG$
\cite{AshcroftMermin_solid,Gieseker_fermi,Kuc_bams16} and the critical
points of the Bloch variety play a central role.  For instance, the
extrema of the $k$-th eigenvalue $\lambda_k(h_\alpha)$ determine the
edges of the band spectrum of $\wh$ \cite{DoKucSot_jmp20}.  In the
particular case of maximal abelian cover graphs, a complete
understanding of the critical points and their Morse data may open the
way to resolve the ``full spectrum'' conjecture of Higuchi and Shirai
\cite[Problem C]{HigShi_ymj99}, \cite[Conj.~2.4]{HigShi_dga04}: the
spectrum of $\wh$ has no gaps.

It is interesting to compare our results with a recent work of Faust and Sottile
\cite{FauSot_jst24} which used Bernstein--Kushnirenko theorem to
derive a bound on the total number of \emph{isolated} critical points
in the dispersion relation.  In the setting of maximal abelian cover
graphs, our results show an abundance of critical points that are
\emph{not isolated} but lie on critical submanifolds, which are fully
classified in Theorem~\ref{thm:main}.

\subsection{Motivation 2: Nodal count and nodal distribution}\label{sec:intro_nodal}

Suppose $\lambda_k(h)$ is a simple eigenvalue of an $h\in \cS(G)$ with
eigenvector $\psi$ which is chosen to be real.  Its {\em nodal count}
is
\begin{equation}
  \label{eqn-nodal-def}
  \nodcnt(h,k) = \left| \left\{(rs)\in E:\ \psi_r h_{rs} \psi_s > 0
    \right\} \right|.
\end{equation}
When $h$ is the Laplace operator or a discrete Schr\"odinger operator then $h_{rs} < 0$ and the nodal count is the number of
edges on which $\psi$ changes sign\footnote{In this work, the number of sign changes plays the principal role.  A related question of estimating the number of nodal domains is also well-studied \cite{DavGlaLeySta_laa01,Biy_laa03,XuYau_jc12,GeLiu_cvpde23}.}.

If the graph $G$ is a tree then $\nodcnt(h,k) = k-1$ \cite{Fie_cmj75,Ber_cmp08}, provided $\psi$ does not vanish on any vertices.  For a
general graph the difference
\begin{equation}
  \label{eq:nodal_surplus_def}
  \sigma(h,k) := \nodcnt(h,k) - (k-1)  
\end{equation}
is called
the {\em nodal surplus} and satisfies
\cite{Ber_cmp08,Ber_apde13} 
\[ 0 \le \sigma(h,k) \le \beta.\] 

A {\em signing} $h'$ of $h$ is a real symmetric matrix with
$h'_{rr}=h_{rr}$ and $h'_{rs}=\pm h_{rs}$ for all $r\neq s$.  The set
of equivalence classes containing a signing of $h$ consists exactly of
those points in $\tcM_h$ that are fixed under complex conjugation:
\begin{equation}
  \label{eq:signings_def} 
  \Sig_h := \{[h']\in \tcM_h \colon \text{$h'$ is a signing of
    $h$}\}=\{[h_{\alpha}]\in \tcM_h \colon
  \Big[\overline{h_{\alpha}} \Big] = [h_\alpha]\}.
\end{equation}
There are $2^{|E|}$ signings of $h$ and $2^{\beta}$ elements in $\Sig_h$.  
The spectra of signings of graph Laplacians has interesting connections to celebrated problems: for example, optimization of eigenvalues over all possible signings was used to construct Ramanujan graphs \cite{BilLin_comb06,MarSpi_am15a}
and later led to the resolution \cite{MarSpi_am15b} of the Kadison--Singer problem; maximization of eigenvalue multiplicity leads to optimal spherical two-distance sets \cite{JiaEtAl_c23} and, in particular, to maximization of equiangular line configurations \cite{JiaEtAl_am21}.

The {\em nodal surplus distribution} over all signings is the random variable
$\sigma(h',k)$ where $h'\in \Sig_h$ is a random signing of $h$.  It has been conjectured\footnote{See \Cref{sec-experiments} for history, evidence and a precise formulation of the conjecture.} that \emph{the nodal surplus distribution, once normalized, converges to standard normal} for any sequence of graphs $G$ with $\beta(G)\to\infty$.  The convergence is expected to be uniform in the choice of the graphs, choice of the matrices supported on the graphs and choice of the eigenvalue label $k$ (which may change from one graph to the next).  

 An analytical approach to the nodal surplus is provided by the
\emph{nodal magnetic theorem} of Berkolaiko \cite {Ber_apde13} and
Colin de Verdi\`ere \cite{Col_apde13}; it builds on the fact that the
signings are embedded in the torus of magnetic perturbations.
Suppose $\lambda_k$ is a simple eigenvalue of a signing $h'\in\Sig_h$ with
nowhere vanishing eigenvector.  A immediate consequence of the complex
conjugation symmetry of the points in $\Sig_h$ is that $h'$ is a
critical point of $\lambda_k$.  The nodal magnetic theorem \cite {Ber_apde13,Col_apde13}
states that this is a nondegenerate critical point and its Morse index is exactly the
nodal surplus $\sigma(h',k)$.

\begin{figure}
    \centering
    \includegraphics[width=1\textwidth]{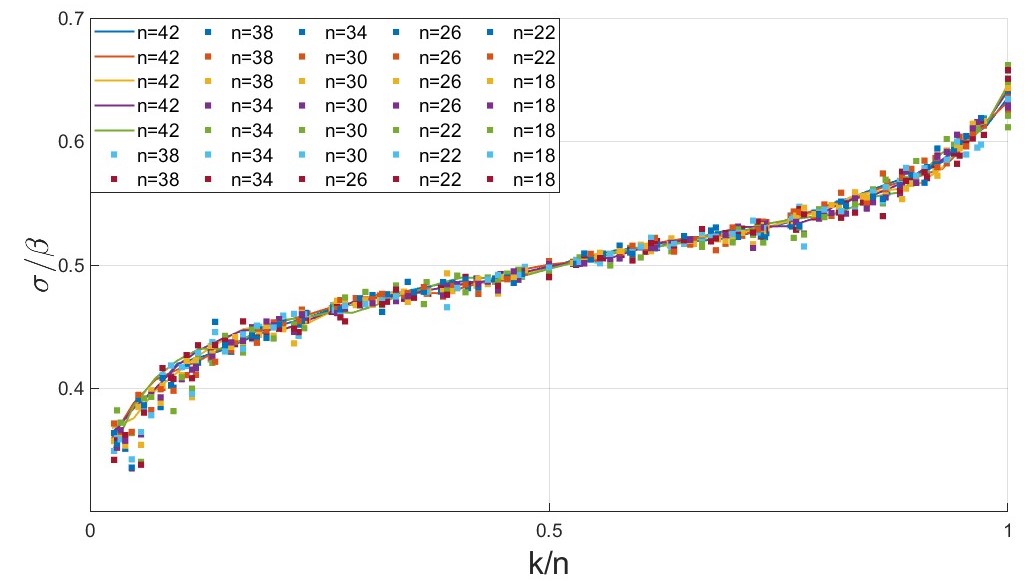}
    \caption{Plot of $\mathbb{E}(\sigma(h',k))/\beta(G)$, computed for
      a given $k$ and a random signing $h'$, against $k/n$. The matrix
      $h$ is the Laplacian of $G$ plus diagonal potential, and $G$
      ranges over 35 randomly chosen 3-regular graphs of $n$ vertices,
      see \Cref{sec-experiments-3reg}. } 
    \label{fig:Mean_surplus_curves4}
\end{figure}

One might therefore hope\footnote{Following a suggestion by Ram Band.}
to understand the distribution of nodal surplus over different
signings by analyzing the Morse theory of the eigenvalue map
\eqref{eqn-lambda}.  If, hypothetically, there were no critical points
of $\lambda_k$ on $\tcM_h$ other than the signings, it would
immediately imply that the nodal surplus distribution was
binomial\footnote{Morse inequalities \cite{Milnor_MorseTheory} for the
  torus $\TT^\beta$ state that there are at least $\binom{\beta}{s}$
  critical points of Morse index $s$, with all inequalities becoming
  sharp if the total number of critical points is $2^\beta$.}, with
the mean $\beta/2$.  Unfortunately, this Morse theoretic approach is
hindered by two phenomena, (a) critical points
$[h_\alpha] \not\in \Sig_h$ such that $\lambda_k(h_{\alpha})$ is a
simple eigenvalue and consequently $\lambda_k$ is smooth, and (b)
points $[h_\alpha] \in \tcM_h$ for which the eigenvalue $\lambda_k$
has multiplicity $\ge 2$ and $\lambda_k$ is, in general, non-smooth.
That these phenomena make a substantial contribution can be seen, for
example, from the experimental deviation of the mean of $\sigma(h',k)$ for
a random signing $h'\in\Sig_h$ from the binomial prediction $\beta/2$,
see \Cref{fig:Mean_surplus_curves4}.

In this paper we analyze the smooth critical points --- the points of
type (a) --- which turn out not to be isolated, in general.  Our
results can be viewed as an algorithm for computing all the critical
submanifolds and their Morse data by ranging over the finite
collection of signings of $h$ and admissible supports in $G$.  In
\S\ref{sec-experiments} we implement this algorithm in a numerical
investigation of $3$-regular graphs.  We discover that the symmetry
points $\Sig_h$ are but a tiny proportion of all critical point.
Remarkably, the conjecture for the universal behavior of the number of
sign changes of the eigenvectors (or, equivalently, the Morse index of
a random point in $\Sig_h$), still appears to hold.

Finally, our results also show that in the case of graphs with
disjoint cycles, there are no simple critical points except for
$\Sig_h$, opening a path to a simple proof of binomial nodal
distribution in this family of graphs, once the points of type (b) are
taken care of, for instance by using the stratified Morse theory developed
for the eigenvalues of generic self-adjoint matrix families in
\cite{BerZel_im24}.


\section{Further notation and definitions}

\subsection{Admissible support}
\label{subsec-admissible}

If $G=(V,E)$ is a simple connected graph then a subset
$V_N \subset V$ is called an {\em admissible support} if
\vspace{-5pt}
\begin{enumerate}
\item the subgraph $G_N$ induced by $V_N$ is connected and
\item each vertex $r \notin V_N$ has either no neighbors in $V_N$ or at least three neighbors in $V_N$.
\end{enumerate}
Examples of vertex subsets which are admissible and subsets which are
not are given in Figure \ref{fig:admissible}.
It will be established in Theorem~\ref{thm:main} that the support of a
critical eigenvector $\psi$ (i.e. the set of vertices where $\psi$ is
non-zero, hence the subscript $N$) is admissible.

Within the complement of $V_N$ we will distinguish vertices that are
connected to $V_N$ and the vertices that are not.
The {\em boundary vertices} and the {\em residual vertices} are
\begin{align}
  \label{eq:VZN_def}
  V_{ZN} &= \left\{ r \notin V_N: \exists s \in V_N \text{ with }\ (rs) \in E \right\},\\
  \label{eq:VZZ_def}
  V_{ZZ} &= V \setminus \big(V_N \cup V_{ZN}\big)
  = \left\{ r \notin V_N: (rs) \in E \Rightarrow s \notin V_N \right\}.
\end{align}
In the context of a vector $\psi$, $V_{ZN}$ is the set of zero
vertices (of $\psi$) that are connected to some non-zero vertices and
$V_{ZZ}$ are the zero vertices that are only connected to zero
vertices (see Figure~\ref{fig-tree} for an illustration).  We have the
disjoint decomposition
\begin{equation}
  \label{eq:V_decomp}
  V = V_N \sqcup V_{ZN} \sqcup V_{ZZ}. 
\end{equation}

\begin{figure}
    \centering
    \includegraphics[width=0.75\linewidth]{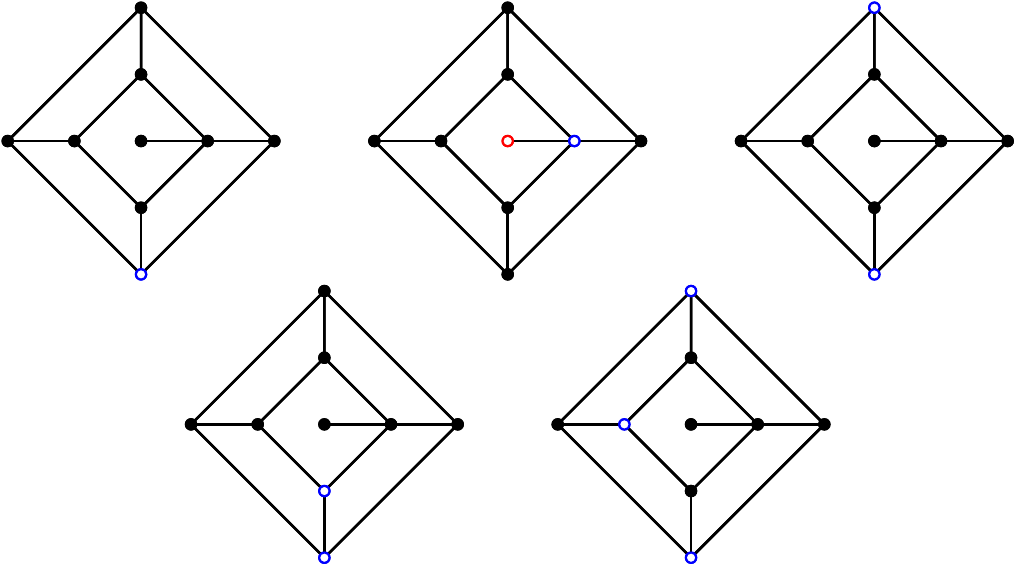}
    \caption{Top row: examples of admissible support (vertices in $V_N$ are solid black circles, vertices in $V_{ZN}$ and $V_{ZZ}$ are empty blue and red vertices, correspondingly).  Bottom row: examples of $V_N$ that are inadmissible (left: boundary degree is too small; right: the subgraph induced by $V_N$ is disconnected).}
    \label{fig:admissible}
\end{figure}

We also define the corresponding edge sets
\begin{align}
  \label{eq:edge_types}
  E_{NN} &= \left\{ (rs) \in E:\  r,s \in V_N\right\}, \\
  E_{ZZ} &= \left\{ (rs) \in E:\  r,s \not\in V_N \right\}, \\
  E_{ZN} &= \left\{ (rs)\in E: r \notin V_N, s \in V_N \right\},
\end{align}
with an example shown in \Cref{fig:VE_partitions}.

\usetikzlibrary{decorations.pathreplacing}
\begin{figure}[H]
\begin{tikzpicture}[scale = 1.5]

\foreach \x in {1,...,6}
  {\node [inner sep = 0pt](N\x) at (\x,4){};}
\node[inner sep = 0pt] (N0) at (4,5){};

\foreach \x in {1,2,3}
  {\node[inner sep = 0pt](ZN\x) at (2*\x,3){};}

\foreach \x in {1,...,3}
{\node[inner sep = 0pt](ZZ\x) at (2*\x,2){};}

\draw[thick] (N1) to [out = 45, in = 135] (N3);
\draw[thick] (N1) to [out = 90, in = 180] (N0);
\draw[thick] (N1) -- (N5);
\foreach \x in {3,...,6}
  {\draw[thick] (N0) -- (N\x);}

\foreach \x in {2,...,5}
  {\draw[line width=1.5, color=cyan] (ZN2) -- (N\x);}
\foreach \x in {1,...,3}
  {\draw[line width=1.5, color = cyan] (ZN1) -- (N\x);}
\foreach \x in {4,...,6}
  {\draw[line width=1.5, color = cyan](ZN3) -- (N\x);}

\draw[line width=1.5, color=red] (ZN1) -- (ZN2)--(ZZ2)--(ZZ1);
\draw[line width=1.5, color=red] (ZN3) -- (ZZ3);

\foreach \x in {0,...,6}
{\draw[fill] (N\x) circle [radius=.1];}

\foreach \x in {1,...,3}
{\draw[fill, color=cyan] (ZN\x) circle [radius = .1];
  \draw[fill, color=white] (ZN\x) circle [radius =.06];}

\foreach \x in {1,...,3}
{\draw[fill, color=red] (ZZ\x) circle [radius = .1];
\draw[fill, color=white] (2*\x,2) circle [radius = .06];}

  \draw[decorate,decoration={brace,mirror,  amplitude = 13}]    (7,3.7)--(7,5.3) node [midway, xshift = 1cm] {$V_N$}    ;
  \draw[decorate,decoration={brace,mirror,  amplitude = 7}]    (7,2.75)--(7,3.25) node [midway, xshift = 1cm, cyan] {$V_{ZN}$}    ;
  \draw[decorate,decoration={brace,mirror,  amplitude = 7}]    (7,1.75)--(7,2.25) node [midway, xshift = 1cm, red] {$V_{ZZ}$}    ;

\end{tikzpicture}
\caption{An example of a graph with an admissible support $V_N$ shown
  as black vertices (the vertices in $V_{ZN}$ and $V_{ZZ}$ are cyan
  and red empty circles).  The edges $E_N$, $E_{ZN}$, and $E_{ZZ}$ are shown in
  black, cyan, and red, respectively.}
\label{fig:VE_partitions}
\end{figure}
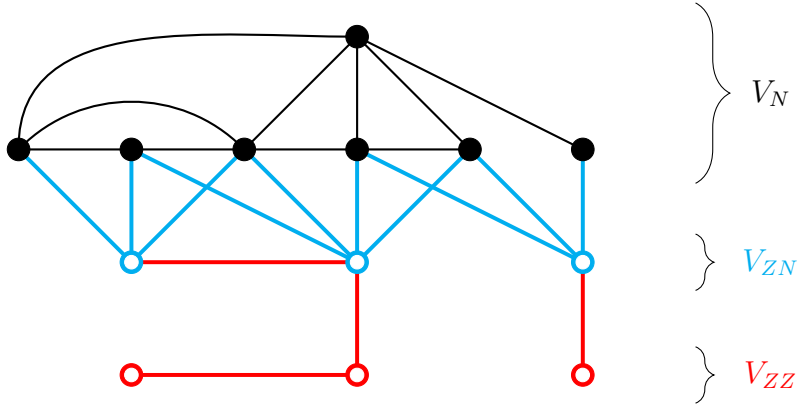

We will denote by $G_N$, $G_{ZN}$ and $G_{ZZ}$ the sugraphs induced by
the vertex sets $V_N$, $V_{ZN}$ and $V_{ZZ}$.  Note that $E_{NN}$ is
the edge set of the subgraph $G_N$, but the edge sets of $G_{ZN}$ and
$G_{ZZ}$ are \emph{not} $E_{ZN}$ and $E_{ZZ}$.  In fact, they are
subsets of $E_{ZZ}$ and
\begin{equation}
  \label{eq:EzzVzz}
  |E_{ZZ}| - |V_{ZZ}| \geq 0,
\end{equation}
which will be of importance later.

The notation $h|G_N$ will be used for the subblock of the matrix
$h \in \cH(G)$, consisting of the elements of $h$ with indices in
$V_N$.  Similarly, $h|G_{ZZ}$ denotes the restriction of $h$ to the
subgraph $G_{ZZ}$ that is induced by the vertices $V_{ZZ}$.

\subsection{Genericity conditions}
\label{ass}
In the results below, we will use
the following conditions on $h\in\cS(G)$.
\begin{enumerate}[label=(\roman*)]
\item \label{strict}
  The matrix $h$ is strictly supported on $G$.
\item \label{item-subset} For any induced subgraph $G_J=(J, E_J)$ of
  $G$ (including the case $G_{J}=G$), and any signing $h'\in \Sig_{h|G_J}$, all eigenvalues $h'$ are
  simple, and all eigenvectors are nowhere-vanishing on exactly one
  connected component of $G_J$.
\end{enumerate}  

\begin{prop}
  \label{prop-genericity}
  The set of matrices in $\cS(G)$ that fail to satisfy Assumptions
  \ref{ass} is a semi-algebraic subvariety of codimension $\ge 1$.
  Consequently, the set of matrices that satisfy assumptions \ref{ass}
  is open and dense in $\cS(G)$.
\end{prop}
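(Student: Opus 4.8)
The plan is to exhibit the ``bad locus'' $\mathcal B\subset\cS(G)$ of matrices failing Assumptions~\ref{ass} as a finite union of real-algebraic hypersurfaces, each being the zero set of a polynomial that does \emph{not} vanish identically on the linear space $\cS(G)\cong\RR^{\,n+|E|}$. Granting this, everything follows at once: a proper algebraic subvariety of a real affine space is closed, of codimension $\ge1$, and of Lebesgue measure zero, so $\mathcal B$ is semi-algebraic of codimension $\ge1$ and $\cS(G)\setminus\mathcal B$ is open and dense.

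First I would eliminate the quantifiers hidden in the two conditions of Assumptions~\ref{ass}. Condition (i) fails exactly on the finite union of hyperplanes $\bigcup_{(rs)\in E}\{h_{rs}=0\}$. For condition (ii), a signing of a restriction $h|G_J$ is encoded by a sign pattern $\epsilon\in\{\pm1\}^{E_J}$ (only finitely many pairs $(J,\epsilon)$ occur), and $h^{(J,\epsilon)}$ together with each of its principal submatrices $h^{(J,\epsilon)}_{(v)}$ (delete row and column $v$) has entries equal to $\pm$ the entries of $h$; hence their characteristic polynomials $\chi(\cdot)$, and all discriminants and resultants built from these, are polynomials in the entries of $h$. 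The structural point is that $h^{(J,\epsilon)}$ is block-diagonal along the connected components of $G_J$, so simplicity of its spectrum is equivalent to $\operatorname{disc}_\lambda\chi(h^{(J,\epsilon)})\ne0$, and, given simplicity, each eigenvector is supported on a single component; thus the ``nowhere vanishing on exactly one component'' clause of (ii) reduces to requiring, for every \emph{connected} induced subgraph $G_K$, every signing $h^{(K,\epsilon)}$, and every $v\in K$, that no eigenvector of $h^{(K,\epsilon)}$ vanishes at $v$ --- which, by the adjugate identity $\psi(v)^{2}\propto\det\!\big(\mu I-h^{(K,\epsilon)}_{(v)}\big)$ for a simple eigenvalue $\mu$ with eigenvector $\psi$, means exactly that $\chi(h^{(K,\epsilon)})$ and $\chi(h^{(K,\epsilon)}_{(v)})$ are coprime. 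I would then verify the resulting identity: $\mathcal B$ is the union of the hyperplanes $\{h_{rs}=0\}$, $(rs)\in E$; the hypersurfaces $\{\operatorname{disc}_\lambda\chi(h^{(J,\epsilon)})=0\}$ over all induced $G_J$ and patterns $\epsilon$; and the hypersurfaces $\{\operatorname{Res}_\lambda(\chi(h^{(K,\epsilon)}),\chi(h^{(K,\epsilon)}_{(v)}))=0\}$ over all connected induced $G_K$, patterns $\epsilon$ on $E_K$, and $v\in K$. In particular $\mathcal B$ is closed and semi-algebraic.

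It remains to produce, in each of the three families, a witness where the defining polynomial does not vanish. For the hyperplanes this is immediate. For the discriminants, $\diag(1,2,\dots,n)\in\cS(G)$ makes every $h^{(J,\epsilon)}$ diagonal with distinct entries. For the resultants, fix a connected $G_K$, a pattern $\epsilon$, and a vertex $v\in K$, choose a spanning tree $T$ of $G_K$, and take $h\in\cS(G)$ whose $K\times K$ block is supported on $T$; since $\beta(T)=0$ the restricted pattern $\epsilon|_T$ is a coboundary, so $h^{(K,\epsilon)}$ is conjugate by a $\pm1$ diagonal matrix to $h|G_K$, reducing the claim to the existence of a symmetric matrix supported on $T$ with simple spectrum and eigenvectors not vanishing at $v$. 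This follows from the (classical) fact that a \emph{generic} symmetric matrix $M$ supported on a tree has both properties.

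This last fact --- implicit in the tree case recalled around \eqref{eqn-nodal-def} --- is the one step I expect to require genuine work, and I would prove it by induction on the number of vertices, in the form: for every tree $T$ and every $v\in T$, a generic $M\in\cS(T)$ has $\chi(M)$ coprime to $\chi(M_{(v)})$ (equivalently, eigenvectors not vanishing at $v$; simplicity of $\spec(M)$ then follows from this coprimality together with Cauchy interlacing of $\chi(M_{(v)})$ inside $\chi(M)$). Writing $T\setminus v=T_1\sqcup\dots\sqcup T_d$ with $r_j\in T_j$ the neighbour of $v$, the absence of triangles makes $\{r_1,\dots,r_d\}$ an independent set, so the cofactor expansion of $\det(\mu I-M)$ along row and column $v$ writes $\chi(M)$ as a polynomial combination of the $\chi(M|T_j)$ and the $\chi(M|(T_j\setminus r_j))$. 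Choosing the blocks $M|T_j$ generically --- simple spectra, pairwise disjoint and disjoint from $M_{vv}$, and, by the inductive hypothesis applied to each $(T_j,r_j)$, with $\chi(M|T_j)$ coprime to $\chi(M|(T_j\setminus r_j))$ --- any common root $\mu$ of $\chi(M)$ and $\chi(M_{(v)})=\prod_j\chi(M|T_j)$ would have to be a root of exactly one factor $\chi(M|T_{j_0})$, whereupon the expansion collapses to $\chi(M)(\mu)=-\,M_{vr_{j_0}}^{2}\,\chi(M|(T_{j_0}\setminus r_{j_0}))(\mu)\prod_{j'\ne j_0}\chi(M|T_{j'})(\mu)\ne0$, a contradiction. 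Assembling the pieces, $\mathcal B$ is a finite union of proper subvarieties of the irreducible variety $\cS(G)$, hence semi-algebraic of codimension $\ge1$, and its complement is open and dense.
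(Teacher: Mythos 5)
Your argument is correct, but it takes a genuinely different route from the paper. The paper's proof is short because it outsources the hard part: it first shows (Lemma~\ref{lem:subgraph_up}) that pulling back a proper semi-algebraic set of $\cS(G')$ to $\cS(G)$ along the restriction map preserves positive codimension, then cites \cite[Prop.~6.3]{AloGor_prep24} for the key fact that $\bad(G')$ is proper semi-algebraic when $G'$ is connected, extends this to disconnected $G'$ by decomposing $\bad$ along the discriminant locus plus the per-component bad loci (\eqref{eq:big_bad}), and finally takes a finite union over induced subgraphs. Your proof, by contrast, is self-contained: you write the bad locus explicitly as a finite union of zero sets of named polynomials (coordinate hyperplanes, discriminants of signed restrictions, and resultants $\operatorname{Res}_\lambda\big(\chi(h^{(K,\epsilon)}),\chi(h^{(K,\epsilon)}_{(v)})\big)$), reduce the eigenvector non-vanishing condition to coprimality via the adjugate identity $\psi(v)^2\,\chi'_M(\mu)=\chi_{M_{(v)}}(\mu)$, and then exhibit an explicit witness for each polynomial being nonzero. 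The only nontrivial witness --- for the resultant family --- you obtain by killing the non-tree edges of $G_K$ so the signing becomes gauge-trivial, which reduces everything to the tree case, and then you supply an inductive construction on trees. I checked the induction: the Schur-complement expansion of $\chi(M)$ at $v$ is exactly as you write, the $r_j$'s are independent because $T$ has no triangles, and the conclusion $\chi(M)(\mu)=-M_{vr_{j_0}}^2\,\chi(M|(T_{j_0}\setminus r_{j_0}))(\mu)\prod_{j'\ne j_0}\chi(M|T_{j'})(\mu)\ne0$ follows from the inductive hypothesis and the disjoint-spectrum choice. What each approach buys: the paper's is shorter and modular (reusing a result already established elsewhere), while yours is constructive, requires no external reference, and makes the algebraic structure of the bad locus (and the specific polynomials cutting it out) fully explicit --- which is arguably more informative if one actually wants to test genericity computationally. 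One small presentational point: you write ``a generic symmetric matrix supported on a tree has both properties'' and then prove existence by induction; since what you really use is the existence of a single witness per $(K,\epsilon,v)$, it would be cleaner to phrase the lemma as an existence statement from the start rather than invoking genericity, which is the very notion being established.
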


The proof appears in \S\ref{sec:generic_consequences}.

\section{Statement of results}

\subsection{Critical data} \label{subsec-data}
Let $h \in \cS(G)$.  {\em Critical data} for $h$ is a quadruple
$(V_N, h_N, \psi_N, \lambda)$, where
\begin{itemize}
\item $V_N \subset V$ is an admissible support with induced graph
  $G_N$, see \S~\ref{subsec-admissible},
\item $h_N \in \Sig_{h|G_N}$ is a signing of the restriction 
  $h|G_N$ of $h$ to  $G_N$,
\item $\psi_N \in \CC^{V_N}$ is an eigenvector of $h_N$ with
  the eigenvalue $\lambda = \lambda_{k_N}(h_N)$.
\end{itemize}  To each choice of critical data we  associate a (possibly empty) subset
of $\tcM_h$, 
\begin{equation}
  \label{eq:F}
  F = F(V_N, h_N, \psi_N, \lambda)
  =\{[h_{\alpha}]\in\tcM_h :\ h_\alpha |G_N = h_N \text{ and } 
  h_\alpha \psi = \lambda \psi \},
\end{equation}
where $\psi$ is the extension of $\psi_N$ to the rest of the graph
$G$ by 0.

\begin{thm}
  \label{thm:main}
  Let $h \in \cS(G)$ satisfy the genericity hypotheses \S\ref{ass}.
  Let $(V_N, h_N, \psi_N, \lambda)$ be critical data for $h$ with
  associated $F = F(V_N,h_N,\psi_N,\lambda)$.  Then:
  \begin{enumerate}
  \item \label{item:Ftopology} The set $F$ is a smooth submanifold of
    $\tcM_h$ diffeomorphic to a product of a torus with one or more
    linkage manifolds,
    \begin{equation}
      \label{eqn-topology}
      F \cong \TT^{|E_{ZZ}| - |V_{ZZ}|} \times \prod_{r \in
        V_{ZN}}M(b_r),
    \end{equation}
    where each $M(b_r)$, $r\in V_{ZN}$, is the linkage manifold (see
    Appendix \ref{sec:linkage}) associated to the vector
    \begin{equation}
      \label{eq:br_def}
      b_r = \left\{ \big| \psi_s h_{rs} \big| \colon (rs) \in E_{ZN}\right\}.
    \end{equation}
    The manifold $F$ is non-empty if and only if
    \begin{equation}
      \label{eq:nontriv_F}
      2 \max_{s\in V_{N}} |h_{rs}\psi_{s}|
      \le 
      \sum_{s\in V_{N}}|h_{rs}\psi_{s}|,
      \qquad \text{for each } r\in V_{ZN},
    \end{equation}
    in which case $F$ has $\prod_{r \in V_{ZN}}\beta_0(b_r)$
    (see \Cref{thm-linkage}\eqref{item:connected_components})
    connected components and 
    \begin{align}
      \label{eq:MBdimension}
      &\dim(F) =
        |E_{ZN}| - 3|V_{ZN}| + |E_{ZZ}| - |V_{ZZ}|, \\
      &\codim(F) = \beta(G_N) + 2 |V_{ZN}|.
    \end{align}
  \item \label{item:Fcritical} Let $[h_{\alpha}] \in F$ and let $k$ be
    the label of the eigenvalue $\lambda$ in the spectrum of
    $h_\alpha$, cf.~\eqref{eq:F}.  If $\lambda_k(h_\alpha)$ is
    simple, then $[h_\alpha]$ is a critical point of $\lambda_k$ on
    $\tcM_h$.  Conversely, each simple critical point of $\lambda_k$
    belongs to one of the submanifolds $F(V_N, h_N, \psi_N, \lambda)$.
  \item \label{item:MBindex} If $\lambda_k(h_\alpha)$ introduced in
    \eqref{item:Fcritical} is simple, and
    $\lambda =\lambda_k(h_\alpha) \not\in
    \spec\big(h_\alpha|G_{ZZ}\big)$, then the critical point
    $h_\alpha$ is non-degenerate in the directions normal to $F$, with
    the Morse index
    \begin{equation}
      \label{eq:MBindex}
      \ind(\lambda_k, h_\alpha)
      = \sigma(h_N, k_N) +2|V_{ZN}|-2(k-k_{ZZ}-k_N),
    \end{equation}
    where $k_N$ is the label of the eigenvalue $\lambda$ in the
    spectrum of $h_N$ and $k_{ZZ}$ is the number of eigenvalues of
    $h_\alpha|G_{ZZ}$ less than $\lambda$.
  \end{enumerate}
\end{thm}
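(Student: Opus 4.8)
The plan is to prove the three assertions essentially independently; parts~(1) and~(2) are bookkeeping plus one genericity input each, and part~(3) carries the analytic weight. For part~(1) I would first pass to a convenient gauge: on the codimension-$\beta(G_N)$ sublocus of $\tcM_h$ where the fluxes through $H_1(G_N)$ match those of the signing $h_N$, one may choose a representative with $h_\alpha|G_N=h_N$ exactly, which leaves the phases on $E_{ZN}\cup E_{ZZ}$ free and the residual torus of gauge transformations $(S^1)^{V_{ZN}\cup V_{ZZ}}$. With $\psi$ the zero-extension of $\psi_N$, the equation $h_\alpha\psi=\lambda\psi$ is then automatic on $V_N$ (as $h_N\psi_N=\lambda\psi_N$), vacuous on $V_{ZZ}$, and at each $r\in V_{ZN}$ becomes the closed-polygon relation $\sum_{s}e^{i\alpha_{rs}}h_{rs}\psi_s=0$ over the edges $(rs)\in E_{ZN}$, i.e.\ a linkage with bar lengths $b_r$ from \eqref{eq:br_def}. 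Normalizing, for each $r$, the phase of one reference edge by the residual gauge at $r$ identifies the $r$-th factor with $M(b_r)$, while the phases on $E_{ZZ}$ modulo the residual gauge on $V_{ZZ}$ form a torus of rank $|E_{ZZ}|-|V_{ZZ}|$ --- the exponent being correct because connectedness of $G$ forces the graph $(V_{ZN}\sqcup V_{ZZ},E_{ZZ})$ to have no component contained in $V_{ZZ}$, so the $V_{ZZ}$-gauge injects into $\TT^{E_{ZZ}}$. This gives the diffeomorphism \eqref{eqn-topology}; smoothness follows because genericity of $h$ rules out degenerate (collinear) linkage configurations. The formulas \eqref{eq:MBdimension}, the component count $\prod_{r}\beta_0(b_r)$, and the non-emptiness criterion \eqref{eq:nontriv_F} --- which is exactly the polygon inequality ``each $M(b_r)\neq\emptyset$'', cf.\ \Cref{thm-linkage} --- then follow from $\beta(G_N)=|E_{NN}|-|V_N|+1$.

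For part~(2), the Hellmann--Feynman formula gives, for a unit eigenvector $\psi$ of a simple $\lambda_k(h_\alpha)$, $\partial\lambda_k/\partial\alpha_{rs}=-2\Im\big(e^{i\alpha_{rs}}h_{rs}\overline{\psi_r}\psi_s\big)$ on every edge. If $[h_\alpha]\in F$, then $\psi$ is supported on $V_N$, on $E_{NN}$ we have a real signing with a real eigenvector, and on every other edge $\psi_r\psi_s=0$, so $d\lambda_k=0$. Conversely, at a simple critical point all these derivatives vanish, whence $e^{i\alpha_{rs}}h_{rs}\overline{\psi_r}\psi_s\in\RR$ for all edges. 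Setting $V_N=\{v:\psi_v\neq0\}$ and gauging by $\theta_v=-\arg\psi_v$ on $V_N$ makes $\psi$ real and positive on $V_N$ and $h_\alpha|G_N$ real with off-diagonal entries of modulus $|h_{rs}|$, i.e.\ a signing $h_N$ of $h|G_N$ with real eigenvector $\psi_N=\psi|V_N$; thus $[h_\alpha]\in F(V_N,h_N,\psi_N,\lambda)$, and it is a critical point by the previous sentence. It remains to show $V_N$ is admissible, and here I would use \S\ref{ass}: if $G_N$ were disconnected, $\psi_N$ would be a nowhere-vanishing eigenvector of the signing $h_N$ supported on more than one component; if some $r\notin V_N$ had a single neighbor $s\in V_N$, the $r$-row of $h_\alpha\psi=\lambda\psi$ would read $(h_\alpha)_{rs}\psi_s=0$; and if $r\notin V_N$ had exactly two neighbors $s_1,s_2\in V_N$, then one extra gauge at $r$ makes $h_\alpha|_{V_N\cup\{r\}}$ real, and since gauges preserve $|(h_\alpha)_{rs_i}|=|h_{rs_i}|$ the closure relation $(h_\alpha)_{rs_1}\psi_{s_1}+(h_\alpha)_{rs_2}\psi_{s_2}=0$ forces it to be a signing of $h|_{V_N\cup\{r\}}$ with an eigenvector vanishing at $r$ --- each of the three possibilities contradicts \S\ref{ass}.

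For part~(3): near the base point $\lambda$ is the unique eigenvalue of $h_\alpha$ close to $\lambda_k$ and is constant (equal to $\lambda$) along $F$, so $\lambda_k|_F\equiv\lambda$, $TF\subseteq\ker\Hess\lambda_k$, and it suffices to evaluate $\Hess\lambda_k$ on a normal complement. Using the second-order perturbation formula $\Hess\lambda_k(X,X)=\langle\psi,\partial_X^2 h_\alpha\,\psi\rangle-2\Re\langle\partial_X h_\alpha\,\psi,\ \mathcal{R}\,\partial_X h_\alpha\,\psi\rangle$, $\mathcal{R}=(h_\alpha-\lambda)^{-1}$ on $\psi^\perp$, and expanding $X=\sum_e c_e\partial_{\alpha_e}$, one checks $\partial_X h_\alpha\,\psi$ vanishes for $E_{ZZ}$-directions and, for an $E_{ZN}$-edge $(rs)$ with $r\in V_{ZN}$, is a multiple of $\mathbf e_r$; hence $\Hess\lambda_k$ depends on $X$ only through $c_N:=c|_{E_{NN}}$ (mod coboundaries) and the complex numbers $\mu_r=i\sum_s c_{rs}(h_\alpha)_{rs}\psi_s$, the first variations of the linkage-closure maps, so it descends to a quadratic form on $H^1(G_N;\RR)\oplus\CC^{V_{ZN}}$, of dimension $\codim F$. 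On the $\CC^{V_{ZN}}$-summand this form equals $-2\big(\langle\Re\mu,\mathcal{R}_S\Re\mu\rangle+\langle\Im\mu,\mathcal{R}_S\Im\mu\rangle\big)$ with $S:=V_{ZN}$; because the eigenvector equation on $V_{ZN}$ reads $B^\dagger\psi_N=0$ ($B$ the $V_N\times V_{ZN}$ coupling block), the Schur complement $\Sigma(\mu)$ of $h_\alpha|_{V\setminus S}-\mu$ in $h_\alpha-\mu$ extends holomorphically to $\mu=\lambda$ with $\mathcal{R}_S=\Sigma(\lambda)^{-1}$, and since $\lambda\notin\spec(h_\alpha|G_{ZZ})$ the value $\lambda$ is a simple eigenvalue of $h_\alpha|_{V\setminus S}=h_N\oplus h_\alpha|G_{ZZ}$; Haynsworth inertia additivity (evaluated just below $\lambda$, where all the relevant matrices are invertible) yields inertia $(|V_{ZN}|-d,\,d,\,0)$ for $\Sigma(\lambda)$, with $d=k-k_N-k_{ZZ}$. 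So this summand is nondegenerate and contributes $2(|V_{ZN}|-d)$ to the Morse index; as it is nonsingular, completing the square on the normal space leaves only the induced form on $H^1(G_N;\RR)$, which a Schur-complement computation identifies with the bare magnetic Hessian of $\lambda_{k_N}$ at the signing $h_N\in\Sig_{h|G_N}$ (the correction from eliminating the $\CC^{V_{ZN}}$ directions exactly compensating the difference between the $V_N$-block of $\mathcal{R}$ and the reduced resolvent of $h_N$), whose index is $\sigma(h_N,k_N)$ and which is nondegenerate by the nodal--magnetic theorem of Berkolaiko \cite{Ber_apde13} and Colin de Verdi\`ere \cite{Col_apde13}. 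Summing the two blocks gives \eqref{eq:MBindex} and normal nondegeneracy.

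The main obstacle is part~(3), and within it two points: (a) legitimizing the regularized Schur complement $\Sigma(\lambda)$ and the use of Haynsworth additivity at $\mu=\lambda$, where $h_\alpha-\lambda$ and $h_\alpha|_{V\setminus S}-\lambda$ are both singular --- this is precisely where $B^\dagger\psi_N=0$ and the hypothesis $\lambda\notin\spec(h_\alpha|G_{ZZ})$ are indispensable; and (b) verifying that, after completing the square, the residual form on $H^1(G_N;\RR)$ is genuinely the magnetic Hessian of $h_N$, so that the nodal--magnetic theorem applies as stated rather than only in spirit. By comparison, part~(1) is a gauge-and-linkage bookkeeping exercise once the polygon picture is in place, and part~(2) reduces to three applications of \S\ref{ass}.
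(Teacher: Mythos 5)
Parts (1) and (2) follow the paper's approach up to cosmetic differences: you work in the gauge-quotient picture rather than fixing the $V_N$-adapted spanning tree, and your three-case analysis of admissibility (no neighbour, one neighbour, two neighbours) is exactly what the paper's appeal to generic planar linkages (Lemma~\ref{lem:generic_linkage} plus Theorem~\ref{thm-linkage}) encapsulates. The $\ge 3$-neighbour requirement and the connectivity of $G_N$ are the only content of admissibility here, so these parts are fine.

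Part (3) takes a genuinely different route. The paper diagonalizes the Hessian from the start: it replaces the naive tree coordinates $\tcA_N$ with the subspace $W_N=\ker(B|\cA_N)$, and shows (Proposition~\ref{prop-QplusQ}) that in the decomposition $\cA(G)=W_N\oplus\tcA_{ZN}\oplus\tcA_{ZZ}\oplus d\RR^V$ the forms $\Qin$ and $\Qout$ are simultaneously block-diagonal, so the index splits as $\ind(\Qin|W_N)+\ind(\Qout|\tcA_{ZN})$ with no cross-terms to dispose of. You instead compute the Hessian on the naive coordinates $H^1(G_N;\RR)\oplus\CC^{V_{ZN}}$, where cross-terms \emph{do} appear (because $B$ is not zero on $\tcA_N$), obtain the $\CC^{V_{ZN}}$-block from the Schur complement $\Sigma(\lambda)$ and Haynsworth inertia additivity, then complete the square and assert that the residual form on $H^1(G_N;\RR)$ is the bare magnetic Hessian of $h_N$, so that the nodal--magnetic theorem can be cited directly. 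Your route buys the ability to delegate the $G_N$-piece to an existing theorem; the paper's choice of $W_N$ is the device that makes that delegation unnecessary because $\Qout|W_N=0$ and the off-diagonal blocks vanish for free.

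The cost is that two nontrivial claims in your sketch are asserted but not proved. The first, that the regularized Schur complement $\Sigma(\mu)$ extends through $\mu=\lambda$ and inverts the compression of $(h_\alpha-\lambda)^+$ to $V_{ZN}$, can be handled much as in the paper's Proposition~\ref{prop-compression} (pseudoinverse plus Haynsworth under $\ker(D-\lambda)\subset\ker B$; this is precisely where $B^\dagger\psi_N=0$ and $\lambda\notin\spec(h_\alpha|G_{ZZ})$ enter). The second is the ``exact compensation'': you must show that the Schur complement in the \emph{pseudoinverse},
\[
G_{NN}-G_{N,ZN}\,G_{ZN,ZN}^{-1}\,G_{ZN,N}\qquad (G=(h_\alpha-\lambda)^+),
\]
coincides with $(h_N-\lambda)^+$. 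This is a two-step Schur-complement identity, complicated by the singularity at $\lambda$ (the kernel of $h_\alpha-\lambda$ and of $h_N-\lambda$ live in different spaces) and by the presence of the $V_{ZZ}$ block; it is not automatic, and your proposal only gestures at it. You flag (b) yourself as the main obstacle, so the gap is acknowledged, but as it stands the argument for the $\sigma(h_N,k_N)$ term in \eqref{eq:MBindex} rests on this unverified reduction. Filling it in is possible, but it is exactly the work the paper avoids by choosing $W_N$.
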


\begin{rem}
  \label{rem:minimum}
  Equation \eqref{eq:MBindex} provides sufficient and necessary
  conditions for $[h_{\alpha}]\in F$ being a local extremum of
  $\lambda_{k}$ (and thus a candidate for the role of a band edge,
  cf.~\S\ref{sec:abelian_cover}):
  \begin{enumerate}
  \item $[h_{\alpha}]$ is a local minimum if and only if both
    $\sigma(h_{N},k_{N})=0$ and $k=|V_{ZN}|+k_{ZZ}+k_N$.
  \item $[h_{\alpha}]$ is a local maximum if and only if both
    $\sigma(h_{N},k_{N})=\beta(G_{N})$ and $k=k_{ZZ}+k_N $.
  \end{enumerate}
  This holds because the terms
  $\sigma(h_N, k_N)$ and $k-k_{ZZ}-k_N$ are non-negative
  and bounded by $\beta(G_{N})$ and $|V_{ZN}|$, respectively.
  Note that the necessary conditions $\sigma(h_{N},k_{N})=0$ and
  $\sigma(h_{N},k_{N})=\beta(G_{N})$ can be verified before computing
  $h_{\alpha}$.
\end{rem}

\begin{rem}
  \label{rem:criticality_condition}
  Recall that $h_{\alpha}$ is a critical point of a simple eigenvalue
  $\lambda_k:\tcM_h \to \RR$ if and only if the following {\em
    criticality condition} holds:
  \begin{equation}
    \label{eqn-criticality}
    (h_{\alpha})_{rs} \bar\psi_r \psi_s \in \RR \ \text{ for all } r
    \sim s,
  \end{equation}
  see \cite[Lem.~A.2]{BanBerWey_jmp15} or \cite[Thm.~3.2(3)]{AloGor_jst23}.
  Therefore, at a critical point $h_\alpha$, the nodal count
  $\nodcnt(h_{\alpha},k)$ of \eqref{eqn-nodal-def} and the nodal
  surplus $\sigma(h_\alpha,k)$ of \eqref{eq:nodal_surplus_def} are
  well-defined.
\end{rem}

\begin{rem}
  \label{rem:signings}
  Since $V_N=V$ is a valid admissible support, in which case $h_N$ is
  a signing of $h$ and $F$ is a single point, our results include the
  original nodal magnetic theorem of Berkolaiko and Colin de
  Verdi\`ere \cite{Ber_apde13,Col_apde13}: any signing $h'$ of $h$ is a
  non-degenerate critical point of $\lambda_k$ on $\tcM_h$ with the
  Morse index equal to the nodal surplus $\sigma(h',k)$.
\end{rem}

\begin{cor}
  \label{cor:disjoint_cycles}
  If the graph $G$ has disjoint cycles (see \cite{AloGor_prep24}), then
  for a generic $h\in\cS(G)$, the only simple critical points are its
  signings $\Sig_h$.
\end{cor}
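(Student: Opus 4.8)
The plan is to apply Theorem \ref{thm:main}\eqref{item:Fcritical}: every simple critical point of $\lambda_k$ lies on some $F(V_N, h_N, \psi_N, \lambda)$, so it suffices to show that the only admissible support $V_N$ producing a nonempty $F$ — after imposing genericity and the disjoint-cycles hypothesis — is $V_N = V$, in which case $F$ is a single point, namely a signing of $h$ (Remark \ref{rem:signings}). First I would recall what "disjoint cycles" means: the cycles of $G$ are pairwise vertex-disjoint, equivalently every edge lies on at most one cycle, equivalently $G$ is obtained from a forest by adding $\beta$ edges whose fundamental cycles share no vertices. The key structural consequence I want to extract is that for \emph{any} induced subgraph $G_J$, the Betti number $\beta(G_J)$ counts exactly those fundamental cycles of $G$ entirely contained in $J$; in particular $G$ has no two distinct cycles sharing a vertex, so every biconnected component of every induced subgraph is either a single edge or a single cycle.

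Next I would analyze the residual and boundary contributions forced to vanish. Suppose $V_N \neq V$, so $V_{ZN} \cup V_{ZZ} \neq \emptyset$. For the residual part: $G_{ZZ}$ is an induced subgraph of $G$ with the cycle property, so $|E_{ZZ}|$ counts edges among residual vertices and $|E(G_{ZZ})| - |V_{ZZ}| = \beta(G_{ZZ}) - (\text{number of components of } G_{ZZ})$, which is $\le \beta(G_{ZZ}) - 1 < 0$ unless $G_{ZZ}$ is empty — wait, more carefully: $|E_{ZZ}|$ in \eqref{eq:EzzVzz} is the set of \emph{all} edges with both endpoints outside $V_N$, which includes the edges of $G_{ZZ}$ but also edges between $V_{ZN}$ and $V_{ZZ}$ and within $V_{ZN}$. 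So the torus factor $\TT^{|E_{ZZ}|-|V_{ZZ}|}$ can be nontrivial even when cycles are disjoint. The real leverage must come from the boundary vertices via the linkage obstruction \eqref{eq:nontriv_F}. Here is the crux: each $r \in V_{ZN}$ has at least three neighbors in $V_N$ (admissibility), and these three (or more) neighbors $s$, together with $r$, create at least $\deg_{V_N}(r) - 1 \ge 2$ independent cycles through $r$ once we add $r$ back — but in a graph with disjoint cycles, no vertex can lie on two distinct cycles. Thus the induced subgraph on $V_N \cup \{r\}$ would have two cycles meeting at $r$, contradicting the disjoint-cycle hypothesis, \emph{provided} $G_N$ is connected (which it is, by admissibility) and $r$ has $\ge 3$ neighbors in $V_N$ with a path between two of them in $G_N$.

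So the argument I would run is: if $V_{ZN} \ne \emptyset$, pick $r \in V_{ZN}$; it has neighbors $s_1, s_2, s_3 \in V_N$; since $G_N$ is connected there are paths in $G_N$ joining $s_1$ to $s_2$ and $s_1$ to $s_3$, and together with the edges $rs_1, rs_2, rs_3$ these yield two cycles through $r$ that are not edge-disjoint-to-each-other-only-at-$r$ but genuinely distinct cycles sharing the vertex $r$ — contradicting disjoint cycles. Hence $V_{ZN} = \emptyset$. But then every vertex not in $V_N$ has \emph{no} neighbor in $V_N$ (definition of $V_{ZN}$ being empty), while $G$ is connected; so either $V_{ZZ} = \emptyset$ too, giving $V_N = V$, or $G$ is disconnected — excluded. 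Therefore $V_N = V$, $h_N \in \Sig_h$, and $F = \{[h_N]\}$ is a single signing; conversely by Remark \ref{rem:signings} every signing is such a critical point. The main obstacle I anticipate is being careful with the exact definitions of the edge sets $E_{ZZ}$ versus $E(G_{ZZ})$ and making the "two distinct cycles through $r$" argument airtight — in particular handling the degenerate case where two of the $s_i$ coincide as we build the cycles, or where the shortest $s_1$–$s_2$ path already passes through $s_3$; one resolves this by choosing the paths greedily and observing that three edges out of $r$ into a connected $G_N$ always close up into a theta-subgraph at $r$, which contains two cycles sharing $r$.
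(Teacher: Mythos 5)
Your proposal is correct and takes essentially the same approach as the paper, whose entire proof is the one-line observation that on a graph with disjoint cycles the only admissible support is $V$. You have supplied the (correct, if slightly hedged) justification for that observation: a boundary vertex $r\in V_{ZN}$ must have $\ge 3$ neighbours in the connected $G_N$, which forces two distinct cycles through $r$ (e.g.\ via the two fundamental cycles of $rs_2$ and $rs_3$ relative to a spanning tree of $G_N\cup\{r\}$ containing only $rs_1$ at $r$), contradicting disjoint cycles; then $V_{ZN}=\emptyset$ together with connectedness of $G$ forces $V_N=V$, so $F$ is the single signing $[h_N]$.
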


\begin{proof}
  The only admissible support on a graph with disjoint cycles is $V$.
\end{proof}

\begin{rem}
  \label{rem:jump_of_index}
  It cannot be guaranteed that the Morse index is constant along the
  manifold $F$.  A change in the index can happen due to failure of
  the conditions in part~\eqref{item:MBindex} of
  Theorem~\ref{thm:main}, namely:
  \begin{itemize}
  \item The index $k$ of the eigenvalue $\lambda$ in the spectrum of
    $h_\alpha$ may change along $F$.  At the change-points, $\lambda$
    becomes a multiple eigenvalue and $\lambda_k$ is in general not
    differentiable.
  \item The value of $k_{ZZ}$ may also change; at the change-points,
    $\lambda$ will be an eigenvalue of $h_\lambda|G_{ZZ}$.
  \end{itemize}
  
  In \Cref{sec:examples} we will demonstrate examples of both occurrences.
  Notably, equation~\eqref{eq:MBindex} shows that the Morse index
  can only change along $F$ by multiples of 2.
\end{rem}

Parts \eqref{item:Ftopology}-\eqref{item:Fcritical} of
\Cref{thm:main} are proven in \S\ref{sec:constructingF}.  The
computation of the index of the Hessian (which is the most substantial
piece of this paper) is done in \S\ref{sec:theHessian}.

In the next theorem we show that for every graph $G$, every admissible
support $V_N$ in $G$, and every $k$, we can construct matrices
$h\in\cS(G)$ for which the $k$-th eigenvalue has a critical point
$h_\alpha$, with eigenvector supported on $V_N$.

\begin{thm}\label{thm-existence}
  Let $G=(V,E)$ be a finite simple connected graph.  Suppose
  $V_N \subset V$ is an admissible support and
  $V = V_N \sqcup V_{ZN} \sqcup V_{ZZ}$ is the decomposition
  introduced in \eqref{eq:V_decomp}.  Let $G_N$, $G_{ZN}$
  and $G_{ZZ}$ be the subgraphs of $G$ induced by $V_N$, $V_{ZN}$ and
  $V_{\ZZ}$ respectively.

  Let $h_N \in \cS(G_N)$ have a simple eigenvalue
  $\lambda = \lambda_{k_N}(h_N)$ with (real) eigenvector $\psi_N$
  which is non-zero on all of $V_N$.  Let $h_{ZN} \in \cS(G_{ZN})$ and
  $h_{ZZ} \in \cS(G_{ZZ})$ be arbitrary matrices such that $\lambda$
  is \emph{not an eigenvalue} of $h_{ZN}$ and $h_{ZZ}$; denote
  \begin{equation}
    \label{eq:k_{ZZ}_def}
    k_{ZN} := \left| \left\{\widetilde\lambda \in
        \spec(h_{ZN}) \colon \widetilde\lambda < \lambda \right\}
    \right|,
    \qquad
    k_{ZZ} := \left| \left\{\widetilde\lambda \in
        \spec(h_{ZZ}) \colon \widetilde\lambda < \lambda \right\}
    \right|.
  \end{equation}

  Then, for any $\epsilon>0$, there exists a generic $h \in \cS(G)$,
  cf.\ \S\ref{ass}), and $h_\alpha \in \tcM_h$ such that
  \begin{enumerate}
  \item $h$ has (approximately) prescribed blocks
    \begin{equation}
      \label{eq:h_blocks_approx}
      \| h|G_N - h_N \| <\epsilon, \qquad
      \|h|G_{ZN} - h_{ZN} \| < \epsilon, \qquad
      \|h|G_{ZZ} - h_{ZZ}\|<\epsilon.   
    \end{equation}
  \item $h_{\alpha}$ is a critical point of
    $\lambda_{k_N+k_{ZN}+k_{ZZ}}$, which is a simple eigenvalue with
    the corresponding eigenvector supported on $V_N$.
  \end{enumerate}
\end{thm}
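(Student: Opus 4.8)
The plan is to prescribe $h$ blockwise (up to a small generic perturbation) and then exhibit the critical point explicitly, placing all the magnetic flux on the edge set $E_{ZN}$. In the decomposition $V=V_N\sqcup V_{ZN}\sqcup V_{ZZ}$ write any $h\in\cS(G)$ as $h=D_h+R_h$, where $D_h=\diag(h|G_N,h|G_{ZN},h|G_{ZZ})$ and $R_h$ collects the entries on $E_{ZN}$ and on the edges between $V_{ZN}$ and $V_{ZZ}$ (recall there are no edges between $V_N$ and $V_{ZZ}$). Crucially, $\|R_{h_\alpha}\|\le\big(\sum_{(rs)\in\spt R_h}|h_{rs}|^2\big)^{1/2}$ for \emph{every} potential $\alpha\in\cA(G)$. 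Let $U\subset\cS(G)$ be the set of $h$ for which: (a) the three diagonal blocks are within $\epsilon$ of $h_N,h_{ZN},h_{ZZ}$; (b) $\mu(h):=\lambda_{k_N}(h|G_N)$ is a simple eigenvalue of $h|G_N$, is not in $\spec(h|G_{ZN})\cup\spec(h|G_{ZZ})$, and the numbers of eigenvalues of $h|G_{ZN}$ and $h|G_{ZZ}$ below $\mu(h)$ are $k_{ZN}$ and $k_{ZZ}$; (c) $2\max_{s\in V_N}|h_{rs}\psi_s|\le\sum_{s\in V_N}|h_{rs}\psi_s|$ for each $r\in V_{ZN}$, where $\psi$ is the unit $\mu(h)$-eigenvector of $h|G_N$; and (d) $\big(\sum_{(rs)\in\spt R_h}|h_{rs}|^2\big)^{1/2}$ is less than half the distance from $\mu(h)$ to the rest of $\spec D_h=\spec(h|G_N)\sqcup\spec(h|G_{ZN})\sqcup\spec(h|G_{ZZ})$. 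Each of (a)--(d) is an open condition, so $U$ is open.

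\textbf{$U$ is nonempty.} Choose a strictly supported $\widetilde h_N\in\cS(G_N)$ with $\|\widetilde h_N-h_N\|<\epsilon/2$ still having a simple eigenvalue close to $\lambda$ with nowhere-vanishing eigenvector $\widetilde\psi_N$ (possible since ``simple eigenvalue with nonvanishing eigenvector'' is open and strict support is dense), and strictly supported $\widetilde h_{ZN},\widetilde h_{ZZ}$ within $\epsilon/2$ of $h_{ZN},h_{ZZ}$ keeping $\lambda$ outside their spectra with the prescribed counts (possible since $\lambda\notin\spec(h_{ZN})\cup\spec(h_{ZZ})$). Let $h^{(0)}\in\cS(G)$ have these three diagonal blocks, entries $\delta/|\widetilde\psi_s|$ on each $(rs)\in E_{ZN}$ with $s\in V_N$, and entries $\delta$ on each edge between $V_{ZN}$ and $V_{ZZ}$. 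Then $h^{(0)}$ is strictly supported on $G$; for $\delta$ sufficiently small (the gap in (d) for $h^{(0)}$ is a fixed positive number, independent of $\delta$) conditions (a), (b), (d) hold; and (c) holds because $V_N$ is admissible, so each $r\in V_{ZN}$ has at least three neighbors in $V_N$ and the relevant magnitudes $|h^{(0)}_{rs}\widetilde\psi_s|$ all equal $\delta$, giving $2\delta\le3\delta$. Thus $h^{(0)}\in U$, and by Proposition~\ref{prop-genericity} (generic matrices are dense) we may choose a \emph{generic} $h\in U$.

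\textbf{The critical point for a generic $h\in U$.} Let $\psi$ be the unit $\mu(h)$-eigenvector of $h|G_N$; by the genericity hypothesis~\S\ref{ass} applied to the induced subgraph $G_N$ (which is connected, being an admissible support), $\mu(h)$ is simple and $\psi$ is nowhere-vanishing on $V_N$. By (c) and strict support, for each $r\in V_{ZN}$ the at least three positive numbers $|h_{rs}\psi_s|$ (over $s\in V_N$ with $(rs)\in E$) are the side lengths of a closed planar polygon, so there are phases $\alpha_{rs}$ with $\sum_{s\in V_N}e^{i\alpha_{rs}}h_{rs}\psi_s=0$; take $\alpha\in\cA(G)$ equal to these phases on $E_{ZN}$ and $0$ on every other edge. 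Extending $\psi$ by $0$ to a vector on $V$, a row-by-row check gives $h_\alpha\psi=\mu(h)\psi$: the rows over $V_N$ reduce to $h|G_N\,\psi=\mu(h)\psi$, the rows over $V_{ZN}$ are precisely the vanishing polygon sums, and the rows over $V_{ZZ}$ involve only zero entries of $\psi$. Since $h_\alpha=D_h+R_{h_\alpha}$ with $\|R_{h_\alpha}\|$ less than half the gap of $h$ by (d), and $\mu(h)$ is simple in $\spec D_h$ and isolated there by that gap, Weyl's inequalities force $\mu(h)$ to be a \emph{simple} eigenvalue of $h_\alpha$ occupying the index $k:=(k_N-1)+k_{ZN}+k_{ZZ}+1=k_N+k_{ZN}+k_{ZZ}$ (the $k-1$ eigenvalues of $D_h$ below $\mu(h)$ cannot cross $\mu(h)$, and neither can the ones above). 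Finally $(h_\alpha)_{rs}\bar\psi_r\psi_s$ is real on $E_{NN}$ ($h_\alpha$ is real there and $\psi$ is real) and vanishes on every other edge (one of whose endpoints carries $\psi=0$), so the criticality condition~\eqref{eqn-criticality} holds and, by Remark~\ref{rem:criticality_condition}, $h_\alpha$ is a critical point of $\lambda_k$, which is smooth near $h_\alpha$ because $\mu(h)=\lambda_k(h_\alpha)$ is simple; its eigenvector $\psi$ is supported on $V_N$. Condition (a) is exactly the block bound~\eqref{eq:h_blocks_approx}, completing the proof. (One may alternatively invoke Theorem~\ref{thm:main}\eqref{item:Ftopology},\eqref{item:Fcritical} with critical data $(V_N,h|G_N,\psi,\mu(h))$: (c) makes the associated $F$ nonempty, the construction above names a point of it, and the Weyl estimate fixes its index.)

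\textbf{Main obstacle.} The genuinely delicate part is coordinating the three small scales --- the block perturbation $\epsilon$, the coupling strength $\delta$ controlling $\|R_h\|$, and the distance from $h^{(0)}$ to a generic matrix --- so that $\mu(h)$ remains simple and keeps the index $k_N+k_{ZN}+k_{ZZ}$ throughout; the uniformity of the bound $\|R_{h_\alpha}\|\le(\sum|h_{rs}|^2)^{1/2}$ over all $\alpha$ is exactly what makes this estimate robust. The one structural ingredient, needed to solve the polygon-closure equations that express the eigenvector condition at the vertices of $V_{ZN}$, is admissibility of $V_N$: the hypothesis $\deg_{V_N}(r)\ge3$ for $r\in V_{ZN}$ is precisely what lets three (or more) nearly equal coupling vectors close up into a loop.
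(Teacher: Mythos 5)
Your proof is correct and follows essentially the same route as the paper: prescribe the three diagonal blocks, couple $V_{ZN}$ to $V_N$ and $V_{ZZ}$ by small off-diagonal entries, solve the linkage (polygon-closure) equations at each $r\in V_{ZN}$ so that the zero-extended eigenvector of the $V_N$-block is exactly an eigenvector of $h_\alpha$, and preserve the label $k_N+k_{ZN}+k_{ZZ}$ by a gap argument, finishing with a density appeal to restore genericity. The main organizational difference is that you define the open set $U$ up front and pick a generic $h\in U$, then build $\alpha$ from the eigenvector of the actual block $h|G_N$, whereas the paper first constructs an explicit (possibly non-generic) $h_{\epsilon,\alpha}$ in which $(\psi_N,0,0)^t$ is an \emph{exact} eigenvector for all $\epsilon$ and only perturbs to genericity at the very end; your route has the advantage of making the stability of the label under both the magnetic parameter $\alpha$ and the genericity perturbation explicit (via the $\alpha$-uniform bound on $\|R_{h_\alpha}\|$ and Weyl's inequality), at the cost of a longer bookkeeping step. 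Two small imprecisions worth fixing: condition (c) must be read with strict inequality for openness of $U$ (which you do implicitly verify at $h^{(0)}$ via $2\delta<3\delta$), and the claimed Frobenius bound $\|R_{h_\alpha}\|\le\bigl(\sum_{(rs)\in\spt R_h}|h_{rs}|^2\bigr)^{1/2}$ is off by a factor of $\sqrt2$ if the sum is over unordered edges, though this does not affect the argument.
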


We remark that by choosing $h_N$, $h_{ZN}$ and $h_{ZZ}$ appropriately
we can get a critical point for any eigenvalue $\lambda_{k'}$ of $h_\alpha$.
The proof of Theorem~\ref{thm-existence} appears in \S \ref{proof-existence}.

The final main result shows that critical manifolds $F$ are stable under
perturbations of $h$.
\begin{thm}
  \label{thm-stability}
  Fix $h \in \cS(G)$, generic in the sense of \S\ref{ass}, and let
  $F=F(V_N, h_N, \psi_N, \lambda)$ be a non-empty critical submanifold
  constructed in accordance with Theorem~\ref{thm:main}.

  Then for any $\epsilon>0$, there is a neighborhood
  $U_h \subset \cS(G)$ such that every $\tcM_{h'}$, $h' \in U_h$, has
  a critical manifold $F' = F'(V_N, h_N', \psi_N', \lambda')$
  diffeomorphic to $F$, based on the same admissible set $V_N$, and
  constructed in accordance with Theorem~\ref{thm:main} from a
  critical data $(V_N, h_N', \psi_N', \lambda')$ close to
  $(V_N, h_N, \psi_N, \lambda)$, namely
  \begin{equation}
    \label{eq:eig_stability}
    \|h_N' - h_N\| < \epsilon,
    \qquad
    \left\| \psi_N' - \psi_N \right\| < \epsilon,
    \qquad
    |\lambda'-\lambda| < \epsilon.
  \end{equation}
\end{thm}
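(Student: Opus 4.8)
The plan is to show that the critical data $(V_N, h_N, \psi_N, \lambda)$ deforms smoothly with $h$, and then to invoke Theorem~\ref{thm:main} to conclude that the associated manifold $F'$ is diffeomorphic to $F$. First I would observe that the combinatorial ingredient of the critical data — the admissible support $V_N$ — is discrete and hence stays fixed for $h'$ in a small enough neighborhood $U_h$ of $h$. The analytic ingredients $(h_N, \psi_N, \lambda)$ must then be produced: given $h'$, the restriction $h'|G_N$ is close to $h|G_N$, and since $h_N \in \Sig_{h|G_N}$ is a signing, the natural candidate for $h_N'$ is the corresponding signing of $h'|G_N$ (same sign pattern on $E_{NN}$). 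This depends continuously — indeed real-analytically — on $h'$, giving $\|h_N' - h_N\| < \epsilon$ automatically once $U_h$ is small. The genericity hypothesis \S\ref{ass}\ref{item-subset} guarantees $\lambda = \lambda_{k_N}(h_N)$ is a simple eigenvalue of $h_N$ with a nowhere-vanishing eigenvector on $G_N$; simplicity is an open condition, so $\lambda' := \lambda_{k_N}(h_N')$ is simple and depends smoothly on $h'$, and the (normalized, real) eigenvector $\psi_N'$ likewise depends smoothly on $h'$, with $\psi_N'$ still nowhere-vanishing on $V_N$ for $U_h$ small. This establishes \eqref{eq:eig_stability} and shows $(V_N, h_N', \psi_N', \lambda')$ is a bona fide critical data for $h'$.

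Next I would check that $F' = F'(V_N, h_N', \psi_N', \lambda')$ is non-empty and diffeomorphic to $F$. By Theorem~\ref{thm:main}\eqref{item:Ftopology}, non-emptiness of $F$ is equivalent to the triangle-type inequalities \eqref{eq:nontriv_F}, one for each $r \in V_{ZN}$, among the numbers $|h_{rs}\psi_s|$, $s \in V_N$. Since $F$ is assumed non-empty, we in fact need the inequalities to be robust; here I would split into cases. If the inequality at $r$ is strict, it remains strict for $h'$ close to $h$ by continuity (the entries $h'_{rs}$ and the components $\psi'_s = (\psi_N')_s$ are close to their unperturbed values). If some inequality is an equality — corresponding to a linkage manifold $M(b_r)$ that is a single point or otherwise a degenerate configuration — the genericity of $h$ via Proposition~\ref{prop-genericity} should rule this out, or at worst we appeal to the fact that the linkage manifold $M(b_r)$ is stable under small perturbations of the edge-length vector $b_r$ as long as no new triangle equality is created or destroyed; the diffeomorphism type of $M(b_r)$ is locally constant in $b_r$ away from the walls where a partial sum of the $b_r$-entries equals the complementary sum (see Appendix~\ref{sec:linkage}, \Cref{thm-linkage}). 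The torus factor $\TT^{|E_{ZZ}|-|V_{ZZ}|}$ is purely combinatorial and unchanged. Combining, $F' \cong \TT^{|E_{ZZ}|-|V_{ZZ}|} \times \prod_{r\in V_{ZN}} M(b_r') \cong F$ for $U_h$ small.

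The main obstacle I anticipate is precisely the behavior at the walls of the linkage manifolds: if the unperturbed $h$ sits on such a wall, the diffeomorphism type of $M(b_r)$ can jump, and the theorem's conclusion would fail. The cleanest resolution is to argue that genericity (\S\ref{ass}, via Proposition~\ref{prop-genericity}) forces $h$ off all such walls — i.e., that for generic $h$ all the inequalities \eqref{eq:nontriv_F} are either strict or strictly violated, and that all partial-sum equalities $\sum_{s \in S}|h_{rs}\psi_s| = \sum_{s \notin S}|h_{rs}\psi_s|$ fail — since each such equality is a nontrivial algebraic condition on $h$ (the $\psi_s$ being algebraic functions of $h_N$, hence of $h$) and thus confined to a codimension-$\ge 1$ set. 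Once this is in place, the remaining steps are the routine continuity and implicit-function-theorem arguments sketched above, together with a direct appeal to Theorem~\ref{thm:main}\eqref{item:Ftopology} applied to $h'$ to identify $F'$ as a smooth critical submanifold of the correct diffeomorphism type. If one wants the diffeomorphism $F \cong F'$ to be explicit rather than abstract, one can track the deformation through a smooth path $h_t$ from $h$ to $h'$ and integrate the resulting vector field on the total space of the family $\{F_t\}$, which is a smooth fiber bundle over $[0,1]$ by the preceding analysis, hence trivial by Ehresmann's theorem.
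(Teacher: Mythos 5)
Your proposal is correct and takes essentially the same route as the paper's proof, which compresses your reasoning into three steps: continuity of the simple eigenvalue and its eigenvector under perturbation of $h_N'=h'|G_N$ gives \eqref{eq:eig_stability}, the vectors $b_r$ therefore depend continuously on $h'$, and Theorem~\ref{thm-linkage}\eqref{item:continuous} supplies the diffeomorphism of the linkage factors (the torus factor being combinatorially fixed). The ``walls'' concern you correctly flag is already handled by Lemma~\ref{lem:generic_linkage}, which derives the generic linkage condition \eqref{eq:evec_generic_linkage} from \S\ref{ass}\ref{item-subset}, so the implicit-function-theorem argument behind Theorem~\ref{thm-linkage}\eqref{item:continuous} applies without any additional genericity work.
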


\begin{proof}
  Assume, without loss of generality, that $h_N = h |G_N$.  Viewing
  the matrix $h_N' = h'|G_N$ as a perturbation of $h|G_N$, the
  simple\footnote{By the genericity assumption.}  eigenvalue $\lambda$
  has a continuation with a continuous eigenvector, yielding
  inequalities \eqref{eq:eig_stability} in a small enough neighborhood
  $U_h$ of $h$.  The vectors $b_r$, $r\in V_{ZN}$, are also
  continuous, which allows us to cite
  \Cref{thm-linkage}\eqref{item:continuous} to establish that $F'$ is
  diffeomorphic to $F$.
\end{proof}

\section{Genericity conditions and their consequences}
\label{sec:generic_consequences}

To prove \Cref{prop-genericity}, we first formalize the notion of
failing Assumption~\ref{ass}.
\begin{defn}[$\bad(G)$]
  Given a graph $G$ of $n>1$ vertices, not necessarily connected,
  denote by $\bad(G)\subset\cS(G)$ the set of all matrices
  $h\in\cS(G)$ that have a signing $h'$ satisfying one of the
  following conditions:
  \begin{enumerate}
  \item $(h')_{ij}=0$ for some edge $(ij)\in E(G)$, i.e., $h'$ is not
    \emph{strictly} supported on $G$, or
  \item $h'$ has a multiple eigenvalue, or
  \item $h'$ has an eigenvector that has a vanishing entry on every
    connected component of $G$. 
  \end{enumerate}
  For convenience, set $\bad(G)=\emptyset$ when $n=1$.
\end{defn}

We can now reformulate \Cref{prop-genericity} as follows.
\begin{prop}
  \label{prop:genericity_again}
  The subset of matrices $h\in\cS(G)$ such that $h|G' \in \bad(G')$ for
  some induced subgraph $G' \subset G$ is semi-algebraic of positive
  codimension.
\end{prop}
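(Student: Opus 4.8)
The plan is to exhibit $\bigcup_{G'\subseteq G}\{h\in\cS(G):h|G'\in\bad(G')\}$ as a finite union of proper algebraic subvarieties of the affine space $\cS(G)$, and then to argue that each piece is indeed proper, i.e., that it is not all of $\cS(G)$. Since a finite union of proper algebraic (or semi-algebraic) subsets of an irreducible affine space has positive codimension, this suffices. The induced subgraphs $G'$ of $G$ form a finite set, and for each $G'$ the restriction map $h\mapsto h|G'$ is linear, so it is enough to treat a single induced subgraph $G'$ (including $G'=G$) and show that $\{h:h|G'\in\bad(G')\}$ is semi-algebraic of positive codimension. Because $h\mapsto h|G'$ is a surjective linear map, it even reduces to showing that $\bad(G')\subseteq\cS(G')$ is semi-algebraic of positive codimension in $\cS(G')$; I will henceforth write $G$ for $G'$.

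Next I would further decompose. There are $2^{|E(G)|}$ signings, and they are related to one another by the finitely many linear involutions $h\mapsto D h D$ with $D$ a $\pm1$ diagonal matrix; conjugating by such a $D$ preserves codimension and the defining conditions. So it is enough to prove that the set of $h\in\cS(G)$ \emph{itself} satisfying one of conditions (1)--(3) is semi-algebraic of positive codimension, and then take the union over the finitely many $D$'s. Condition (1), $h_{ij}=0$ for some edge $(ij)$, obviously cuts out a finite union of coordinate hyperplanes, which has codimension $1$. For conditions (2) and (3) I would work one connected component of $G$ at a time (a matrix in $\cS(G)$ is block diagonal over components, eigenvalues and eigenvectors split accordingly, and the condition ``multiple eigenvalue'' or ``eigenvector vanishing on every component'' is a finite union/intersection over components), so assume $G$ connected with $n\ge2$ vertices.

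For condition (2): the locus of real symmetric $n\times n$ matrices with a repeated eigenvalue is a classical algebraic set of codimension $2$ in the space of all symmetric matrices (it is the zero set of the discriminant of the characteristic polynomial, which is a sum of squares and hence vanishes exactly on a codimension-$2$ set); intersecting with the linear subspace $\cS(G)$ keeps it algebraic, and I must check it does not swallow all of $\cS(G)$ — but it does not, since for a connected graph one can produce (e.g.\ by a generic diagonal perturbation of a fixed strictly supported matrix, or by the standard fact that generic Jacobi-type matrices on a connected graph have simple spectrum) a single $h\in\cS(G)$ with simple spectrum, so this locus is a proper algebraic subset, hence of positive codimension. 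For condition (3), on a connected $G$ it asks that $h$ have an eigenvector with some vanishing entry, i.e.\ that for some $v\in V$ and some eigenvalue $\mu$, $\ker(h-\mu I)$ meets the coordinate hyperplane $\{x_v=0\}$; equivalently $\det\big((h-\mu I)\text{ with row/column }v\text{ deleted}\big)=0$ for a common root $\mu$ of this minor and of $\det(h-\mu I)$. Eliminating $\mu$ via a resultant gives a single polynomial in the entries of $h$ whose vanishing is necessary (this is the "$h$ and its $v$-th principal minor share an eigenvalue" condition), so condition (3) lies in an algebraic set; again I must show it is proper, and again a generic $h\in\cS(G)$ on a connected graph has all eigenvectors nowhere vanishing — this is exactly the well-known genericity input (cf.\ the discussion around Assumption~\ref{ass}), and can be seen by exhibiting one such $h$ or by a dimension count on the incidence variety $\{(h,v,x):hx=\mu x,\ x_v=0\}$, whose projection to $\cS(G)$ has image of positive codimension because the fiber direction ``$x_v=0$'' is one real condition on the eigenvector.

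The only real subtlety — and the step I expect to be the main obstacle — is verifying \emph{properness} of the conditions (2) and (3) loci, i.e.\ ruling out the degenerate possibility that \emph{every} $h\in\cS(G)$ has a signing with a repeated eigenvalue or a vanishing eigenvector entry. This cannot be done by a purely formal dimension count on the ambient symmetric-matrix space, because the linear slice $\cS(G)$ could in principle be entirely contained in the bad locus; instead one must genuinely produce, for each relevant induced subgraph $G'$ and each of its connected components $C$, a witness matrix on $C$ that is strictly supported, has simple spectrum, and has all eigenvectors nowhere-vanishing — and moreover whose signings all share these properties, or equivalently produce a witness outside the union over $D$ of the $D$-conjugates of the three bad loci. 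The cleanest way is to invoke the standard perturbation/genericity lemmas for Jacobi-type operators on connected graphs (simple spectrum and nowhere-vanishing eigenvectors hold on a dense open set, and ``strictly supported'' is open and dense), apply them to each of the finitely many graphs $DG'D=G'$ arising from signings of each induced subgraph, and take a finite intersection of the resulting dense open sets; that intersection is nonempty, providing the witness and completing the argument that each bad piece is a proper semi-algebraic subset and hence of positive codimension.
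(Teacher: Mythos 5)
Your overall structure matches the paper's: reduce to induced subgraphs via the linearity of $h\mapsto h|G'$ (the paper's Lemma~\ref{lem:subgraph_up}), reduce to one signing by observing that the signing operations are linear isomorphisms of $\cS(G)$, and handle the ``bad'' conditions one at a time. The paper simply cites \cite[Prop.~6.3]{AloGor_prep24} for the connected case (and then extends to disconnected $G$ via the discriminant $Z(G)$), whereas you try to rebuild that cited result from scratch. The issue is that, having sketched how each condition cuts out an algebraic set, you land at the decisive properness step and conclude by ``invok[ing] the standard perturbation/genericity lemmas for Jacobi-type operators on connected graphs'' — which is precisely the content of the result the paper cites. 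So your argument does not supply a genuinely independent route; it expands the internal bookkeeping of the cited proposition while leaving the actual nontrivial input as a black box.

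Beyond that, there are two concrete errors in the reductions. First, the claim that the $2^{|E(G)|}$ signings are ``related to one another by the linear involutions $h\mapsto DhD$ with $D$ a $\pm1$ diagonal matrix'' is false: conjugations $h\mapsto DhD$ only produce the $2^{\beta}$ gauge classes of signings (sign patterns of the form $\epsilon_{rs}=d_rd_s$), not all $2^{|E|}$ signings once $\beta(G)\ge1$. The reduction you want is sound but should be phrased via the linear isomorphisms $h\mapsto h^\epsilon$ for each $\epsilon\in\{\pm1\}^{E}$; as written, taking ``the union over the finitely many $D$'s'' does not recover $\bad(G)$. Second, for disconnected $G'$ the ``multiple eigenvalue'' condition is \emph{not} a union/intersection of the corresponding conditions over connected components: a block-diagonal signing can have each block with simple spectrum while two distinct blocks share an eigenvalue. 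This is exactly the case the paper handles by separately adjoining the zero set $Z(G)$ of the global discriminant to its decomposition \eqref{eq:big_bad}; your sketch omits this cross-component contribution and so would leave a gap in the containment argument.
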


\begin{lem}
  \label{lem:subgraph_up}
  If $G' \subset G$ is a subgraph and $X \subset \cS(G')$ is
  semi-algebraic of positive codimension, then
  \begin{equation}
    \label{eq:subgraph_up}
    \{ h \in \cS(G) \colon h | G' \in X\}
  \end{equation}
  is a semi-algebraic subset of $\cS(G)$ of positive codimension.
\end{lem}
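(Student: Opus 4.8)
The plan is to identify $\cS(G')$ with a coordinate subspace of $\cS(G)$ and the restriction map $h \mapsto h|G'$ with a linear projection, then invoke the fact that preimages of semi-algebraic sets of positive codimension under linear surjections retain positive codimension.

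First I would make the identification precise. Writing $G = (V, E)$ and $G' = (V', E')$ with $V' \subseteq V$ and $E' \subseteq E$ the induced edge set, an element of $\cS(G)$ is a real symmetric matrix whose free parameters are the diagonal entries $h_{rr}$, $r \in V$, and the off-diagonal entries $h_{rs}$ for $(rs) \in E$; likewise $\cS(G')$ has free parameters indexed by $V'$ and $E'$. Since $G'$ is the induced subgraph, the parameter set of $\cS(G')$ is literally a subset of the parameter set of $\cS(G)$, so the restriction map $\rho \colon \cS(G) \to \cS(G')$, $h \mapsto h|G'$, is the coordinate projection onto those parameters. In particular $\rho$ is linear and surjective, and the set in \eqref{eq:subgraph_up} is exactly $\rho^{-1}(X)$.

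Next I would record the general fact: if $\rho \colon \RR^N \to \RR^M$ is a surjective linear map and $X \subseteq \RR^M$ is semi-algebraic, then $\rho^{-1}(X)$ is semi-algebraic (semi-algebraic sets are closed under preimages by polynomial maps, Tarski--Seidenberg) and $\codim_{\RR^N} \rho^{-1}(X) = \codim_{\RR^M} X$. The codimension claim is elementary: choosing a linear splitting $\RR^N \cong \ker\rho \oplus \RR^M$, one has $\rho^{-1}(X) = \ker\rho \times X$ up to this isomorphism, so $\dim \rho^{-1}(X) = \dim\ker\rho + \dim X = (N - M) + \dim X$, whence $\codim \rho^{-1}(X) = M - \dim X = \codim X > 0$. (If one prefers to work with the Zariski closure or with the definition of dimension of a semi-algebraic set via the largest dimension of a submanifold it contains, the product decomposition still gives the dimension additivity directly.) Applying this with $\rho$ the restriction map and $X$ the given subset of $\cS(G')$ finishes the proof.

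The only mild subtlety — and the single point that genuinely uses $G'$ being an \emph{induced} subgraph rather than an arbitrary one — is the surjectivity of $\rho$: an arbitrary matrix in $\cS(G')$ must extend to a matrix in $\cS(G)$, which requires every edge of $G'$ to be an edge of $G$, and this is automatic for induced subgraphs since $E' = \{(rs) \in E : r, s \in V'\} \subseteq E$. With surjectivity in hand the rest is the routine linear-algebra/Tarski--Seidenberg bookkeeping above, so I do not expect any real obstacle here; this lemma is a packaging step whose purpose is to let \Cref{prop:genericity_again} be deduced from the positive-codimension statements about $\bad(G')$ for the finitely many induced subgraphs $G' \subseteq G$.
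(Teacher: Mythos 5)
Your proof is correct and takes essentially the same route as the paper's: the paper's one-line argument is precisely that the set in question is the preimage of the coordinate projection $h \mapsto h|G'$ and therefore has the structure of a direct product of $X$ with a Euclidean factor, which is the content of your splitting $\rho^{-1}(X) \cong \ker\rho \times X$. You have simply filled in the bookkeeping (surjectivity, Tarski--Seidenberg closure, dimension additivity) that the paper leaves implicit.
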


\begin{proof}
  The set in \eqref{eq:subgraph_up} is the preimage of the projection
  $h \mapsto h|G'$ and has the structure of a direct product of $X$
  with $\RR^m$ for some $m$.
\end{proof}

\begin{prop}[cf.\ {\cite[Prop.~6.3]{AloGor_prep24}}]
  For any graph $G$ the set $\bad(G)$ is a semi-algebraic set of positive codimension. 
\end{prop}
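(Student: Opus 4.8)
The plan is to show that $\bad(G)$, being a finite union of sets each defined by a semi-algebraic condition on a single signing $h'$ of $h$, reduces to showing that each such set has positive codimension; since a finite union of positive-codimension semi-algebraic sets is again semi-algebraic of positive codimension, and since the map $h \mapsto h'$ is (for a fixed sign pattern) a linear isomorphism of $\cS(G)$ onto itself, it suffices to analyze each of the three defining conditions for a single real symmetric matrix $g \in \cS(G)$ strictly supported on $G$. (Strictness itself is the first condition, so we may even treat the three conditions uniformly as conditions on $g$ ranging over all of $\cS(G)$.) Each condition is visibly semi-algebraic: condition (1) is the vanishing of a coordinate; conditions (2) and (3) are, after clearing denominators, polynomial conditions on the entries of $g$, as I recall below. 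So the content is entirely in the codimension count.

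First I would dispose of condition (1): $\{g : g_{ij} = 0\}$ for a fixed edge $(ij)$ is a coordinate hyperplane, hence of codimension $1$. Next, condition (2), the set of $g \in \cS(G)$ with a repeated eigenvalue: this is the classical discriminant locus. The discriminant of the characteristic polynomial $\chi_g(t) = \det(tI - g)$ is a polynomial in the entries of $g$ that vanishes exactly on the set of matrices with a multiple eigenvalue, so the set is semi-algebraic; to see it has positive codimension it is enough to exhibit a single matrix in $\cS(G)$ with all eigenvalues simple (then the discriminant is not identically zero, so its zero set is a proper subvariety). Since $G$ is connected, one can take $g$ to be, e.g., a weighted adjacency/Laplacian-type matrix with generic weights along a spanning tree and zero on the remaining edges and a generic diagonal; a path-like argument or a direct perturbation argument gives simple spectrum. (One should do this on each connected component separately when $G$ is disconnected, handling each component independently.)

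The remaining and most delicate piece is condition (3): the set of $g$ possessing an eigenvector that vanishes somewhere on every connected component. Work one connected component at a time, so assume $G$ connected; then the condition is that some eigenvector of $g$ has a zero entry. I would argue this is semi-algebraic and of positive codimension as follows. For a fixed vertex $v$ and a fixed eigenvalue-label, having an eigenvector vanishing at $v$ is equivalent to $\det(g - \lambda I)$ and the $(v,v)$ cofactor sharing the relevant eigenvalue, or more robustly: $g$ has an eigenvector vanishing at $v$ iff $\operatorname{rank}$ of a certain matrix built from $g$ drops, a determinantal (hence semi-algebraic) condition. The cleanest route is: $g$ has an eigenvector $\psi$ with $\psi_v = 0$ iff some eigenvalue of $g$ is also an eigenvalue of the principal submatrix $g$ with row and column $v$ deleted (using strict support and connectivity to rule out degenerate situations); this is the condition $\operatorname{Res}_t\big(\chi_g(t), \chi_{g\setminus v}(t)\big) = 0$, manifestly polynomial in the entries of $g$. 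Taking the union over the finitely many vertices $v$ (and, for disconnected $G$, requiring a simultaneous such vanishing at one vertex in each component — an intersection, which only helps) keeps the set semi-algebraic. For positive codimension I would again exhibit one matrix avoiding it: pick $g$ with simple spectrum such that the Cauchy interlacing inequalities for $g \setminus v$ are all strict for every $v$ — generic weights achieve this — so that no eigenvalue of $g$ is shared with any $\chi_{g \setminus v}$, hence no eigenvector vanishes at any vertex. (Equivalently one can invoke Assumption~\ref{ass}\ref{item-subset} applied to $G$ itself and its vertex-deleted subgraphs, which is exactly the statement that the generic $g$ avoids this locus; but to prove genericity one does need an explicit witness, obtained by the perturbation argument above.)

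The main obstacle I anticipate is the bookkeeping in condition (3): making sure the "eigenvector vanishes at $v$" condition is correctly encoded as a polynomial condition in all cases (including when $g$ itself has multiple eigenvalues, although on the complement of the condition-(2) locus this is not an issue), and, when $G$ is disconnected, correctly taking the intersection over components rather than the union over vertices of the whole graph. Once the semi-algebraicity is pinned down, positive codimension follows in each case from producing a single explicit matrix outside the locus, which is routine given connectivity via a spanning-tree-plus-generic-diagonal construction together with Cauchy interlacing. Finally, reassembling: $\bad(G)$ is the union over all sign patterns (finitely many) of the preimages of these three loci under the corresponding linear automorphisms of $\cS(G)$, hence semi-algebraic of positive codimension; combining this with Lemma~\ref{lem:subgraph_up} and a union over the finitely many induced subgraphs $G' \subset G$ yields Proposition~\ref{prop:genericity_again}, and thus Proposition~\ref{prop-genericity}.
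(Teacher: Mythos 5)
Your proof takes a genuinely different route from the paper's. The paper does not reprove the connected case at all: it cites \cite[Prop.~6.3]{AloGor_prep24} as a black box, and its contribution here is only the reduction from disconnected to connected graphs via the clean decomposition
\[
\bad(G) = Z(G) \cup \bigcup_{j=1}^{m}\{h\in\cS(G)\colon h|G_{j}\in\bad(G_{j})\},
\]
where $Z(G)$ is the zero set of the discriminant. The key observation behind this decomposition is that once $h\notin Z(G)$, every eigenvalue of $h$ is simple, so any condition-(3) eigenvector is supported on a single component $G_j$ and thereby witnesses $h|G_j\in\bad(G_j)$. This is tidier than your handling of condition~(3) for disconnected $G$, where you speak of ``requiring a simultaneous vanishing at one vertex in each component --- an intersection'': that is not a correct characterization of the locus (the intersection you describe allows different eigenvectors at different components, and the support of a single eigenvector across components is the real subtlety), and while it does not wreck the codimension bound, it is the wrong bookkeeping. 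The paper's use of $Z(G)$ to absorb all such ambiguity is the cleaner move, and you would do well to adopt it.

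Your from-scratch treatment of the connected case is essentially an outline of what the cited \cite[Prop.~6.3]{AloGor_prep24} must prove, and it is broadly sound, with a couple of caveats. The resultant criterion $\operatorname{Res}_t(\chi_g,\chi_{g\setminus v})=0$ is only a \emph{necessary} condition for an eigenvector to vanish at $v$ in general; the equivalence requires $\lambda$ to be simple in $\spec(g)$ (it follows from the identity $\psi_v^2/\|\psi\|^2 = \chi_{g\setminus v}(\lambda)/\chi_g'(\lambda)$), so one must first restrict to the complement of the condition-(2) locus, which you do acknowledge but should state as the actual structure of the argument rather than as a parenthetical. For semi-algebraicity alone, the cleaner route is Tarski--Seidenberg: the locus is the projection of $\{(g,\psi,\lambda)\colon g\psi=\lambda\psi,\ \|\psi\|^2=1,\ \prod_v\psi_v=0\}$, with no case analysis needed. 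Finally, your witness for simple spectrum via a spanning-tree construction is overkill; the paper's witness --- a diagonal matrix with distinct entries, which always lies in $\cS(G)$ --- is immediate.
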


\begin{proof}
  This claim was established in \cite[Prop.~6.3]{AloGor_prep24} under
  the assumption that $G$ is connected.

  Assume now that $G$ is not connected and denote its connected components
  by $G_{j}$, $j=1,\ldots,m$. Notice that if $h\in\bad(G)$ then
  either $h|G_{j}\in\bad(G_{j})$ for some $j$ or $h$ has a multiple
  eigenvalue. Recall that there is a polynomial on $\cS(G)$,
  called \emph{discriminant}, that vanishes if and only if $h$
  has multiple eigenvalues; let $Z(G)$ be its zero set. Then,
  \begin{equation}
    \label{eq:big_bad}
    \bad(G) = Z(G) \cup \bigcup_{j=1}^{m} \{h\in\cS(G)\colon h|G_{j}\in\bad(G_{j})\}.    
  \end{equation}
  The set $Z(G)$ is algebraic, and since a diagonal matrix with
  distinct entries is not in $Z(G)$, we conclude that $Z(G)$ has
  positive codimension.  Each $\bad(G_{j})$ is semi-algebraic of
  positive codimension by \cite[Prop.~6.3]{AloGor_prep24}.  By
  \Cref{lem:subgraph_up}, each of the sets in the union is also
  semi-algebraic of positive codimension.  We conclude that their
  (finite) union $\bad(G)$ is semi-algebraic of positive codimension.
\end{proof}

\begin{proof}[Proof of \Cref{prop:genericity_again} and hence of \Cref{prop-genericity}]
  For any induced subgraph $G' \subset G$, $\bad(G')$ is
  semi-algebraic of positive codimension in $\cS(G')$.  We now take
  the preimage of $\bad(G')$ by \Cref{lem:subgraph_up} and take the
  finite union over all induced subgraphs of $G$.
\end{proof}

Our genericity assumptions imply that certain planar linkages defined
from eigenvectors of subgraphs of $G$ are generic.

\begin{lem}
  \label{lem:generic_linkage}
  Let $h\in \mathcal S(G)$ satisfy Assumptions~\ref{ass} and let $V'$
  be a subset of vertices such that the induced subgraph $G'=(V',E')$
  is connected.  If $\psi$ is an eigenvector of $h|G'$, then, for
  every $r \not\in V'$ the vector
  $\left( \left| h_{rs}\psi_s \right| \right)_{s\in V',\, s\sim r}$ defines a generic
  linkage, namely (cf.~\eqref{eqn-generic-linkage})
  \begin{equation}
    \label{eq:evec_generic_linkage}
    \sum_{s \in V'} \epsilon_s h_{rs} \psi_s \neq 0,
    \qquad \text{for all } \epsilon_s = \pm1,
  \end{equation}
\end{lem}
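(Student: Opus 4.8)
The plan is to contrapose: suppose that for some $r \notin V'$ and some choice of signs $\epsilon_s = \pm 1$ we have $\sum_{s\in V',\, s\sim r} \epsilon_s h_{rs}\psi_s = 0$, and deduce that $h$ (restricted to an appropriate induced subgraph) lies in one of the $\bad$ loci ruled out by Assumption~\ref{ass}. Since this locus is closed of positive codimension, its complement is generic, giving the claim. First I would observe that it suffices to work inside the induced subgraph $G'' := G[V' \cup \{r\}]$, because the quantity in \eqref{eq:evec_generic_linkage} depends only on the entries $h_{rs}$ with $s \in V'$ and on $\psi$, an eigenvector of $h|G'$. Let $\lambda$ be the eigenvalue, i.e.\ $(h|G')\psi = \lambda\psi$.

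The key step is to build, from $\psi$ and the vanishing relation, an eigenvector of a \emph{signing} of $h|G''$ that violates the nowhere-vanishing conclusion of Assumption~\ref{ass}\ref{item-subset}. Concretely: gauge-transform $h|G'$ by the diagonal $\pm 1$ matrix that flips the signs of those $\psi_s$ with $\epsilon_s = -1$; this produces a signing $\widehat h'$ of $h|G'$ with eigenvector $\widehat\psi$ (entries $|\psi_s|$ up to a global phase, and in any case with $\widehat h'_{rs}\widehat\psi_s = \epsilon_s h_{rs}\psi_s$ after suitable bookkeeping of the original signs of $h$). Now extend $\widehat h'$ to a matrix on $V' \cup \{r\}$ by choosing $h_{rr}'' := \lambda$ (a free diagonal entry, allowed since the diagonal of a signing of $h|G''$ is unconstrained only if... — see the caveat below) and off-diagonal entries $h''_{rs} := \widehat h'_{rs}$; then the vector $\widehat\psi$ extended by $\widehat\psi_r := 0$ satisfies $h''\widehat\psi = \lambda\widehat\psi$: the rows indexed by $V'$ give $\lambda\widehat\psi_s$ as before (the new column $r$ contributes $h''_{sr}\cdot 0 = 0$), and row $r$ gives $h''_{rr}\cdot 0 + \sum_{s} h''_{rs}\widehat\psi_s = \sum_s \epsilon_s h_{rs}\psi_s = 0 = \lambda\cdot 0$. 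Thus $h''$ has an eigenvector vanishing at $r$. Since $r$ is connected to $V'$ inside $G''$ (as $r$ was a neighbour of at least one $s \in V'$), and $G'$ is connected, $G''$ is connected, so this eigenvector vanishes somewhere on the unique connected component of $G''$ — contradicting Assumption~\ref{ass}\ref{item-subset} applied to the induced subgraph $G''$, \emph{provided} $h''$ is itself a signing of $h|G''$.

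The main obstacle — and the point that needs care — is precisely that last proviso: a genuine \emph{signing} of $h|G''$ must have diagonal equal to the diagonal of $h$, but the construction wants $h''_{rr} = \lambda$, which need not equal $h_{rr}$. There are two ways around this. The clean fix is to note that Assumption~\ref{ass}\ref{item-subset} is stated for all induced subgraphs and all signings simultaneously, and that the relevant genericity statement (Proposition~\ref{prop-genericity}, via $\bad(G)$) concerns only a positive-codimension locus; so instead of forcing $h''$ to be an exact signing, I would argue directly that the algebraic condition ``there exist $r$, signs $\epsilon_s$, an eigenvalue $\lambda$ of a signing of $h|G'$, and eigenvector $\psi$ with $\sum \epsilon_s h_{rs}\psi_s = 0$'' cuts out a semi-algebraic set of positive codimension in $\cS(G)$ — one checks that for a diagonal-plus-generic matrix it fails — and absorb it into the genericity hypothesis exactly as in the proof of Proposition~\ref{prop:genericity_again}. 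Alternatively, one can avoid the diagonal mismatch entirely by a perturbation/dimension-count argument: the set of $h$ for which the linkage is degenerate is contained in the zero set of finitely many nonzero polynomials (indexed by $r$ and by sign patterns $\epsilon$), each polynomial being a resultant-type expression in the entries of $h|G''$, and each is not identically zero because generic symmetric matrices supported on $G''$ have simple spectrum with nowhere-vanishing eigenvectors (this is where Assumption~\ref{ass} for $G''$ re-enters). Either way the lemma follows; I would present the first route since it parallels the existing $\bad(G)$ machinery most closely.
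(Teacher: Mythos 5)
Your key idea is the same as the paper's: extend to the induced subgraph $G'' = G[V' \cup \{r\}]$, build a signing $h''$ of $h|G''$ for which the extension of $\psi$ by zero at $r$ is an eigenvector, and derive a contradiction with Assumption~\ref{ass}\ref{item-subset} because that eigenvector vanishes at $r$ on a connected induced subgraph. That much is right.

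However, the ``main obstacle'' you flag is not real, and it derails the rest of the argument. You chose $h''_{rr} := \lambda$ and then noticed that a signing must have $h''_{rr} = h_{rr}$. But look at your own computation for row $r$: the term $h''_{rr}\cdot\widehat\psi_r = h''_{rr}\cdot 0 = 0$ no matter what $h''_{rr}$ is. The construction never needed $h''_{rr} = \lambda$; you are free to take $h''_{rr} = h_{rr}$, which makes $h''$ a genuine signing, and the eigenvector equation $h''\widehat\psi = \lambda\widehat\psi$ still holds because the $r$-th component of $\widehat\psi$ is zero. The paper's proof does exactly this (indeed it never touches the diagonal, nor does it gauge-transform the $h|G'$ block, which is also unnecessary: set $h''_{st} = h_{st}$ for $s,t\in V'$ and $h''_{rs} = \epsilon_s h_{rs}$ for $s\in V'$, so that $(\psi,0)^t$ is directly an eigenvector). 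Because the obstacle is imaginary, the two elaborate workarounds you propose are unneeded, and moreover neither is carried out: route (1) amounts to re-doing the genericity argument of Proposition~\ref{prop-genericity} with an enlarged $\bad(G)$, which would require a separate codimension verification you only sketch; route (2) invokes exactly the sort of genericity conclusion the lemma is trying to \emph{derive} from Assumption~\ref{ass}, so as written it risks circularity. The correct and complete proof is the first two-thirds of your argument, minus the false proviso: keep the diagonal of $h$, observe that $(\psi,0)^t$ is an eigenvector of $h''|G''$ vanishing at $r$, and quote Assumption~\ref{ass}\ref{item-subset}.
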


\begin{proof}
  Suppose \eqref{eq:evec_generic_linkage} fails for some vertex $r$.
  Consider the subgraph $G''$ induced by the vertex set $V'
  \cup \{r\}$.  Let $h''$ be a signing of $h$
  obtained by modifying $h_{rs}'' = \epsilon_s h_{rs}$.  Then
  $(\psi, 0)^t$ is an eigenvector of $h'' | G''$ in contradiction
  to the non-vanishing assumption in \S\ref{ass}\eqref{item-subset}.
\end{proof}

\section{Magnetic perturbation and gauge equivalence}
\label{sec:mag_gauge}

\subsection{Magnetic perturbation}\label{subsec-magnetic-perturbation}
Recall the vector spaces $\cS(G)$, $\cA(G)$ and $\cH(G)$ defined in
\S\ref{sec:intro_nodal}.  The ordering of the vertices determines an orientation of
each edge but the set $E$ is viewed as the set of undirected edges,
i.e.\ we do not distinguish $rs \in E$ from $sr \in E$.
The space $\cA(G)$ acts on $\cH(G)$ by 
\begin{equation}
  \label{eq:alpha_star}
  (\alpha*h)_{rs} = e^{i \alpha_{rs}}h_{rs}.
\end{equation}
For a given $h$, the mapping
\begin{equation}
  \label{eq:action_alpha}
  \alpha \mapsto \alpha*h =: h_\alpha
\end{equation}
maps $\cA(G)$ onto the space
\begin{equation}
  \label{eq:space_cM}
  \cP_h := \{h' \in \cH(G) \colon
  |h'_{uv}| = |h_{uv}| \text{ for all }u\neq v
  \text{ and }h'_{uu} = h_{uu}
  \text{ for all }u\in V\}.
\end{equation}
Because the elements $\alpha_{rs}$ only play a role in
\eqref{eq:alpha_star} modulo $2\pi$, topologically $\cP_h$ is a torus.
The mapping $*h \colon \cA(G) \to \cP_h$ in \eqref{eq:action_alpha}
may be interpreted as the universal covering of $\cP_h$; since
$\cA(G)$ is linear, we may identify it with the tangent space (at any
point of $\cP_h$)
\[
  T_{h_{\alpha}}\cP_h \cong \cA(G).
\]

\subsection{Graph cohomology}\label{subsec-cohomology}
We may consider a vector
$\theta = (\theta_1,\theta_2,\cdots,\theta_n) \in \RR^V$ to be a
0-form on $G$, that is, a real valued function defined on the vertices of $G$.
We may  view $\cA(G)$ as the space of 1-forms on $G$.  The
coboundary operator $d=d_G : \RR^V \to \cA(G)$ is given by
\begin{equation}
  \label{eq:differential_def}
  (d\theta)_{rs} =
  \begin{cases}
    \theta_r - \theta_s & \text{ if } (rs) \in E \\
    0 & \text{ otherwise}.
  \end{cases}  
\end{equation}
Since $G$ is connected, the kernel of $d$ has dimension 1 and
$H^1(G,\RR) = \cA(G)/d\RR^V$ is a vector space of dimension
$\beta = |E|-|V| + 1$.

The space $C_1(G,\RR)$ of 1-chains consists of formal linear
combinations of edges $(rs)\in E$ with $r<s$.  The space $C_0(G,\RR)$
of $0$-chains consists of formal linear combinations of vertices.  The
boundary map $\partial:C_1(G,\RR) \to C_0(G,\RR)$ is defined via
$\partial (rs) = (s) - (r)$, extended by linearity.  The kernel of
$\partial$ consists of {\em cycles} and, since there are no 2-chains,
gives the first homology $H_1(G, \RR) = \ker(\partial)$.

The spaces $C^1(G,\RR)$ and $C_1(G,\RR)$ are dually paired with
\begin{equation}
  \label{eqn-integral}
  \langle \alpha, \xi \rangle = \int_{\xi} \alpha
  := \sum_{\substack{(rs)\in E\\r<s}} \alpha_{rs} \xi_{rs}
\end{equation}
where $\alpha \in \mathcal A(G)$ is a 1-form, and $\xi$ is the $1$-chain
\[
    \xi = \sum_{\substack{(rs)\in E\\r<s}}\xi_{rs}\, (rs) \in C_1(G,\RR).
\]
 Stokes' theorem $\int_{\xi}d_G\theta = \int_{\partial \xi}\theta$
  holds, so this pairing  induces a dual pairing between $H^1(G,\RR)$ and $H_1(G, \RR)$. If $\xi$ is a cycle, the integral
\eqref{eqn-integral} is called the {\em flux} of $\alpha$ through $\xi$.
   In particular $\alpha \in d_G \RR^V$ if and only if its flux
  through any cycle $\xi$  is zero.

\subsection{Gauge equivalence}\label{subsec-gauge-invariance}  If $\theta \in \RR^V$ and $\alpha \in \cA(G)$
then a direct calculation using \eqref{eq:alpha_star} gives
\begin{equation}
  \label{eq:dtheta_action}
  d\theta * h_{\alpha} = d\theta*\alpha*h =(d\theta+\alpha)*h = 
  e^{i\theta} h_{\alpha} e^{-i\theta}  
\end{equation}
where $e^{i\theta} = \diag(e^{i\theta_1}, e^{i\theta_2},\cdots, e^{i\theta_n})$.  
Thus $h_{\alpha}$ and $h_{\alpha'}$ are gauge equivalent (see \eqref{eq:gauge_equiv_def})
if and only if $\alpha' = \alpha + d\theta \mod 2\pi$ for
some $\theta \in \RR^V$.  In other words, $\alpha$ and $\alpha'$ are representatives of the same equivalence class in $\cA(G)/d\RR^V = H^1(G,\RR)$ modulo $H^1(G,2\pi\ZZ)$.
The space $\tcM_h$ of magnetic perturbations of
$h \in \cS(G)$ is the quotient of the space $\cP_h$
by gauge equivalence,
\begin{equation}
  \label{eq:M-quotient-P}
  \tcM_h = \cP_h / \sim,
\end{equation}
and is therefore diffeomorphic to the torus $H^1(G,\RR)/H^1(G,2\pi\ZZ)$ of
dimension $|E|-|V|+1 = \beta(G)$.  This torus is sometimes called the
``Jacobi torus'' in the literature \cite{KotSun_aam00}.

\subsection{Spanning trees}
\label{subsec-lift}

Choose a set of edges $E_G^{\f} \subset E$ whose complement is a
spanning tree of $G$. The set $\tcA(G)$ of antisymmetric matrices
supported on $E_G^\f$ is a (real) vector space of dimension
$|E_G^{\f}| = \beta(G)$.  This gives a canonical direct sum
decomposition (``Hodge-type'' decomposition)
\begin{equation}
  \label{eq:direct_sum_N}
  \cA(G) = \tcA(G) \oplus d\RR^V,
\end{equation}
and identifies $\tcA(G)$ as a section of the projection $\cA(G) \to
\cA(G) / d\RR^V$.  The mapping
\begin{equation}
  \label{eq:covering_Mh}
  \alpha \in \tcA(G)
  \quad \mapsto \quad
  [\alpha * h] \in \tcM_h
\end{equation}
may be identified with the universal covering of the torus $\tcM_h$,
which gives an identification of $\tcA(G)$ with the tangent space of
$\tcM_h$ at any point $[h_{\alpha}] \in \tcM_h$.  The choice of the
free edges $E_G^{\f}$ (or, equivalently, the choice of a spanning
tree) gives an embedding of $\tcM_h$ into
$\cP_h \subset \mathcal H(G)$: it is the image under $*h$ of a
universal torus $\tcA(G)\mod~2\pi$.

The objects defined in the above paragraphs are summarized in the
Figure \ref{fig:diagram}. The horizontal arrows (on the left) are covering
    maps, the downward arrows are the quotient maps, and the upward
    arrows are inclusions induced by a choice of the spanning tree.  Since all
    matrices in an equivalence class in $\tcM_h$ are unitarily
    equivalent, the eigenvalue function $\lambda_k$ on $\cP_h
    \subset \cH(G)$ descends to a function on $\tcM_h$.

\begin{figure}[H]
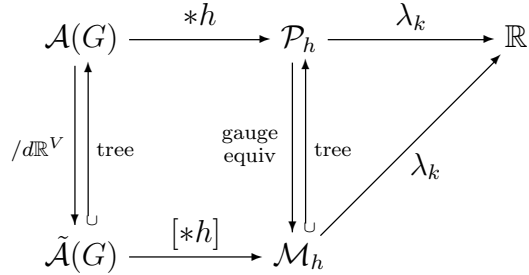

  
  \centering
  \begin{diagram}[size=3.5em,p=5pt]
    \cA(G) & \rTo^{*h} & \cP_h & \rTo^{\lambda_k} & \RR\\
    \dTo^{\scriptstyle{/d\RR^V}}\uInto_{\text{\tiny{tree}}} &&  \dTo^{\begin{smallmatrix}\textrm{\tiny{gauge}}\\
    \textrm{\tiny{equiv}}\end{smallmatrix}}\!\uInto_{\text{\tiny{tree}}}
    & \ruTo_{\lambda_k} &\\
    \tcA(G) & \rTo^{[*h]} & \tcM_h &&
  \end{diagram}
  \caption{A summary of the objects used in the paper and their
    relationships.}  \label{fig:diagram}
\end{figure}

\subsection{\texorpdfstring{$V_N$-adapted coordinates on $\cA(G)$ and
    $\tcM_h$}{Coordinates on the magnetic torus}}
\label{subsec:tree_choice}

Given a subset $V_N$, we have a partition of the edges $E$ into the
disjoint union
\begin{equation}
  \label{eq:edge_disjoint_union}
  E = E_{NN} \sqcup E_{ZN} \sqcup E_{ZZ}.
\end{equation}
Let $\cA_{N}$, $\cA_{ZN}$ and $\cA_{ZZ}$ denote the antisymmetric
matrices from $\cA(G)$ supported on the edge sets $E_{NN}$, $E_{ZN}$
and $E_{ZZ}$, correspondingly.  From \eqref{eq:edge_disjoint_union} we have
\begin{equation}
  \label{eqn-AG}
  \cA(G) = \cA_N \oplus \cA_{ZN} \oplus \cA_{ZZ}.
\end{equation}

Based on $V_N$, we now choose a spanning tree in a particular fashion.
This spanning tree in $G$ is constructed in three steps:
\begin{enumerate}
\item Choose a spanning tree in the subgraph $G_N = (V_N, E_{NN})$.
\item For each vertex  $v\in V_{ZN}$ choose exactly one edge $(v,w) \in E_{ZN}$.
\item Complete this to a spanning tree for $G$ by choosing edges in $E_{ZZ}$.
\end{enumerate}
A spanning tree chosen in accordance with this algorithm is
illustrated in Figure \ref{fig-tree}.

Denoting by $E_G^\f$ the edges in the complement of the chosen tree,
we define
\begin{equation}
  \label{eq:NZ_free_edges}
  E_N^\f = E_N \cap E_G^\f,
  \qquad
  E_{ZN}^\f = E_{ZN} \cap E_G^\f,
  \qquad
  E_{ZZ}^\f = E_{ZZ} \cap E_G^\f.
\end{equation}
We can now define $\tcA_{N}$, $\tcA_{ZN}$ and $\tcA_{ZZ}$ to be the
spaces of the antisymmetric
matrices supported on the edge sets $E_{NN}^\f$, $E_{ZN}^\f$
and $E_{ZZ}^\f$.  Since the ``free'' edges are the complement of a
spanning tree, we immediately get, cf.~\eqref{eq:direct_sum_N},
\begin{equation}
  \label{eqn-first-sum}
  \cA(G) = \tcA_{N} \oplus \tcA_{ZN} \oplus \tcA_{ZZ} \oplus d\RR^V.
\end{equation}

\usetikzlibrary{decorations.pathreplacing}
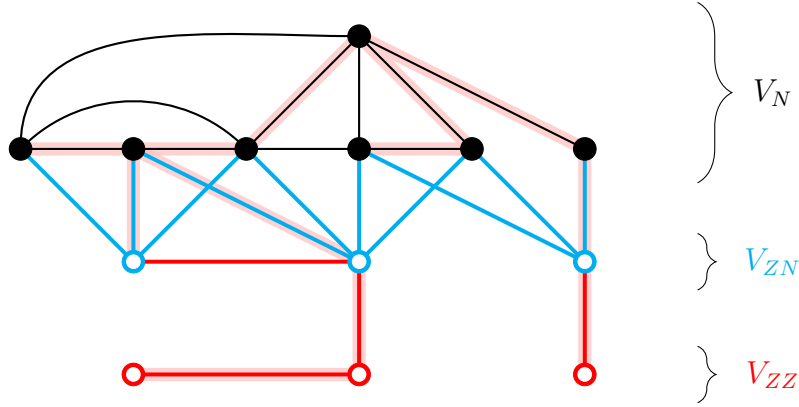
\begin{figure}[H]
\begin{tikzpicture}[scale = 1.5]

\foreach \x in {1,...,6}
  {\node [inner sep = 0pt](N\x) at (\x,4){};}
\node[inner sep = 0pt] (N0) at (4,5){};

\foreach \x in {1,2,3}
  {\node[inner sep = 0pt](ZN\x) at (2*\x,3){};}

\foreach \x in {1,...,3}
{\node[inner sep = 0pt](ZZ\x) at (2*\x,2){};}

\draw[line width =5, color =pink, opacity = .7](N1)--(N3)--(N0);
\draw[line width =5, color =pink, opacity = .7](ZZ1)--(ZZ2) --(ZN2)--(N2)--(ZN1);
\draw[line width =5, color =pink, opacity = .7](N4)--(N5)--(N0)--(N6)--(ZN3)--(ZZ3);

\draw[thick] (N1) to [out = 45, in = 135] (N3);
\draw[thick] (N1) to [out = 90, in = 180] (N0);
\draw[thick] (N1) -- (N5);
\foreach \x in {3,...,6}
{\draw[thick] (N0) -- (N\x);}

\foreach \x in {2,...,5}
  {\draw[line width=1.5, color=cyan] (ZN2) -- (N\x);}
\foreach \x in {1,...,3}
  {\draw[line width=1.5, color = cyan] (ZN1) -- (N\x);}
\foreach \x in {4,...,6}
  {\draw[line width=1.5, color = cyan](ZN3) -- (N\x);}

\draw[line width=1.5, color=red] (ZN1) -- (ZN2)--(ZZ2)--(ZZ1);
\draw[line width=1.5, color=red] (ZN3) -- (ZZ3);

\foreach \x in {0,...,6}
{\draw[fill] (N\x) circle [radius=.1];}

\foreach \x in {1,...,3}
{\draw[fill, color=cyan] (ZN\x) circle [radius = .1];
  \draw[fill, color=white] (ZN\x) circle [radius =.06];}

\foreach \x in {1,...,3}
{\draw[fill, color=red] (ZZ\x) circle [radius = .1];
\draw[fill, color=white] (2*\x,2) circle [radius = .06];}

  \draw[decorate,decoration={brace,mirror,  amplitude = 13}]    (7,3.7)--(7,5.3) node [midway, xshift = 1cm] {$V_N$}    ;
  \draw[decorate,decoration={brace,mirror,  amplitude = 7}]    (7,2.75)--(7,3.25) node [midway, xshift = 1cm, cyan] {$V_{ZN}$}    ;
  \draw[decorate,decoration={brace,mirror,  amplitude = 7}]    (7,1.75)--(7,2.25) node [midway, xshift = 1cm, red] {$V_{ZZ}$}    ;

\end{tikzpicture}
\caption{A spanning tree (shaded edges) consistent with
  \S\ref{subsec:tree_choice}}\label{fig-tree}
\end{figure}

As explained in \S\ref{subsec-lift}, the space $\tcA_{N} \oplus
\tcA_{ZN} \oplus \tcA_{ZZ}$ is the universal covering space of
$\tcM_h$ and provides a convenient set of coordinates for it.

For future reference we note the dimensions of each subspace,
\begin{align}
  \label{eq:Atilde_dim_N}
  &\dim \tcA_N
  = \left| E_N^\f\right| = |E_{NN}| - |V_N| + 1,
  \\
  \label{eq:Atilde_dim_ZN}
  &\dim \tcA_{ZN}
  = \left| E_{ZN}^\f\right| = |E_{ZN}| - |V_{ZN}|,
  \\
  \label{eq:Atilde_dim_ZZ}
  &\dim \tcA_{ZZ}
  = \left| E_{ZZ}^\f\right| = |E_{ZZ}| - |V_{ZZ}|.
\end{align}

We now give conditions under which the space $\tcA_{N}$ can be
replaced by another space in the decomposition \eqref{eqn-first-sum},
to be used in Lemma \ref{lem:kerB} below.

\begin{prop}
  \label{prop:W_replacement}
  Let $W_N \subset\cA_{N}$ be a subspace such that
  \begin{equation}
    \label{eq:AN_decomposition}
    \cA_N = W_N \oplus d_N \RR^{V_N},
  \end{equation}
  where $d_{N}:\RR^{V_{N}}\to\cA_{N}$ is the coboundary
  operator of $G_{N}$.  Then
  \begin{equation}\label{eqn-direct-sum}
    \cA(G) = W_N \oplus  \tcA_{ZN} \oplus \tcA_{ZZ} \oplus d_G\RR^V.
  \end{equation}
\end{prop}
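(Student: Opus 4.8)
\textit{Plan.} The idea is to produce an explicit linear automorphism $\Psi$ of $\cA(G)$ that carries the already established decomposition \eqref{eqn-first-sum} onto the asserted one \eqref{eqn-direct-sum}. The only ``moving part'' between the two is that the summand $\tcA_{N}$ gets replaced by $W_{N}$, and both $\tcA_{N}$ and $W_{N}$ are complements of the same subspace $d_{N}\RR^{V_{N}}$ inside $\cA_{N}$; so $\Psi$ should be an isomorphism $\tcA_{N}\to W_{N}$ on one summand and the identity on everything else, and the whole point is to check that it does not disturb $d_{G}\RR^{V}$.

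\textit{Construction of $\Psi$.} By the choice of spanning tree in the first step of \S\ref{subsec:tree_choice}, the space $\tcA_{N}$ is supported on the complement of a spanning tree of $G_{N}$, so the analogue of \eqref{eq:direct_sum_N} for $G_{N}$ gives $\cA_{N}=\tcA_{N}\oplus d_{N}\RR^{V_{N}}$. Combining this with \eqref{eqn-AG} we obtain the ambient decomposition
\begin{equation*}
  \cA(G) = \tcA_{N}\oplus d_{N}\RR^{V_{N}}\oplus \cA_{ZN}\oplus \cA_{ZZ}.
\end{equation*}
By the hypothesis \eqref{eq:AN_decomposition}, $W_{N}$ is a second complement of $d_{N}\RR^{V_{N}}$ in $\cA_{N}$, hence $\dim W_{N}=\dim \tcA_{N}$; fix any linear isomorphism $\phi\colon \tcA_{N}\to W_{N}$. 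Define $\Psi\colon \cA(G)\to\cA(G)$ to equal $\phi$ on the summand $\tcA_{N}$ and the identity on $d_{N}\RR^{V_{N}}\oplus \cA_{ZN}\oplus \cA_{ZZ}$. Then $\Psi$ is linear with image $W_{N}\oplus d_{N}\RR^{V_{N}}\oplus \cA_{ZN}\oplus \cA_{ZZ}=\cA_{N}\oplus \cA_{ZN}\oplus \cA_{ZZ}=\cA(G)$, so $\Psi$ is an automorphism.

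\textit{The key step, and conclusion.} I claim $\Psi$ fixes $d_{G}\RR^{V}$ pointwise. Given $\theta\in\RR^{V}$, decompose $d_{G}\theta$ along the edge partition \eqref{eq:edge_disjoint_union}, i.e.\ into its restrictions to $E_{NN},E_{ZN},E_{ZZ}$ (these are its $\cA_{N}$-, $\cA_{ZN}$-, $\cA_{ZZ}$-components). On an edge $(rs)\in E_{NN}$ one has $(d_{G}\theta)_{rs}=\theta_{r}-\theta_{s}=\big(d_{N}(\theta|_{V_{N}})\big)_{rs}$, so the $\cA_{N}$-component of $d_{G}\theta$ lies in $d_{N}\RR^{V_{N}}$, where $\Psi$ is the identity; the $\cA_{ZN}$- and $\cA_{ZZ}$-components are fixed by $\Psi$ by definition. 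Hence $\Psi(d_{G}\theta)=d_{G}\theta$. Moreover $\Psi$ fixes $\tcA_{ZN}$ and $\tcA_{ZZ}$ pointwise (they are subspaces of $\cA_{ZN}$ and $\cA_{ZZ}$), while $\Psi(\tcA_{N})=\phi(\tcA_{N})=W_{N}$. Applying the automorphism $\Psi$ to the direct sum \eqref{eqn-first-sum} and using that an automorphism sends a direct-sum decomposition to a direct-sum decomposition (push a basis adapted to the decomposition through $\Psi$), we get
\begin{equation*}
  \cA(G)=\Psi(\tcA_{N})\oplus\Psi(\tcA_{ZN})\oplus\Psi(\tcA_{ZZ})\oplus\Psi(d_{G}\RR^{V})
  = W_{N}\oplus \tcA_{ZN}\oplus \tcA_{ZZ}\oplus d_{G}\RR^{V},
\end{equation*}
which is exactly \eqref{eqn-direct-sum}.

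\textit{Main obstacle.} There is no substantive difficulty; the one point requiring attention is the verification that $\Psi$ preserves $d_{G}\RR^{V}$, which reduces to the elementary identity that the coboundary of $G$ restricted to the edges $E_{NN}$ is the coboundary $d_{N}$ of the induced subgraph $G_{N}$. Should one prefer a hands-on argument, one can instead carry out a dimension count from \eqref{eq:Atilde_dim_N}--\eqref{eq:Atilde_dim_ZZ} (together with $\dim d_{G}\RR^{V}=|V|-1$ and $\dim W_{N}=\beta(G_{N})$), reducing the claim to $\big(W_{N}\oplus\tcA_{ZN}\oplus\tcA_{ZZ}\big)\cap d_{G}\RR^{V}=\{0\}$, and then peel off $\theta$ vertex-set by vertex-set: restricting to $E_{NN}$ and using $W_{N}\cap d_{N}\RR^{V_{N}}=\{0\}$ and connectedness of $G_{N}$ forces $\theta|_{V_{N}}$ constant; the Step-2 tree edges of \S\ref{subsec:tree_choice} then force $\theta|_{V_{ZN}}$ to agree with it; and finally \eqref{eqn-first-sum} (equivalently, connectedness of the spanning tree) forces $d_{G}\theta=0$.
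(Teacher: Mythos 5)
Your proof is correct, and it takes a genuinely different route from the paper. The paper proves the statement by a dimension count combined with a direct verification that the intersection $\bigl(W_N \oplus \tcA_{ZN} \oplus \tcA_{ZZ}\bigr) \cap d_G\RR^V$ is trivial; the latter step goes through a flux argument over cycles in $G_N$ (for $\xi \subset G_N$, $\int_\xi \alpha_Z = 0$ and $\int_\xi d_G\theta = 0$, forcing the $W_N$-component into $d_N\RR^{V_N}$, hence zero). You instead build an explicit automorphism $\Psi$ of $\cA(G)$ that carries the known decomposition \eqref{eqn-first-sum} onto the desired one: $\Psi$ is an isomorphism $\tcA_N \to W_N$ on the first summand and the identity elsewhere, with the well-definedness resting on $\cA(G) = \tcA_N \oplus d_N\RR^{V_N} \oplus \cA_{ZN} \oplus \cA_{ZZ}$. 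The crux — that $\Psi$ fixes $d_G\RR^V$ pointwise — reduces to the elementary identity $(d_G\theta)|_{E_{NN}} = d_N(\theta|_{V_N}) \in d_N\RR^{V_N}$, replacing the paper's flux computation with a purely local observation about the coboundary. The paper's approach is more concrete about \emph{why} the relevant intersection is trivial and reuses the flux machinery already introduced in \S\ref{subsec-cohomology}; yours is more structural and makes transparent that the only thing that changes between the two decompositions is the choice of complement to $d_N\RR^{V_N}$ inside $\cA_N$, everything else being rigid. Your sketched ``hands-on'' alternative at the end, which peels off $\theta$ vertex-set by vertex-set, is essentially an edge-by-edge rendering of the paper's flux argument and also works; both versions ultimately invoke \eqref{eqn-first-sum} to kill what remains in $\tcA_{ZN} \oplus \tcA_{ZZ}$.
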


\begin{proof}
  From \eqref{eq:AN_decomposition} we conclude that $\dim W_N$ is the
  Betti number of the subgraph $G_N$. On the other hand, the restriction
  of our choice (\S\ref{subsec:tree_choice}) of the spanning tree to $G_N$
  is a spanning tree of $G_N$, which means that
  \begin{equation*}
    \dim(\tcA_N) = \dim(W_N).
  \end{equation*}
  The dimensions of the four subspaces on the right-hand side of
  \eqref{eqn-direct-sum} add up to $\dim \cA(G)$. The subspaces
  $W_N$, $\tcA_{ZN}$, and $\tcA_{ZZ}$ are supported on different parts
  of the graph, so they are mutually orthogonal. We are left with showing
  that the following intersection is trivial:
  \begin{equation}
    \label{eq:d_intersection}
    d_G\RR^V \cap \Bigl(W_N \oplus \tcA_{ZN} \oplus \tcA_{ZZ}\Bigr) = \{0\}.
  \end{equation}

  Suppose $\alpha \in \cA(G)$ is in the intersection \eqref{eq:d_intersection}, i.e.
  \begin{equation*}
    \alpha = \alpha_N + \alpha_Z,
    \quad \alpha_N\in W_N,
    \quad \alpha_Z\in \tcA_{ZN}\oplus \tcA_{ZZ},
  \end{equation*}
  and also $\alpha\in d_G\RR^V$. The latter implies that the flux (see \S \ref{subsec-cohomology}) of
  $\alpha$ through any cycle $\xi$ is $\int_\xi \alpha = 0$.  For any
  cycle $\xi$ in $G_N$, we have $\int_\xi \alpha_Z = 0$ since $\xi$
  and $\alpha_Z$ have disjoint support. Hence,
  $\int_\xi \alpha_N = 0$.  This means that
  $\alpha_N \in d_N\RR^{V_N} \cap W_N$, so $\alpha_N = 0$ by the
  assumption \eqref{eq:AN_decomposition}. We conclude that
  \begin{equation*}
    \alpha = \alpha_Z \in \Bigl(\tcA_{ZN} \oplus \tcA_{ZZ}\Bigr) \cap d_G\RR^V,
  \end{equation*}
  which is zero by \eqref{eqn-first-sum}.
\end{proof}


\section{Proof of \texorpdfstring{Theorem \ref{thm:main} parts
  \eqref{item:Ftopology}--\eqref{item:Fcritical}}{the main theorem, parts (1)--(2)}}
\label{subsec-proof-topology}
\subsection{Constructing the critical set \texorpdfstring{$F$}{}} \label{sec:constructingF}
Recall the notion of the critical data $(V_N, h_N, \psi_N, \lambda)$ of
\S\ref{subsec-data} and the notation $\psi$ for the extension of
$\psi_N$ by 0 to the rest of the graph.  Since $\cM_h = \cM_{h'}$ for
any signing $h'$ of $h$, without loss of generality we may assume that
$h_N$ is the restriction $h|G_N$ (rather than the restriction of a
signing of $h$ or, equivalently, signing of a restriction).

Using the notation introduced in \S\ref{subsec-magnetic-perturbation},
the definition \eqref{eq:F} of the critical set $F$ can be rewritten
as
\begin{equation}
  \label{eq:F_alpha}
  F = \{[\alpha*h]\in \cM_h \colon \alpha\in\cA(G),\
  (\alpha*h) |G_N = h_N, \  (\alpha*h) \psi = \lambda \psi \}.
\end{equation}
As a first step, we observe that it is enough to search for $\alpha$
in the special gauge choice $\tcA(G)$ based on the tree constructed
from the given $V_N$.

\begin{lem}
  \label{lem:critF}
  The critical set $F=F(V_N, h_N, \psi_N, \lambda)$ can be computed as
  \begin{equation}
    \label{eq:F_alpha_tree}
    F
    = \{[\ta*h] \colon \ta\in\tcA,\
    (\ta*h) |G_N = h_N,\  (\ta*h) \psi = \lambda \psi \},
  \end{equation}
  where $\tcA = \tcA_{N} \oplus \tcA_{ZN} \oplus \tcA_{ZZ}$ is
  constructed as in \S\ref{subsec:tree_choice}, based on the set $V_N$.
\end{lem}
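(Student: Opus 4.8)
The plan is to show that the two descriptions of $F$ in \eqref{eq:F_alpha} and \eqref{eq:F_alpha_tree} coincide by exploiting the fact that $\tcA = \tcA_N \oplus \tcA_{ZN} \oplus \tcA_{ZZ}$ is a section of the quotient map $\cA(G) \to \cA(G)/d_G\RR^V = H^1(G,\RR)$ (see \S\ref{subsec-lift}), so every equivalence class $[h_\alpha] \in \tcM_h$ has a unique representative of the form $[\ta * h]$ with $\ta \in \tcA$ (modulo $2\pi$). Since the right-hand side of \eqref{eq:F_alpha_tree} is contained in the right-hand side of \eqref{eq:F_alpha} (any $\ta\in\tcA$ is in particular an element of $\cA(G)$), it suffices to prove the reverse inclusion: if $\alpha \in \cA(G)$ satisfies $(\alpha * h)|G_N = h_N$ and $(\alpha*h)\psi = \lambda\psi$, then the unique representative $\ta \in \tcA$ with $[\ta * h] = [\alpha * h]$ also satisfies these two conditions.

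First I would write $\alpha = \ta + d_G\theta \pmod{2\pi}$ for some $\theta \in \RR^V$ and some $\ta \in \tcA$, using the decomposition \eqref{eqn-first-sum}; by \eqref{eq:dtheta_action} this means $\alpha * h = e^{i\theta} (\ta * h) e^{-i\theta}$. Now I examine each of the two defining conditions. For the block condition: the spanning tree was chosen (step (1) of \S\ref{subsec:tree_choice}) so that its restriction to $G_N$ is a spanning tree of $G_N$; hence $\ta$ restricted to the edges $E_{NN}$ already ranges over a full section of $H^1(G_N,\RR)$, and $(\ta*h)|G_N$ is gauge equivalent (as matrices on $G_N$) to $(\alpha*h)|G_N = h_N$. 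Since $h_N$ is a signing of $h|G_N$ and the only matrices in $\cP_{h|G_N}$ gauge-equivalent to $h_N$ that lie in $\tcA_N * (h|G_N)$ form a single point (the section $\tcA_N$ picks out exactly one representative in each gauge class), we get $(\ta*h)|G_N = h_N$ as well — here I would invoke that two signings of $h|G_N$ are gauge equivalent only if they are equal, which is exactly the statement that the section $\tcA_N \bmod 2\pi$ meets each gauge orbit once and that $h_N$, being fixed by complex conjugation, corresponds to a $2$-torsion point of the torus $\tcM_{h|G_N}$ that is hit by $\ta|_{E_{NN}}$. For the eigenvector condition: from $\alpha*h = e^{i\theta}(\ta*h)e^{-i\theta}$ and $(\alpha*h)\psi = \lambda\psi$ we obtain $(\ta*h)(e^{-i\theta}\psi) = \lambda (e^{-i\theta}\psi)$, so $e^{-i\theta}\psi$ is a $\lambda$-eigenvector of $\ta*h$; I then need to upgrade this to $(\ta*h)\psi = \lambda\psi$, i.e. to remove the phase $e^{-i\theta}$ on the support $V_N$. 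This follows because on $G_N$ the matrix $(\ta*h)|G_N = h_N$ is real with the real simple eigenvalue $\lambda$ and real eigenvector $\psi_N$ (up to global phase, which we fix), so $e^{-i\theta_v}$ must be constant on $V_N$ and, absorbing this constant into $\theta$, equal to $1$ there; then $(\ta*h)\psi = \lambda\psi$ since $\psi$ vanishes off $V_N$ and the off-$V_N$ rows of $(\ta*h)\psi$ only see entries $(\ta*h)_{rs}$ with $s\in V_N$, which agree with those of $e^{i\theta}(\ta*h)e^{-i\theta}$ once $\theta|_{V_N}=0$.

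The main obstacle is the bookkeeping in the block-condition step: one must be careful that the gauge transformation $\theta$ used to bring $\alpha$ into the section $\tcA$ does not spoil the normalization $(\alpha*h)|G_N = h_N$ — i.e. that the freedom in choosing $\theta$ (the kernel of $d_G$ is one-dimensional, but $\theta|_{V_N}$ is otherwise unconstrained a priori) can indeed be pinned down so that $\theta$ is constant on $V_N$. This is where the special structure of the tree (its restriction to $G_N$ being a spanning tree of $G_N$, step (1)) is essential, together with genericity hypothesis \S\ref{ass}\ref{item-subset} guaranteeing $\lambda$ is simple for $h_N$ so the eigenvector $\psi_N$ is determined up to a single global phase. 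Once $\theta|_{V_N}$ is forced to be locally constant, everything else is routine: the eigenvector equation transfers verbatim, and the two sets are seen to be equal.
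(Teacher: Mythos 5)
Your overall strategy is the same as the paper's — decompose $\alpha=\ta+d_G\theta$ with $\ta\in\tcA$ using \eqref{eqn-first-sum}, then show that $\theta$ can be taken to vanish on $V_N$, so that $\ta$ inherits both defining conditions. But you arrive at the vanishing of $\theta|_{V_N}$ by a different route: the paper notes that $(\alpha*h)|G_N=h_N=h|G_N$ forces $\alpha\equiv 0\pmod{2\pi}$ on $E_{NN}$, uses the cycle integral $\int_\xi\alpha_N=0$ to kill the $\tcA_N$-component, and then reads off $d\theta\equiv 0$ on $E_{NN}$; this is purely combinatorial and makes no use of genericity. You instead deduce $(\ta*h)|G_N=h_N$ from the injectivity of the section $\tcA_N\bmod 2\pi\to\tcM_{h|G_N}$, and then invoke genericity (simplicity of $\lambda$ for $h_N$ together with $\psi_N$ nowhere zero) to conclude $e^{-i\theta_v}$ is constant on $V_N$. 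That route is valid here because Lemma~\ref{lem:critF} sits inside the proof of Theorem~\ref{thm:main}, where \S\ref{ass} is assumed, but it is a strictly stronger hypothesis than the paper uses for this step.

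Two points need repair. First, you never invoke the standing normalization $h_N=h|G_N$ (set up at the start of \S\ref{sec:constructingF}); without it, $h_N$ need not lie in the image of the section $\ta_N\mapsto \ta_N*(h|G_N)$, and the uniqueness-of-representative argument does not close. Second, the justification you offer after the em-dash — ``two signings of $h|G_N$ are gauge equivalent only if they are equal'' — is false in general (on a tree $G_N$ \emph{all} signings are gauge equivalent), and it is not equivalent to the statement you identify it with. What the argument actually needs, and what is true, is that the section $\tcA_N\bmod 2\pi$ meets each gauge orbit of $\cP_{h|G_N}$ exactly once, \emph{together with} the fact that $h_N=h|G_N$ is the basepoint $\ta_N=0$ of that section. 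With that replacement and the explicit WLOG, your block-condition step is correct; the eigenvector-condition step and the final normalization $\theta|_{V_N}=0$ are fine as written.
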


\begin{proof}
  The inclusion right to left is immediate from \eqref{eq:F_alpha} and
  $\tcA \subset \cA(G)$.
  We now need to show that if $\alpha \in \cA(G)$ is such that
  $\alpha*h$ satisfies the conditions in \eqref{eq:F_alpha}, there
  exists $\ta \in \tcA$ so that $\ta*h$ satisfies the same conditions.

  By \eqref{eqn-first-sum} we have
  \begin{equation}
    \label{eq:alpha_decomp}
    \alpha = \alpha_N + \alpha_Z - d\theta,
    \quad
    \alpha_N \in \tcA_N,\ \alpha_Z \in \tcA_{ZN} \oplus \tcA_{ZZ}.
  \end{equation}
  Due to the condition $(\alpha*h) |G_N = h_N = h|G_N$, the matrix
  $\alpha$ is zero (modulo $2\pi$) on all edges in $E_{NN}$.  For
  any cycle $\xi$ in $G_N$, $\int_\xi d\theta = 0$ and
  \begin{equation}
    \label{eq:cycle_int}
    0 = \int_\xi \alpha
    =  \int_\xi \left( \alpha_N + \alpha_Z - d\theta \right)
    = \int_\xi \alpha_N, 
  \end{equation}
  since $\alpha_Z$ and $\xi$ have disjoint support.  We conclude that
  $\alpha_N = 0$.  Consequently, $d\theta$ is 0 on every edge in
  $E_{NN}$ and $\theta$ is constant on the vertices in $V_N$.
  Furthermore, due to the kernel of $d$,  $\theta$ can be chosen to be
  0 on $V_N$.

  Now let $\ta = \alpha + d\theta$.  Denoting
  $h_\alpha = \alpha*h$ and recalling that
  $(\alpha+d\theta)*h = d\theta * h_\alpha$, we have
  \begin{equation*}
    (\ta*h)|G_N = (d\theta * h_\alpha) | G_N = h_\alpha | G_N,
  \end{equation*}
  because $d\theta$ is 0 on the edges of $G_N$.  Furthermore,
  $e^{\pm i\theta} \psi  = \psi$ since $\theta$ and $\psi$ have
  disjoint supports and therefore, cf.~\eqref{eq:dtheta_action},
  \begin{equation*}
    (\ta*h)\psi = (d\theta * h_\alpha) \psi = e^{i\theta} h_\alpha e^{-i\theta} \psi
    = e^{i\theta} h_\alpha \psi = \lambda e^{i\theta} \psi = \lambda \psi.\qedhere
  \end{equation*}
\end{proof}

\subsection{Proof of \texorpdfstring{\Cref{thm:main} part
    \eqref{item:Ftopology}}{main theorem part (1)}}
  According to \Cref{lem:critF} we are looking for $\ta =\ta_N +
  \ta_{ZN} + \ta_{ZZ} \in \tcA_{N}
  \oplus \tcA_{ZN} \oplus \tcA_{ZZ}$ satisfying the two conditions
  \begin{align}
    \label{eq:Fcond1}
    &(\ta*h) |G_N = h|G_N,\\
    \label{eq:Fcond2}
    &(\ta*h) \psi = \lambda \psi.
  \end{align}
  We have already seen that condition~\eqref{eq:Fcond1} is equivalent
  to $\ta_N=0$.  Since $\ta_{ZZ}$ is supported on edges $(rs)\in E$
  such that $\psi_r = \psi_s = 0$, condition \eqref{eq:Fcond2} imposes
  no constraints on $\ta_{ZZ}$.  To see which constraints are imposed
  on $\ta_{ZN}$, we inspect the rows of $(\ta*h) \psi = \lambda \psi$
  corresponding to $r\in V_{ZN}$, namely
  \begin{equation}
    \label{eqn-inhomogeneous}
    h_{rr'}\psi_{r'}
    + \sum_{s: (rs) \in E_{ZN}^\f} e^{i\alpha_{rs}} h_{rs}
    \psi_s
    = 0 = \lambda \psi_r,
  \end{equation}
  where $(rr')$ is the unique edge incident to $r$ in $E_{ZN} \setminus E_{ZN}^\f$.
  The solutions $\{\alpha_{rs}\}$ to Equation~\eqref{eqn-inhomogeneous} constitute the 
  linkage space $M(b_r)$ (see Appendix \ref{subsec-linkage})  with
  \begin{equation}
    \label{eq:linkage_br}
    b_r = \left( \left|h_{rr'}\psi_{r'}\right|,
      \ \left|h_{rs_1}\psi_{s_1}\right|,
      \ \left|h_{rs_2}\psi_{s_2}\right|, \ldots \right).
  \end{equation}
  The linkage space is a manifold because the genericity requirement
  \eqref{eqn-generic-linkage} is satisfied by
  Lemma~\ref{lem:generic_linkage}.  For different choices of
  $r \in V_{ZN}$ the corresponding equations involve distinct variables,
  therefore the manifolds $M(b_r)$ are disjoint.

  We conclude that the set of solutions $\ta$ of
  \eqref{eq:Fcond1}-\eqref{eq:Fcond2} modulo $2\pi$ is diffeomorphic to the product
  \begin{equation}
    \label{eq:F_description}
    \{0\} \times \prod_{r \in V_{ZN}}M(b_r) \, \times \TT_{ZZ} \cong F,
  \end{equation}
  where $\TT_{ZZ}$ is the torus of antisymmetric matrices supported on
  $E_{ZZ}^\f$ with entries modulo $2\pi$; its dimension is given by
  \eqref{eq:Atilde_dim_ZZ}.

  According to \Cref{thm-linkage}, the linkage space $M(b_r)$ is
  non-empty if 
  \begin{equation}
    \label{eq:linkage_nonzero}
    2 \max_{s\in V_{N}} |h_{rs}\psi_{s}|
    \le 
    \sum_{s\in V_{N}}|h_{rs}\psi_{s}|.
  \end{equation}
  If all $M(b_r)$ are non-empty, the dimension of $F$ is 
  \begin{align*}
    \dim(F)
    &= |E_{ZZ}|-|V_{ZZ}|
      +\sum_{r\in V_{ZN}}\Big(\big|\{(rs)\in E : s\in V_N\}\big| - 3\Big)\\
    &=|E_{ZZ}|-|V_{ZZ}|+|E_{ZN}|-3|V_{ZN}|,
  \end{align*}
  in agreement with \eqref{eq:MBdimension}, which finishes the proof of \Cref{thm:main}
  part \eqref{item:Ftopology}.

  \subsection{Proof of \texorpdfstring{\Cref{thm:main} part
      \eqref{item:Fcritical}}{main theorem, part (2)}}

  Each point $h_\alpha \in F$ satisfies the criticality conditions
  \eqref{eqn-criticality} because those need only be checked on edges
  $rs \in E_{NN}$ where $h_\alpha$ coincides with $h_N$, which is real
  with a real eigenvector $\psi_N$.  Conversely, starting with a
  critical point $h_\alpha$ with a simple eigenvector $\psi$, denote
  by $V_N$ the support of $\psi$.  Following \cite[Thm 3.2(3) and
  \S5.2]{AloGor_jst23}, let $\theta_s$ be such that
  $e^{i \theta_s} \psi_s$ is real for all $s$.  From the criticality
  condition, equation \eqref{eqn-criticality}, we conclude that
  \begin{equation}
    \label{eq:sub_criticality}
    (h_\alpha)_{rs} e^{i(\theta_r - \theta_s)} \in \RR
    \qquad\text{for all }r,s \in V_N.
  \end{equation}
  In other words, $\theta$ defines a gauge transformation
  $e^{i\theta} h_\alpha e^{-i\theta}$ whose restriction to $V_N$ is real
  and whose eigenvector $e^{i \theta} \psi$ is also real.  Without loss
  of generality we can now take $h_\alpha$ to be this representative.

  We now establish that $V_N$ is an admissible support.  We observe
  that $h_{\alpha}|G_N$ is a signing of $h|G_N$ and that
  $\psi_{N}:=\psi|V_N$ is its eigenvector with eigenvalue $\lambda$.
  Assumption \S \ref{ass} \ref{item-subset} implies that this is a
  simple eigenvalue.  Consequently, the induced graph $G_N$ is
  connected: otherwise $\psi_{N}$ (which is nowhere zero on $G_N$ by
  construction) would restrict to an eigenvector on each connected
  component of $G_N$, contradicting the simplicity of $\lambda$.

  To check the last condition of \S\ref{subsec-admissible} we observe
  that if $r\not\in V_N$ has a neighbor in $V_N$, the eigenvalue
  equation satisfied by $\psi$ at $r$ is a planar linkage equation of
  the form
  \eqref{eqn-linkage}.  The linkage is generic by \Cref{lem:generic_linkage}
  and, by \Cref{thm-linkage}, must have at least 3 terms to
  possess a solution.  This finishes the proof of \Cref{thm:main}
  part \eqref{item:Fcritical}.

\section{Spectral shift}

To compute the Morse index in part \eqref{item:MBindex} of
Theorem~\ref{thm:main} we need to establish some auxiliary results.

\subsection{Spectral shift}\label{subsec-spectral-shift}
Let $h$ be a Hermitian matrix on the space $\CC^V$ and let $\psi$ be
an eigenvector of $h$ corresponding to a simple eigenvalue $\lambda =
\lambda_k(h)$.  Assume $V_0 \subset V$ is such that $\psi(v_0) = 0$
for all $v_0 \in V_0$.  Let $V_1$ be the complement of $V_0$.  Note
that we do not require that $\psi$ is everywhere non-vanishing on $V_1$.

Expressing $h \psi = \lambda \psi$ with respect to the
decomposition $\CC^V = \CC^{V_0} \oplus \CC^{V_1}$, we write
\begin{equation}
  \label{eqn-h}
  h\psi =
  \begin{pmatrix}
    A & B \\
    B^* & D
  \end{pmatrix}
  \begin{pmatrix}
    0 \\ \psi_1
  \end{pmatrix}
  = \lambda
  \begin{pmatrix}
    0 \\ \psi_1
  \end{pmatrix}.
\end{equation}
This implies that $D\psi_1 = \lambda \psi_1$ but, in general,
$\lambda= \lambda_k(h)$ will have a different position in the spectrum
of $D$, say $D\psi_1 = \lambda_{k'}(D) \psi_1$.  The difference $k-k'$
is called the {\em spectral shift}; by the Cauchy interlacing theorem
it is an integer between $0$ and $|V_0|$.  Spectral shift appears
in the formula \eqref{eq:MBindex} for the Morse index and in this
section we seek an analytical handle on it.

If the assumption of eigenvalue simplicity is dropped (which is
\emph{not} the case needed in \Cref{thm:main}), the spectral shift is
defined more robustly using the number of eigenvalues \emph{below}
$\lambda$.  If $M$ is a Hermitian matrix we write
$n_+(M), n_-(M), n_0(M)$ for the number of positive, negative and zero
eigenvalues respectively.  Then the {\em spectral shift} is defined as
\begin{equation}
  \label{eq:spec_shift_def}
  n_-(h-\lambda I) - n_-(D-\lambda I).
\end{equation}

\subsection{Pseudoinverse} \label{subsec-pseudoinverse} Recall that
the Moore--Penrose pseudoinverse $M^+$ of a matrix $M$ is
uniquely determined by the four conditions that $M^+MM^+ = M^+$,
$MM^+M = M$, and $MM^+$, $M^+M$ are Hermitian.
If $T$ is an invertible
matrix then checking the four conditions proves that
\begin{equation}
  \label{eqn-TMT}
  \left(TMT^*\right)^+ = \left(T^{-1}\right)^*M^+T^{-1}.
\end{equation}

If $M$ is itself Hermitian, which is the setting in which we will use $M^+$,
the pseudoinverse is the inverse of $M$ on the orthogonal complement of $\ker M$,
extended to the rest of the space by zero.  In particular,
\begin{equation}
  \label{eq:MP_index}
  n_*\left(M^+\right) = n_*(M),
  \qquad * \in \{+,-,0\}.
\end{equation}
Furthermore, in the
Hermitian case, $M^+ M = M
M^+$ is the orthogonal projector onto $\Ran M$.

\begin{prop}
  \label{prop-compression}
  Let $h$ be a Hermitian matrix with a block decomposition as in
  \eqref{eqn-h} above, and an eigenvector $\psi = (0,\psi_1)$ with
  eigenvalue $\lambda$.  Assume $\ker(D-\lambda I) \subset \ker B$
  (which holds, in particular, if $\psi_1$ is a simple eigenvector of
  $D$), and denote by $R_0(\lambda)$ the compression\footnote{The
    compression of a linear operator $T:\CC^m \to \CC^m$ to a subspace
    $V$ is the composition
    \[ V \overset{\iota}{\longrightarrow} \CC^m
      \overset{T}{\longrightarrow} \CC^m \overset{P}{\longrightarrow}
      V\] where $\iota$ is the inclusion and $P=\iota^*$ is the
    orthogonal projection.  It is the (1,1) component of the matrix of
    $T$ with respect to the decomposition $V \oplus V^{\perp}= \CC^n$
  } of $(h - \lambda I)^+$ to $V_0$.  Then
  \begin{equation}
    \label{eq:spec_shift}
    n_*\big(R_0(\lambda)\big)
    = n_*(h-\lambda I) - n_*(D-\lambda I),
    \qquad * \in \{+,-,0\}.
  \end{equation}
\end{prop}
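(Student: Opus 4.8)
The plan is to reduce to $\lambda=0$ and then diagonalize $h-\lambda I$ by an explicit $*$-congruence whose diagonal blocks are $D-\lambda I$ and a generalized Schur complement. First, replacing $h$ by $h-\lambda I$ and $D$ by $D-\lambda I$, we may assume $\lambda=0$; the hypothesis becomes $\ker D\subset\ker B$, and $R_0(\lambda)$ becomes the $(0,0)$-block of $h^+$, which is exactly what ``compression to $V_0$'' means. (The eigenvector $\psi$ plays no role beyond guaranteeing this kernel inclusion, so I would argue directly from $\ker D\subset\ker B$.) Set $S:=A-BD^+B^*$, a Hermitian ``generalized Schur complement'', and $T:=\left(\begin{smallmatrix} I & BD^+\\ 0 & I\end{smallmatrix}\right)$, which is invertible with $T^*=\left(\begin{smallmatrix} I & 0\\ D^+B^* & I\end{smallmatrix}\right)$ since $D^+$ is Hermitian.

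The crux is the block factorization $h=T\,\diag(S,D)\,T^*$. Multiplying out the right-hand side, the $(0,0)$-entry is $S+BD^+DD^+B^*=S+BD^+B^*=A$ (using $D^+DD^+=D^+$), the $(1,1)$-entry is $D$, and the off-diagonal entries are $BD^+D$ and $DD^+B^*$. This is the one step where the hypothesis is used, and the one I expect to be the main obstacle: since $D$ is Hermitian, $D^+D=DD^+$ is the orthogonal projector onto $\Ran D=(\ker D)^\perp$, so $\ker D\subset\ker B$ forces $BD^+D=B$ and, taking adjoints, $DD^+B^*=B^*$; hence the off-diagonal entries are $B$ and $B^*$ and the factorization holds. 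When $D$ is invertible this is the classical Schur factorization; all the care goes into checking that the pseudoinverse version survives the singularity of $D$, which is precisely what $\ker D\subset\ker B$ buys.

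Granting the factorization, the conclusion reads off in two lines. Sylvester's law of inertia, applied to the congruence $h=T\,\diag(S,D)\,T^*$, gives $n_*(h)=n_*\big(\diag(S,D)\big)=n_*(S)+n_*(D)$, hence $n_*(S)=n_*(h)-n_*(D)$. On the other hand $\diag(S,D)$ is Hermitian with pseudoinverse $\diag(S^+,D^+)$, so \eqref{eqn-TMT} applied to $h=T\,\diag(S,D)\,T^*$ yields $h^+=(T^{-1})^*\diag(S^+,D^+)\,T^{-1}$; since $T^{-1}=\left(\begin{smallmatrix} I & -BD^+\\ 0 & I\end{smallmatrix}\right)$, multiplying out shows the $(0,0)$-block of $h^+$ is exactly $S^+$, i.e.\ $R_0(0)=S^+$. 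Finally $n_*(S^+)=n_*(S)$ by \eqref{eq:MP_index}, so $n_*(R_0(0))=n_*(S)=n_*(h)-n_*(D)$; undoing the shift by $\lambda I$ recovers \eqref{eq:spec_shift}.
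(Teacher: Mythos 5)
Your proof is correct and follows essentially the same route as the paper's: the same reduction to $\lambda=0$, the same congruence $h=T\,\mathrm{diag}(A-BD^+B^*,D)\,T^*$ justified by $BD^+D=B$ from $\ker D\subset\ker B$, Sylvester's law for the inertia count, and the pseudoinverse identity \eqref{eqn-TMT} to read off the $(0,0)$-block of $h^+$. Your remark that $\psi$ is used only to guarantee $\ker D\subset\ker B$ is accurate and consistent with how the paper's proof proceeds.
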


\begin{proof}
  Without loss of generality we may assume $\lambda = 0$.  The result will follow from two facts:
  \begin{enumerate}
  \item\label{item:Haynsworth} An extension of the Haynsworth identity
    \cite{Hay_laa68} to non-invertible $D$, cf.\
    \cite[App.~A]{BerCanCoxMar_paa22},
      \begin{equation}\label{eq:Haynsworth}
        n_*(h) = n_*(D) + n_*(A-BD^+B^*),
        \qquad * \in \{+,-,0\}.
      \end{equation}
    \item\label{item:Compression} The compression of $h^+$ to $V_0$ is $(A-BD^+B^*)^+$.
  \end{enumerate}
  These two facts give the desired equality:
  \begin{equation}
    \label{eq:Haynsworth_used}
    n_*\left( (A-BD^+B^*)^+ \right)
    = n_*\left( A-BD^+B^* \right)
    = n_*(h) - n_*(D).
  \end{equation}
  
  To establish \eqref{eq:Haynsworth}, the usual proof of the Haynsworth
  identity, define
  \[
    T =
    \begin{pmatrix}
      I & BD^+ \\
      0 & I
    \end{pmatrix}
    \qquad 
    M =
    \begin{pmatrix}
      A - BD^+B^* & 0 \\
      0 & D
    \end{pmatrix},
  \]
  and calculate
  \[
    TMT^* =
    \begin{pmatrix}
      A - BD^+B^* + BD^+DD^+B^* & BD^+D \\
      DD^+B^* & D
    \end{pmatrix}
    = h.
  \]
  Here we used $D^+DD^+ = D^+$ (\S \ref{subsec-pseudoinverse}) and
  $BD^+D = B(I-P) = B$, where $P$ is the projection onto
  $\ker D \subset \ker B$, and therefore $BP = 0$.  Since $T$ is
  invertible, \eqref{eq:Haynsworth} follows from the Sylvester Law of
  Inertia.

  For item~\eqref{item:Compression} the compression of $h^+$ is its (1,1) component. Use \eqref{eqn-TMT} to find
  \begin{equation*}
    h^+ = 
    \begin{pmatrix}
      I & 0 \\
      - D^+B^* & I
    \end{pmatrix}
    \begin{pmatrix}
      (A - BD^+B^*)^+ & 0 \\
      0 & D^+
    \end{pmatrix}
    \begin{pmatrix}
      I & -BD^+ \\
      0 & I
    \end{pmatrix}
    =
    \begin{pmatrix}
      (A - BD^+B^*)^+ & * \\
      * & *
    \end{pmatrix}.
  \end{equation*}
\end{proof}

\section{The Hessian along a critical manifold \texorpdfstring{$F$}{}}
\label{sec:theHessian}

This section proves Theorem \ref{thm:main},
part~\ref{item:MBindex} by computing the Hessian of the eigenvalue
$\lambda_{k}: \cM_h \to \RR$ at a critical point $[h_\alpha]$. 
Recall the setting of \Cref{thm:main}:
\begin{itemize}
\item The matrix $h\in\cS(G)$ is generic in the sense of \S\ref{ass}.
\item The $k$-th eigenvalue $\lambda=\lambda_{k}(h_\alpha)$ of the
  matrix $h_\alpha\in[h_\alpha]\in\cM_h$ is simple, with the
  normalized eigenvector $\psi$.
\item The eigenvector $\psi$ is supported on $V_{N}$, which has been
  shown \S\ref{subsec-proof-topology} to be an admissible support
  \S\ref{subsec-admissible}.  The corresponding partition $V =
  V_N \sqcup V_{ZN} \sqcup V_{ZZ}$ gives rise to induced
  subgraphs $G_N$, $G_{ZN}$ and $G_{ZZ}$.
\item Denote $h_N = h_\alpha | G_N$ and $\psi_N= \psi|V_N$; the
  vector $\psi_N$ is an eigenvector of $h_N$ with eigenvalue
  $\lambda$.
\item The criticality condition \eqref{eqn-criticality} holds:
  $(h_\alpha)_{rs}\psi_{r}\bar{\psi}_{s}\in\RR$ for all edges
  $(rs)\in E$.  This implies, see \eqref{eq:sub_criticality}, that the
  representative $h_\alpha$ may be chosen so that $\psi_N$ and $h_N$
  are real.
\item By assumption, $\lambda$ is a simple eigenvalue of $h_N$.  
  Assume also that $\lambda$ is not in the spectrum of the
  restriction $h_{\alpha}|G_{ZZ}$. Denote the number of
  eigenvalues of $h_{\alpha}|G_{ZZ}$ that are smaller than $\lambda$
  by $k_{ZZ}$
\end{itemize}

Although we are interested in the Hessian of $\lambda_{k}$ as a function on the quotient space $\tcM_h = \cP_h / \sim$, it turns out to be beneficial (as seen in \cite{AloGor_jst23,Col_apde13}) to compute the Hessian on the total space $\cP_h$.

The latter Hessian is a bilinear form on $\cA(G)$, and the
computation is based on finding a ``nice'' decomposition of $\cA(G)$
in which the Hessian is block-diagonal.  The appropriate decomposition
turns out to be based on the spanning tree chosen in
\S\ref{subsec:tree_choice}, and is of the form \eqref{eqn-direct-sum}
with the subspace $W_N$ defined in the next subsection.

\subsection{Properties of the first derivative \texorpdfstring{$B$}{}}
The derivative of $\lambda_{k}$ at the point $h_\alpha\in \cP_h$
is given by the Hellman--Feynman formula (see \Cref{sec-derivatives}),
\begin{equation}
  \label{eq:HF_formula}
  \partial_\gamma \lambda_k
  = \left<\psi, \partial_\gamma h_\alpha \psi\right>
  = \left<\psi, B \gamma\right>,
  \qquad \gamma \in T_{h_\alpha} \cP_h \simeq \cA(G),
\end{equation}
where
\begin{equation}
  \label{eq:B_op_def}
  B: \cA(G) \to \CC^V,
  \qquad
  B\gamma:= \partial_\gamma \left(h_\alpha \psi\right)
  = \frac{d}{dt} \Big[ h_{\alpha+t\gamma} \psi \Big]_{t=0}.
\end{equation}
Note that $B$ is an operator
from a real space to a complex one.  Explicitly,
\begin{equation}
  \label{eqn-B}
  (B\gamma)_r
  = \frac{d}{dt} \sum_{\{s: r \sim s \} } e^{it\gamma_{rs}}
  \left(h_\alpha\right)_{rs} \psi_s
  = i \sum_{\{s: r \sim s \} } \gamma_{rs} \left(h_\alpha\right)_{rs}\psi_s.
\end{equation}
Now define the subspace $W_{N}\subset\cA_{N}$ to be the kernel of the restriction $B|\cA_{N}$,
\begin{equation}
  \label{eq:WN_def}
  W_N :=\{\gamma\in\cA_{N}\ :\ B\gamma=0\}.  
\end{equation}

\begin{lem}\label{lem:kerB} The following decomposition holds:
\begin{equation}
  \label{eq:WN_decomposition_again}
  \cA(G)
  = W_N \oplus  \tcA_{ZN} \oplus \tcA_{ZZ} \oplus d\RR^V. 
\end{equation}
\end{lem}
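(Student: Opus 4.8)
The plan is to deduce the decomposition from Proposition \ref{prop:W_replacement}, which already does most of the work: it says that \emph{any} subspace $W_N \subset \cA_N$ complementary to $d_N\RR^{V_N}$ yields the claimed four-term decomposition of $\cA(G)$. So the only thing left to verify is that the specific $W_N = \ker(B|\cA_N)$ defined in \eqref{eq:WN_def} satisfies the hypothesis \eqref{eq:AN_decomposition}, i.e.\ that
\begin{equation*}
  \cA_N = W_N \oplus d_N\RR^{V_N}.
\end{equation*}
First I would compute $B$ on a coboundary $d_N\theta$ with $\theta\in\RR^{V_N}$ using the explicit formula \eqref{eqn-B}: since $h_\alpha\psi = \lambda\psi$ and, by the criticality normalization, $h_\alpha|G_N = h_N$ is real with real eigenvector $\psi_N$ nonvanishing on $V_N$, one gets $B(d_N\theta)_r = i\sum_{s\sim r}(\theta_r-\theta_s)(h_N)_{rs}\psi_s = i\theta_r (h_N\psi_N)_r - i(h_N(\theta\psi_N))_r$ on vertices $r\in V_N$, which I would rewrite as $i\,\theta_r\lambda\psi_r - i(h_N D_\theta \psi_N)_r$ where $D_\theta$ is the diagonal matrix with entries $\theta_s$. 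The upshot is that $B(d_N\theta) = i(\lambda - h_N)D_\theta\psi_N$ (viewed in $\CC^{V_N}$, and zero elsewhere). Since $\psi_N$ is nowhere zero on $V_N$, the map $\theta\mapsto D_\theta\psi_N$ is injective on $\RR^{V_N}$, and since $\lambda$ is a \emph{simple} eigenvalue of $h_N$ (genericity), $\ker(\lambda - h_N) = \RR\psi_N$ is one-dimensional; the preimage $D_\theta\psi_N \in \RR\psi_N$ forces $\theta$ to be constant, i.e.\ $\theta \in \ker d_N$. Hence $B|d_N\RR^{V_N}$ is injective (its kernel is exactly $\ker d_N$, which maps to $0$ in $\cA_N$ anyway), which gives $W_N \cap d_N\RR^{V_N} = \{0\}$.

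For the dimension count I would argue that $\dim W_N + \dim d_N\RR^{V_N} = \dim\cA_N$. We have $\dim d_N\RR^{V_N} = |V_N|-1$ since $\ker d_N$ is the constants. By rank–nullity, $\dim W_N = \dim\cA_N - \rank(B|\cA_N)$, so it suffices to show $\rank(B|\cA_N) = |V_N|-1$. The previous paragraph already shows $\rank(B|\cA_N) \ge |V_N|-1$ (the image contains $B(d_N\RR^{V_N})$, which has dimension $|V_N|-1$). For the reverse inequality I would observe that $\langle\psi, B\gamma\rangle = \partial_\gamma\lambda_k$ must actually behave well, but more directly: $B\gamma$ for $\gamma\in\cA_N$ always lies in $\CC^{V_N}$ and is orthogonal to $\psi$ — indeed $\langle\psi, B\gamma\rangle = i\sum_{rs}\gamma_{rs}(h_\alpha)_{rs}\psi_s\bar\psi_r$, and using antisymmetry of $\gamma$ together with the fact that $(h_\alpha)_{rs}\psi_s\bar\psi_r$ is real (criticality) one sees the sum is purely imaginary times a real antisymmetric pairing, hence zero. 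Actually, the cleanest route: since $\psi_N$ is real, $\langle\psi_N, B\gamma\rangle \in i\RR \cap \RR = \{0\}$ is not quite it — I'd instead note $\overline{\langle\psi, B\gamma\rangle} = -\langle\psi, B\gamma\rangle$ from the explicit formula and realness of the coefficients, so $\langle\psi,B\gamma\rangle$ is imaginary; but it is also the real number $\partial_\gamma\lambda_k$ restricted appropriately — at a critical point this vanishes, but we want it for all $\gamma\in\cA_N$, not just at criticality. The honest bound is: $\Ran(B|\cA_N)$ is contained in $\{v\in\CC^{V_N} : \langle\psi_N, v\rangle \in i\RR\}$... this needs care. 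The more robust argument for $\rank(B|\cA_N)\le |V_N|-1$ is to note that $B|\cA_N = i\,(\text{multiplication structure})$ and factor through: writing $v_s := \psi_s \neq 0$, the operator $\gamma\mapsto (\sum_s \gamma_{rs}(h_N)_{rs}v_s)_r$ has image inside $\Ran(\text{incidence-type map})$ whose rank is bounded by $|V_N|-1$ because each $\gamma_{rs}$ contributes oppositely to rows $r$ and $s$; concretely the image consists of vectors $w\in\CC^{V_N}$ with $\sum_r w_r/(\text{something})=0$, giving codimension $\ge 1$. I expect \textbf{this rank bound to be the main obstacle} — it amounts to recognizing $B|\cA_N$, after diagonal rescaling by the nonvanishing $\psi$, as essentially $i$ times a weighted incidence operator of the connected graph $G_N$, whose rank is precisely $|V_N|-1$.

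Once both $W_N\cap d_N\RR^{V_N}=\{0\}$ and $\dim W_N + \dim d_N\RR^{V_N} = \dim\cA_N$ are in hand, the direct sum \eqref{eq:AN_decomposition} follows, and then Proposition \ref{prop:W_replacement} immediately yields \eqref{eq:WN_decomposition_again}. I would present the argument in exactly this order: (1) explicit formula for $B$ on $\cA_N$ and on coboundaries; (2) injectivity of $B$ on $d_N\RR^{V_N}$ modulo constants, using simplicity of $\lambda$ and non-vanishing of $\psi_N$, giving the transversality; (3) the rank identity $\rank(B|\cA_N)=|V_N|-1$ via the weighted-incidence-operator interpretation, giving the dimension match; (4) invoke Proposition \ref{prop:W_replacement}.
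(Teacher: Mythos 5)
Your overall plan is correct and matches the paper's strategy: both reduce the claim to showing $\cA_N = W_N \oplus d_N\RR^{V_N}$ and then invoke Proposition~\ref{prop:W_replacement}. The difference is that the paper simply cites \cite[Lem.~5]{Col_apde13} or \cite[\S 5.4, \S 5.7]{AloGor_jst23} for this decomposition, whereas you re-derive it from scratch. Your re-derivation is essentially sound, but the presentation of the rank bound is hesitant where it need not be.

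Your transversality step is clean and correct: on $V_N$ you get $B(d_N\theta) = i(\lambda - h_N)D_\theta\psi_N$, and simplicity of $\lambda$ together with the nowhere-vanishing of $\psi_N$ forces $\theta$ to be constant, so $B|d_N\RR^{V_N}$ is injective.

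For the rank bound $\rank(B|\cA_N)\le |V_N|-1$, the argument you started with (and then abandoned) is actually the right one and closes cleanly. For $\gamma\in\cA_N$ the entries $(B\gamma)_r = i\sum_{s\sim r}\gamma_{rs}(h_N)_{rs}\psi_s$ lie in $i\RR$, since $\gamma$, $h_N$, and $\psi_N$ are all real in the chosen gauge; so $\Ran(B|\cA_N)\subset i\RR^{V_N}$. Moreover
\begin{equation*}
\langle \psi_N, B\gamma\rangle
= i\sum_{r,s\in V_N,\ r\sim s}\psi_r\,\gamma_{rs}\,(h_N)_{rs}\,\psi_s
= 0,
\end{equation*}
because for each edge $\{r,s\}$ the two ordered contributions cancel: $\gamma_{rs}$ is antisymmetric while $\psi_r(h_N)_{rs}\psi_s$ is symmetric (all real). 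Thus the image lies in $\{v\in i\RR^{V_N} : \langle\psi_N,v\rangle=0\}$, which has real dimension $|V_N|-1$; combined with the $\ge$ from the transversality step this gives equality, and rank–nullity finishes the dimension count. Note the ingredient used here is not the criticality condition \eqref{eqn-criticality} per se but the stronger real-gauge normalization that $h_N$ and $\psi_N$ are real (which the paper arranges in \eqref{eq:sub_criticality}).

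For comparison, the approach sketched in the paper's footnote is slicker: $iB|\cA_N$ is identified with the adjoint $d_N^*$ of the coboundary operator of $G_N$ with respect to a suitably weighted inner product (the weights being $|h_{rs}\psi_r\psi_s|$ on edges and $\psi_r^2$ on vertices, after the special case $h_{rs}=-1$, $\psi\equiv 1$). Then $\cA_N = \ker d_N^* \oplus \Ran d_N = W_N \oplus d_N\RR^{V_N}$ drops out at once, packaging both your transversality and your dimension count in a single orthogonal-decomposition fact. Your computation is exactly the unpacking of this observation.
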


\begin{proof}
  The claim follows from \Cref{prop:W_replacement} and the fact that
  \[\cA_{N}  = W_{N}\oplus d_{N}\RR^{V_{N}},\]
  where $d_{N}:\RR^{V_{N}}\to\cA_{N}$ is the coboundary operator of
  the induced subgraph $G_N$.  The latter fact has been established in
  \cite[Lem.~5]{Col_apde13} or \cite[\S 5.4(3) and \S
  5.7]{AloGor_jst23}\footnote{In these references, $W_N$ is denoted
    $\ker d^*$ and $V_H$ respectively. To give an idea of the proof,
    note that in the special case $(h_\alpha)_{rs} = -1$ and
    $\psi_s=1$, the operator $iB|\cA_N$ is the adjoint of $d_N$ with
    respect to the standard inner products; the general case follows
    by customizing an inner product to retain the latter property.},
  which are applicable because $B|\cA_{N}$ is defined through
  \eqref{eqn-B} only in terms of the real $h_{N}$ and its
  non-vanishing eigenvector $\psi|V_{N}$.
\end{proof}

\begin{prop}
  \label{prop:B_properties}
  In the decomposition of the vectors spaces  
  \[\cA(G)=W_{N} \oplus \tcA_{ZN} \oplus \tcA_{ZZ} \oplus d\RR^V,
    \quad\text{and}\quad
    \CC^{V}=\CC^{V_{N}}\oplus\CC^{V_{ZN}}\oplus\CC^{V_{ZZ}}, \]
  the operator  $B: \cA(G) \to \CC^V$ has the following block structure:
  \begin{equation}
    \label{eq:B-blocks}
    B = \quad
    \begin{blockarray}{ccccc}
      &{\scriptstyle W_{N}} & {\scriptstyle \tcA_{ZN}}
      & {\scriptstyle \tcA_{ZZ}} & {\scriptstyle d\RR^V} \\
      \begin{block}{c[cccc]}
        {\scriptstyle \CC^{V_{N}}}  & 0 & 0 & 0 & * \\ 
        {\scriptstyle \CC^{V_{ZN}}} & 0 & * & 0 & * \\
        {\scriptstyle \CC^{V_{ZZ}}} & 0 & 0 & 0 & 0 \\
      \end{block}
    \end{blockarray}.
 \end{equation} 
Furthermore,  
  \begin{enumerate}  
  \item \label{item-Bdtheta} $B$ maps $d\RR^V$ to
    $\CC^{V_N}\oplus\CC^{V_{ZN}}$ as follows: for
    $\theta \in \RR^V$,
    \begin{equation}
      \label{eq:Bdtheta}
      Bd\theta = i(h_\alpha-\lambda)\Theta\psi,
      \qquad
      \Theta = \diag(\theta_1, \theta_2, \cdots, \theta_n).
    \end{equation}

    \item \label{item:Bsurjective}
      $B$ maps $\tcA_{ZN}$ surjectively onto $\CC^{V_{ZN}}$.
    \end{enumerate}
\end{prop}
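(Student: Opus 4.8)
The plan is to extract every entry of the block form \eqref{eq:B-blocks} directly from the coordinate formula \eqref{eqn-B}, $(B\gamma)_r = i\sum_{s\sim r}\gamma_{rs}(h_\alpha)_{rs}\psi_s$, using only that $\psi$ is supported on $V_N$ (so $\psi_s=0$ for $s\notin V_N$) together with the supports $E_{NN}$, $E_{ZN}^\f\subset E_{ZN}$, $E_{ZZ}^\f\subset E_{ZZ}$, $E$ of the four summands $W_N\subset\cA_N$, $\tcA_{ZN}$, $\tcA_{ZZ}$, $d\RR^V$. Four observations produce all the zeros: (a) $B$ kills $W_N$, since $W_N=\ker(B|\cA_N)$ by \eqref{eq:WN_def}, so the whole column $W_N$ vanishes; (b) for $\gamma\in\tcA_{ZZ}$ every potentially nonzero term of $(B\gamma)_r$ needs an edge $(rs)\in E_{ZZ}$ with $\psi_s\ne 0$, which is impossible as edges in $E_{ZZ}$ have no endpoint in $V_N$, so the column $\tcA_{ZZ}$ vanishes; (c) a vertex $r\in V_{ZZ}$ has no neighbour in $V_N$, so the whole row $\CC^{V_{ZZ}}$ vanishes; (d) for $\gamma\in\tcA_{ZN}$ and $r\in V_N$, any incident edge $(rs)\in E_{ZN}$ has $s\notin V_N$, whence $(B\gamma)_r=0$, killing the $(1,2)$ block. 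The remaining entries $(1,4)$, $(2,4)$, $(2,2)$ are merely not asserted to vanish.

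For part \eqref{item-Bdtheta}, I would expand $(Bd\theta)_r = i\sum_{s\sim r}(\theta_r-\theta_s)(h_\alpha)_{rs}\psi_s$ as $i\theta_r\sum_{s\sim r}(h_\alpha)_{rs}\psi_s - i\sum_{s\sim r}\theta_s(h_\alpha)_{rs}\psi_s$, substitute the eigenvalue identity $\sum_{s\sim r}(h_\alpha)_{rs}\psi_s = (\lambda-(h_\alpha)_{rr})\psi_r$ into the first sum, and reassemble the remainder into the matrix expression of \eqref{eq:Bdtheta}; conceptually this is just $\tfrac{d}{dt}\big|_{0}\bigl(e^{it\Theta}h_\alpha e^{-it\Theta}\bigr)\psi$ via \eqref{eq:dtheta_action}. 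Since $\Theta\psi$ is supported on $V_N$, the image lands in $\CC^{V_N}\oplus\CC^{V_{ZN}}$, matching the last column of \eqref{eq:B-blocks}.

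The one step with genuine content is the surjectivity claimed in part \eqref{item:Bsurjective}. Every edge of $E_{ZN}$ has exactly one endpoint in $V_{ZN}$, so $\tcA_{ZN}$ is the orthogonal direct sum, over $r\in V_{ZN}$, of the antisymmetric matrices supported on $\{(rs)\in E_{ZN}^\f\}$, and $B$ carries the $r$-summand into the single coordinate $\CC^{\{r\}}$; it therefore suffices to show, for each $r\in V_{ZN}$, that $\gamma\mapsto(B\gamma)_r = i\sum_{(rs)\in E_{ZN}^\f}\gamma_{rs}(h_\alpha)_{rs}\psi_s$ is onto $\CC$, equivalently that the numbers $z_s := (h_\alpha)_{rs}\psi_s$, $(rs)\in E_{ZN}^\f$, do not all lie on one line through the origin. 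By admissibility $r$ has at least three neighbours in $V_N$, exactly one of which --- call it $r'$ --- carries the spanning-tree edge dropped from $\tcA_{ZN}$, and the eigenvalue equation at $r$ (where $\psi_r=0$) is the closed-polygon relation $\sum_{s\sim r,\,s\in V_N}z_s = 0$ with side lengths $|z_s| = |h_{rs}\psi_s|>0$. If all these $z_s$ were collinear we would get $\sum_s \pm|z_s| = 0$, contradicting the genericity of the linkage \eqref{eq:evec_generic_linkage} from \Cref{lem:generic_linkage}; and dropping the single term $z_{r'}$ cannot restore collinearity, since $z_{r'}=-\sum_{s\ne r'}z_s$ would then be collinear with the rest. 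Hence $\{z_s:(rs)\in E_{ZN}^\f\}$ --- at least two vectors, not all on a line --- spans $\CC$ over $\RR$, so the $r$-block of $B$ is onto, and summing over $r$ gives $B(\tcA_{ZN})=\CC^{V_{ZN}}$. (This surjectivity is precisely the transversality responsible for each linkage manifold $M(b_r)$ of \Cref{thm:main}\eqref{item:Ftopology} being smooth, cf.\ \Cref{thm-linkage}.)

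I expect nothing here to be conceptually deep: the block structure is bookkeeping with edge supports, and part \eqref{item-Bdtheta} is one line of differentiation against the eigenvalue equation. The only genuine obstacle is the last point above --- ruling out that deleting the spanning-tree edge at a vertex of $V_{ZN}$ collapses the image of $B|\tcA_{ZN}$ onto a real line --- and that is exactly where the genericity hypotheses \S\ref{ass}, entering through \eqref{eq:evec_generic_linkage}, are used in an essential way.
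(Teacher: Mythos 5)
Your proof is correct and follows essentially the same route as the paper: the four vanishing blocks are established by the same support bookkeeping, part \eqref{item-Bdtheta} by the same commutator/coordinate computation, and part \eqref{item:Bsurjective} by the same reduction to a single vertex $r\in V_{ZN}$ followed by the linkage transversality argument. The one difference is cosmetic: where the paper cites \Cref{thm-linkage}\eqref{item:surjective_differential}, you inline its proof, spelling out directly that collinearity of $\{z_s:(rs)\in E_{ZN}^\f\}$ would force $z_{r'}$ onto the same line (via the closed-polygon relation) and hence contradict \eqref{eq:evec_generic_linkage} --- this is exactly the content of the cited appendix lemma, just unpacked in situ. (Incidentally, if you carry out the coordinate expansion you sketch for \eqref{item-Bdtheta}, you will find $Bd\theta=-i(h_\alpha-\lambda)\Theta\psi$, the opposite sign to \eqref{eq:Bdtheta}; this is a sign slip in the paper, whose derivation uses $e^{-it\Theta}h_\alpha e^{it\Theta}$ contrary to \eqref{eq:dtheta_action}, and it is harmless since only quadratic expressions in $Bd\theta$ are used downstream.)
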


We remark that part (\ref{item:Bsurjective}) is deeper than it may
seem: it establishes that a linear operator maps a real space
surjectively onto a complex space.

\begin{proof} First, we prove the block structures of \eqref{eq:B-blocks}:
\begin{itemize}
\item The first column is zero by the definition of $W_{N}$.
\item If $\gamma\in\tcA_{ZZ}$ then $B\gamma = 0$ (i.e.\ the third
  column is zero): indeed, $\gamma_{rs}$ is non-zero only if $(rs)\in
  E_{ZZ}$, and then $\psi_r = \psi_s=0$; therefore all terms in \eqref{eqn-B}
  are zero.
\item If $r \in V_{ZZ}$ then $(B\gamma)_{r}=0$ (i.e.\ the third
  row is zero): indeed, $\psi_s=0$ for any $s\sim r$ by the definition of $V_{ZZ}$.
\item If $\gamma\in\tcA_{ZN} $ and $r\in V_{N}$ then $(B\gamma)_r=0$:
  indeed, $\psi_s$ is non-zero only if $s\in V_{N}$, and then $\gamma_{rs}=0$.
\end{itemize}

To prove (\ref{item-Bdtheta}), differentiate
$td\theta * h_\alpha = e^{-it\Theta}h_\alpha e^{it\Theta}$ to get
\begin{equation}
  \label{eqn-Bdf}
  Bd\theta = i(h_\alpha \Theta - \Theta h_\alpha)\psi
  = i(h_\alpha-\lambda)\Theta \psi.
\end{equation}

To prove (\ref{item:Bsurjective}),  notice that the different
components of the vector $B\gamma\in\CC^{V_{ZN}}$, for
$\gamma\in\tcA_{ZN}$, depends on disjoint sets of $\gamma$
coordinates. Therefore, it is enough to show that for any
$r\in V_{ZN}$ the map $\gamma\mapsto (B\gamma)_{r}$ is surjective onto
$\CC$.  This map from $\tcA_{ZN}$ to $\CC$ is, by definition, the
derivative of $f(\alpha'):=(h_{\alpha'}\psi)_{r}$ at the point
$\alpha$. The equation $f(\alpha')=0$ is a linkage equation, which is
moreover generic by \Cref{lem:generic_linkage}.  By
\Cref{thm-linkage}(\ref{item:surjective_differential}), the derivative
of $f$ is surjective onto $\CC$ at every zero point, including the
point $\alpha'=\alpha$, where $f(\alpha)=\lambda \psi_{r}=0$.
\end{proof}

\subsection{The Hessian}
The Hessian of $\lambda_k(h_\alpha)$ is a bilinear form on $\cA(G)$,
which can be expressed as\footnote{For the reader's convenience,
  the calculation is recalled in \Cref{sec-derivatives} and equation \eqref{eqn-Hessian1}.}
\begin{equation}
  \label{eqn-Hessian}
  \Hess \lambda_k(h_\alpha)   =  \Qout + \Qin,
\end{equation}
where the bilinear forms $\Qout, \Qin: \cA(G) \to \RR$ are defined as follows:
\begin{align}
  \label{eq:Qout_def}
  \Qout(\gamma,\delta)
  &= -2\Re \left< B\delta, (h_\alpha-\lambda)^+B\gamma
    \right>_{\CC^V}, \\
  \nonumber
  \Qin(\gamma,\delta)
  &= \left< \psi, \partial^2_{\gamma,\delta}h_\alpha\psi
    \right>_{\CC^V}\\
  \label{eq:Qin_def}  
  &=- \sum_{(rs)\in E} \gamma_{rs}\delta_{rs}(h_\alpha)_{rs}\psi_r\bar \psi_s.
\end{align}

\begin{example}
  \label{ex:Hessian_calc}
  Before analyzing the Hessian in the general case, we illustrate the
  formula \eqref{eqn-Hessian}--\eqref{eq:Qin_def} in the particular
  case of the constant eigenvector of the standard
  Laplacian:
  \begin{equation}
    \label{eq:standardLaplacian}
    h_{rs} =
    \begin{cases}
      -1, &r\sim s,\\
      \deg(r), & r=s,\\
      0, &\text{otherwise},
    \end{cases}
    \qquad \qquad
    \lambda=\lambda_1 = 0,
    \qquad
    \psi = \frac1{\sqrt{|V|}}
    \begin{pmatrix}
      1\\
      \vdots\\
      1
    \end{pmatrix}.
  \end{equation}
  Under these simplifying assumptions, the subspace $W_N$ consists of
  1-forms $\gamma$ that satisfy,
  \begin{equation}
    \label{eq:WN_example}
    \sum_{\{s: r \sim s \} } \gamma_{rs} = 0
    \qquad \text{for all } r\in V,
  \end{equation}
  cf.\ \eqref{eqn-B}-\eqref{eq:WN_def}.  The operator on the left hand
  side of \eqref{eq:WN_example} is the dual\footnote{Here we use the
    standard inner products on $\RR^V$ and $\cA(G)$.} $d^*$ of the
  coboundary operator $d: \RR^V \to \cA(G)$, so that $W_N = \ker d^*$.
  The space $W_N$ is the space of flows on the graph, or, using the
  duality between $C_1(G,\RR)$ and $C^1(G, \RR)$, see
  \S\ref{subsec-cohomology}, the space of cycles $H_1(G,\RR)$.

  Using assumptions \eqref{eq:standardLaplacian} in equations
  \eqref{eqn-Hessian}-\eqref{eq:Qin_def}, the Hessian on
  $\cA(G) = \ker d^* \oplus d \RR^V$ is the bilinear form
  \begin{equation}
    \label{eq:Hess_ex_answer}
    \big(\Hess \lambda_1\big) (\gamma+ d\theta, \delta + d \varphi)
    = \big(\Hess \lambda_1\big) (\gamma, \delta)
    = \sum_{(rs)\in E} \gamma_{rs} \delta_{rs} = \gamma \cdot \delta.
  \end{equation}
  Choosing a basis of the space $W_N \simeq H_1(G,\RR)$, the
  Hessian in this basis is the \emph{Gram matrix} of this basis.  If the basis is
  integer-valued, the corresponding matrix is sometimes called the
  \emph{intersection matrix} \cite{KotSun_aam00}; its determinant is
  equal to the number of spanning trees of the graph, see
  \cite[Prop.~26.3]{Big_blms97} or \cite[Sec.~3]{KotSun_aam00}.

  Dropping the condition that $h$ is the standard Laplacian (but
  keeping the requirement that the constant $\psi = (1,\ldots,1)^t$ is
  an eigenvector), the above computation may be repeated to produce the
  \emph{cycle intersection matrix} of Bronski, DeVille and Ferguson
  \cite{BroDevFer_siamjam16}, who used a statement equivalent to the nodal
  magnetic theorem (\Cref{rem:signings}) to obtain the spectral
  position $k$ of a known eigenvector and thus analyze stability
  of a network of nonlinear coupled oscillators.
\end{example}

\begin{prop}
  \label{prop-QplusQ}
  With respect to the decomposition in \eqref{eq:WN_decomposition_again},
  \[\cA(G)
  = W_N \oplus  \tcA_{ZN} \oplus \tcA_{ZZ} \oplus d_G\RR^V,\]  
  the bilinear forms $\Qin$ and $\Qout$ 
  have the following block structure:
  \begin{equation}
    \label{eq:Hessian_terms}
    \Qin =
    \begin{pmatrix}
      \Qin|{W_N} & 0 & 0 & 0 \\
      0 & 0 & 0 & 0\\
      0 & 0 & 0 & 0\\
      0 & 0 & 0 & -\Qout|d\RR^{V}
    \end{pmatrix},
    \qquad
    \Qout =
    \begin{pmatrix}
      0 & 0 & 0 & 0 \\
      0 & \Qout|\tcA_{ZN} & 0 & 0\\
      0 & 0 & 0 & 0\\
      0 & 0 & 0 & \Qout|d\RR^{V}
    \end{pmatrix}.
  \end{equation}
\end{prop}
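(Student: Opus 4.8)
The plan is to verify the claimed block structure of $\Qin$ and $\Qout$ entry by entry, reading off as much as possible from the block structure of $B$ already established in \Cref{prop:B_properties}, from the defining property \eqref{eq:WN_def} of $W_N$, and, for the two genuinely non-trivial pieces, from gauge invariance of $\lambda_k$. First I would dispose of the blocks that vanish for elementary reasons. For $\Qin$: in formula \eqref{eq:Qin_def} the summand attached to an edge $(rs)$ carries the factor $\psi_r\bar\psi_s$, which is zero unless $r,s\in V_N$, so only edges of $E_{NN}$ contribute; since $\tcA_{ZN}$ and $\tcA_{ZZ}$ are supported on $E_{ZN}$ and $E_{ZZ}$ (the partition \eqref{eq:edge_disjoint_union}), $\Qin(\gamma,\delta)=0$ as soon as $\gamma$ or $\delta$ lies in $\tcA_{ZN}\oplus\tcA_{ZZ}$, which clears the second and third rows and columns of $\Qin$. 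For $\Qout$: by \eqref{eq:Qout_def} it factors through $B$, and \eqref{eq:WN_def} together with \eqref{eq:B-blocks} give $B|_{W_N}=0$ and $B|_{\tcA_{ZZ}}=0$, so $\Qout(\gamma,\delta)=0$ whenever $\gamma$ or $\delta$ lies in $W_N$ or $\tcA_{ZZ}$, clearing the first and third rows and columns of $\Qout$.

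Two off-diagonal blocks then remain. For the block $\Qout(\tcA_{ZN},d\RR^V)$, take $\gamma\in\tcA_{ZN}$, $\theta\in\RR^V$. Using \eqref{eq:Bdtheta}, the Hermiticity of $(h_\alpha-\lambda)^+$, and the fact that $(h_\alpha-\lambda)(h_\alpha-\lambda)^+$ is the orthogonal projection $P$ onto $(\CC\psi)^\perp$ (valid since $\lambda$ is simple), the bracket in \eqref{eq:Qout_def} becomes a nonzero scalar multiple of $\langle\Theta\psi, PB\gamma\rangle$, where $\Theta=\diag(\theta_1,\dots,\theta_n)$. At the critical point $h_\alpha$ the differential \eqref{eq:HF_formula} vanishes, i.e.\ $\langle\psi,B\gamma\rangle=0$ for all $\gamma\in\cA(G)$ (indeed $\lambda_k$ on $\cP_h$ is the pullback under $\cP_h\to\tcM_h$ of the function of which this is a critical point, so $\partial_\gamma\lambda_k=0$ for every $\gamma$), hence $PB\gamma=B\gamma$; since $\Theta\psi$ is supported on $V_N$ while $B\gamma\in\CC^{V_{ZN}}$ by \eqref{eq:B-blocks}, the inner product — and with it $\Qout(\gamma,d\theta)$ — vanishes. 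For the block $\Qin(W_N,d\RR^V)$, take $\gamma\in W_N$, $\theta\in\RR^V$: only $E_{NN}$-edges contribute to $\Qin(\gamma,d\theta)$, and there $(d\theta)_{rs}=\theta_r-\theta_s$; rewriting the edge sum as one half of a sum over ordered neighbour pairs in $V_N$ and splitting along the two terms of $\theta_r-\theta_s$, each resulting sum carries, as an inner factor over $s\sim r$, the quantity $\sum_s\gamma_{rs}(h_\alpha)_{rs}\psi_s=-i(B\gamma)_r$ (cf.\ \eqref{eqn-B}), which vanishes for $r\in V_N$ because $\gamma\in W_N$; so $\Qin(\gamma,d\theta)=0$.

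Finally, the identity $\Qin|_{d\RR^V}=-\Qout|_{d\RR^V}$ is, via \eqref{eqn-Hessian}, equivalent to $(\Hess\lambda_k)(d\theta,d\varphi)=0$ for all $\theta,\varphi\in\RR^V$, which holds because $\lambda_k$ is constant on gauge orbits: at the critical point $h_\alpha$ the Hessian is computed along any curve through $h_\alpha$, and the curve $t\mapsto(td\theta)*h_\alpha$ consists of matrices unitarily equivalent to $h_\alpha$ by \eqref{eq:dtheta_action}, so $\Hess\lambda_k(d\theta,d\theta)=0$, and polarization supplies the mixed form. Collecting the five pieces gives \eqref{eq:Hessian_terms}. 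I expect the main obstacle, such as it is, to be the $\Qin(W_N,d\RR^V)$ block: this is the one spot where one must genuinely invoke the kernel description \eqref{eq:WN_def} of $W_N$ rather than just its support, and it is the reason the decomposition of \Cref{lem:kerB} is organized around $W_N$ rather than around an arbitrary complement of $d_N\RR^{V_N}$ in $\cA_N$; this is the step I would write out in full.
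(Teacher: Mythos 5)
Your proof is correct, but your route to the two off-diagonal blocks $\Qin(W_N, d\RR^V)$ and $\Qout(\tcA_{ZN}, d\RR^V)$ is genuinely different from the paper's. After clearing the easy rows and columns exactly as you do (getting $\Qin$ supported on $\{W_N, d\RR^V\}$ and $\Qout$ on $\{\tcA_{ZN}, d\RR^V\}$), the paper does not compute these mixed blocks at all. Instead it observes that the Hessian $\Qin+\Qout$ has \emph{vanishing} fourth row and column by gauge invariance, and then reads off everything else from the disjoint supports: the $(1,4)$ and $(4,1)$ entries of the sum come entirely from $\Qin$ (since $\Qout$ is already known to vanish there), hence those entries of $\Qin$ are zero; the $(2,4)$ and $(4,2)$ entries of the sum come entirely from $\Qout$, hence zero; and the $(4,4)$ entries must cancel, giving $\Qin|d\RR^V = -\Qout|d\RR^V$. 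No pseudoinverse identity, no criticality of the differential, no edge-sum manipulation is needed once the elementary support facts are in place.

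What your approach buys is a more self-contained and explicit verification: you actually see \emph{why} $\Qout(\tcA_{ZN},d\RR^V)=0$ (the criticality $\langle\psi,B\gamma\rangle=0$ and the disjoint images $\Theta\psi\in\CC^{V_N}$ vs.\ $B\gamma\in\CC^{V_{ZN}}$) rather than getting it by cancellation. What the paper's argument buys is economy and robustness: it never has to touch $(h_\alpha-\lambda)^+$ or unwind the edge sum, and it makes transparent that the result is forced by gauge invariance combined with the block structure of $B$ from \Cref{prop:B_properties}. One small imprecision in your $\Qin(W_N, d\RR^V)$ step: after splitting $\theta_r - \theta_s$, the two resulting inner factors are not literally both $\sum_s \gamma_{rs}(h_\alpha)_{rs}\psi_s = -i(B\gamma)_r$; one of them is its complex conjugate (indexed at $s$ rather than $r$). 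The conclusion is unaffected since $\gamma\in W_N$ kills $(B\gamma)_v$ for every $v$, and on $V_N$ one may anyway gauge $h_\alpha$ and $\psi$ to be real, but the symmetry in your edge-sum manipulation is worth writing out carefully. The $(4,4)$ identity via gauge invariance is exactly the paper's argument, though the paper extracts it as one consequence of the stronger ``row/column 4 vanishes'' observation rather than as a separate polarization step.
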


\begin{proof}
  If $\gamma \in \tcA_{ZN} \oplus \tcA_{ZZ}$ then
  $\gamma_{rs}\psi_{s}\bar\psi_{r}=0$ for every edge $(rs)$. This
  shows the second and third rows and columns of $\Qin$ vanish.  If
  $\gamma \in W_N \oplus \tcA_{ZZ}$ then $B\gamma = 0$ by
  \Cref{prop:B_properties}, therefore first and third rows and columns
  of $\Qout$ vanish. The fact that $\lambda_k$ is constant under
  perturbations by $d\theta$ means that the fourth row and column of
  the total Hessian $\Qin+\Qout$ are zero.  To summarize,
  \begin{equation}\label{eq: dR Hess}
    \Qin+\Qout = 
    \begin{pmatrix} * & 0 & 0 & * \\ 0 & 0 & 0 & 0\\
      0 & 0 & 0 & 0\\
      * & 0 & 0 & *\end{pmatrix} +
    \begin{pmatrix} 0 & 0 & 0 & 0 \\ 0 & * & 0 & *\\
      0 & 0 & 0 & 0\\
      0 & * & 0 & *\end{pmatrix}=
    \begin{pmatrix} * & * & * & 0 \\ * & * & * & 0\\
      * & * & * & 0\\
      0 & 0 & 0 & 0\end{pmatrix}.
  \end{equation}
 It follows that the off-diagonal entries marked * in
  $\Qin$ and $\Qout$ also vanish and that their
  $(4,4)$-entries add up to 0.
\end{proof}

\begin{lem}
  \label{lem:Qout_dR_index} The restriction of $\Qout$ to $d\RR^V$ has  \begin{align}
    \label{eq:Qout_on_dR_index}
    &\ind \left( -\Qout | d\RR^V \right) = k_N-1,\\
    \label{eq:Qout_on_dR_kernel}
    & \rank \left( -\Qout | d\RR^V \right) = |V_N|-1.
  \end{align}
\end{lem}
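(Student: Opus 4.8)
The plan is to compute the bilinear form $\Qout|d\RR^V$ explicitly via the formula \eqref{eq:Bdtheta}, which identifies $Bd\theta = i(h_\alpha-\lambda)\Theta\psi$. Substituting $\gamma = d\theta$ and $\delta = d\varphi$ into \eqref{eq:Qout_def} gives
\[
  \Qout(d\theta, d\varphi)
  = -2\Re\left\langle i(h_\alpha-\lambda)\Phi\psi,\ (h_\alpha-\lambda)^+ \cdot i(h_\alpha-\lambda)\Theta\psi\right\rangle.
\]
Since $(h_\alpha-\lambda)^+(h_\alpha-\lambda) = P$ is the orthogonal projection onto $\Ran(h_\alpha - \lambda) = (\CC\psi)^\perp$, and $\Theta\psi$ differs from $P\Theta\psi$ by a multiple of $\psi$ (which $(h_\alpha-\lambda)^+$ kills anyway when composed from the left via the first factor), the form simplifies to $-2\Re\langle (h_\alpha-\lambda)\Phi\psi, \Theta\psi\rangle$ up to the subtlety of which factor the projection lands on; carefully, one gets $\Qout(d\theta,d\varphi) = -2\Re\langle \Phi\psi,(h_\alpha-\lambda)\Theta\psi\rangle$ after moving $(h_\alpha-\lambda)$ across. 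The key observation is that $\Theta\psi$ is supported on $V_N$ only (since $\psi$ vanishes off $V_N$), so only the block $h_N = h_\alpha|G_N$ and the eigenvalue $\lambda=\lambda_{k_N}(h_N)$ enter. Thus we have reduced to the self-contained statement: for the real symmetric matrix $h_N$ with simple eigenvalue $\lambda = \lambda_{k_N}(h_N)$ and nowhere-vanishing eigenvector $\psi_N$, the form $Q(\theta,\varphi) := -2\langle \Theta_N\psi_N, (h_N-\lambda)\Phi_N\psi_N\rangle$ on $\RR^{V_N}$ (with $\Theta_N = \diag(\theta)$) has index $k_N - 1$ and rank $|V_N|-1$.

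For this reduced claim, the kernel is easy: $(h_N-\lambda)\Phi_N\psi_N = 0$ forces $\Phi_N\psi_N \in \ker(h_N-\lambda) = \CC\psi_N$, i.e. $\Phi_N\psi_N = c\psi_N$ componentwise, which (since every entry of $\psi_N$ is nonzero) means $\theta$ is constant. Hence $\ker(Q) = \RR\underline{\mathbf 1}$ and $\rank(Q) = |V_N|-1$, giving \eqref{eq:Qout_on_dR_kernel}. For the index \eqref{eq:Qout_on_dR_index}, the natural route is the change of variables $\eta = \Theta_N\psi_N$, i.e. $\eta_v = \theta_v (\psi_N)_v$; since $(\psi_N)_v \neq 0$ this is a linear isomorphism $\RR^{V_N}\to\RR^{V_N}$ (the diagonal matrix $\diag(\psi_N)$), and $\Phi_N\psi_N$ equals the same $\eta$ because $\Theta_N$ and $\Phi_N$ are both diagonal. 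So $Q$ is congruent to the form $\eta \mapsto -2\langle\eta,(h_N-\lambda)\eta\rangle$ on $\RR^{V_N}$, whose index equals $n_+(h_N-\lambda)$ — the number of eigenvalues of $h_N$ strictly greater than $\lambda$ — which, since $\lambda$ is the $k_N$-th eigenvalue and is simple, equals $|V_N| - k_N$. This does not match $k_N-1$, so I must instead restrict to the hyperplane $\eta \perp \psi_N$: because $\langle \psi_N, (h_N-\lambda)\eta\rangle = \langle (h_N-\lambda)\psi_N, \eta\rangle = 0$ automatically, the form $-2\langle\eta,(h_N-\lambda)\eta\rangle$ is the quadratic form of $(h_N-\lambda)$ acting on its own invariant complement $(\CC\psi_N)^\perp$, wait — that is still $n_+$ on that complement, which is $|V_N|-k_N$. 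Re-examining, the correct accounting must use the \emph{negative} eigenvalues: the index of $-2(h_N-\lambda)$ restricted to $(\psi_N)^\perp$ is $n_-(h_N - \lambda)\big|_{(\psi_N)^\perp} = k_N - 1$, the number of eigenvalues of $h_N$ strictly below $\lambda$. So the sign in the change-of-variables computation must be tracked with care — this is the main obstacle, ensuring the overall sign in $\Qout(d\theta,d\varphi) = +2\langle\Theta_N\psi_N,(h_N-\lambda)\Phi_N\psi_N\rangle$ (so that $-\Qout$ has the form of $-2(h_N-\lambda)$, whose positive-definite part is where $h_N-\lambda < 0$, i.e. the $k_N-1$ eigenvalues below $\lambda$).

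To keep the sign bookkeeping honest, I would go back to \eqref{eq:Qout_def}: with $Bd\theta = i(h_\alpha-\lambda)\Theta\psi$ and the Hermitian pseudoinverse satisfying \eqref{eq:MP_index}, write $\Qout(d\theta,d\varphi) = -2\Re\langle i(h_\alpha-\lambda)\Phi\psi,\ (h_\alpha-\lambda)^+ i(h_\alpha-\lambda)\Theta\psi\rangle$. The two factors of $i$ combine to $-i \cdot i = 1$ under the conjugate-linearity in the first slot — actually $\langle i u, i v\rangle = \langle u,v\rangle$ — so this is $-2\Re\langle (h_\alpha-\lambda)\Phi\psi,\ (h_\alpha-\lambda)^+(h_\alpha-\lambda)\Theta\psi\rangle = -2\Re\langle (h_\alpha-\lambda)\Phi\psi,\ P\Theta\psi\rangle$ where $P$ projects off $\psi$. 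Since $(h_\alpha-\lambda)\Phi\psi \perp \psi$ already (as $\Phi\psi$ is real and $h_\alpha$ acts... rather, $\langle\psi,(h_\alpha-\lambda)\Phi\psi\rangle = \langle(h_\alpha-\lambda)\psi,\Phi\psi\rangle = 0$), we may drop $P$: $\Qout(d\theta,d\varphi) = -2\Re\langle(h_\alpha-\lambda)\Phi\psi,\Theta\psi\rangle = -2\langle \Phi\psi,(h_\alpha-\lambda)\Theta\psi\rangle$ using reality of everything on $V_N$. Then with $\eta := \Theta\psi$, $\zeta := \Phi\psi$ supported on $V_N$, and $\diag(\psi_N)$ an isomorphism $\RR^{V_N}\to\RR^{V_N}$, we have $-\Qout|d\RR^V$ congruent to the form $(\eta,\zeta)\mapsto 2\langle\zeta,(h_N-\lambda)\eta\rangle$ on $\RR^{V_N}$. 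This form's matrix is $2(h_N-\lambda)$; by Sylvester's law of inertia (using \eqref{eq:MP_index}-style inertia counting) its index is $n_-(h_N-\lambda)$ and its rank is $n_-(h_N-\lambda)+n_+(h_N-\lambda) = |V_N|-1$. Since $\lambda=\lambda_{k_N}(h_N)$ is simple, $n_-(h_N-\lambda) = k_N-1$, establishing both \eqref{eq:Qout_on_dR_index} and \eqref{eq:Qout_on_dR_kernel}. The one genuine subtlety to write out carefully is the isomorphism step: $\diag(\psi_N)$ is invertible precisely because $\psi_N$ is nowhere vanishing (guaranteed by genericity \S\ref{ass}), and it is this that converts $\theta \mapsto (h_N-\lambda)\Theta\psi_N$ into the symmetric matrix $(h_N-\lambda)$ itself; the map $\theta\mapsto\Theta\psi_N$ being the \emph{same} isomorphism on both input slots is what makes the resulting form congruent (not merely similar) to $2(h_N-\lambda)$.
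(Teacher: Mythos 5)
Your proposal is correct and, after the self-correcting detour in the middle, lands on essentially the same argument as the paper: substitute $Bd\theta = i(h_\alpha-\lambda)\Theta\psi$ into \eqref{eq:Qout_def}, use $(h_\alpha-\lambda)^+(h_\alpha-\lambda)\Theta\psi = \Theta\psi \pmod{\CC\psi}$ to simplify, observe that $\Theta\psi$ is supported on $V_N$, and then invoke the invertibility of $\diag(\psi_N)$ to identify $-\Qout|d\RR^V$ (up to the factor $2$) with the compression $h_N - \lambda$, yielding rank $|V_N|-1$ and index $n_-(h_N-\lambda)=k_N-1$. The only stylistic difference from the paper is that you phrase the last step as a congruence to a form on $\RR^{V_N}$; the paper says more precisely that the form has the same rank and index as $h_N-\lambda$, which is the safer formulation since $d\RR^V$ and $\RR^{V_N}$ need not have equal dimensions.
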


\begin{proof}
  If $\theta_1, \theta_2 \in \RR^V$ then \eqref{eq:Bdtheta}  and the
  properties of the pseudoinverse, \S\ref{subsec-pseudoinverse}, give
  \begin{align*}
    -\Qout(d\theta_1,d\theta_2)
    &= 2 \Re \left< Bd\theta_1, (h_\alpha-\lambda)^+Bd\theta_2\right>\\
    &= 2\Re\left< (h_\alpha-\lambda)\Theta_1 \psi,
      (h_\alpha-\lambda)^+(h_\alpha-\lambda)\Theta_2\psi \right> \\
    & = 2 \Re\left< \Theta_1\psi, (h_\alpha-\lambda)\Theta_2\psi \right>.
  \end{align*}
  Since $\psi$ is non-vanishing and real on $V_N$ and $\Theta$ are
  arbitrary real diagonal matrices, the vectors $\Theta_j \psi$ span
  the real space $\RR^{V_N}$.  The compression of $h_\alpha-\lambda$
  to $\RR^{V_N}$ is the matrix $(h_\alpha-\lambda)|G_N=h_{N}-\lambda$,
  which is real, so $-\Qout | d\RR^V $ has the same rank and index as
  $h_{N}-\lambda$.  Since $\lambda$ is the simple eigenvalue number
  $k_{N}$ of $h_{N}$, the matrix $h_N-\lambda$ has rank $|V_N|-1$ and $k_N-1$
  negative eigenvalues.
\end{proof}

\begin{lem}
  \label{lem:index-Qout-ZN}The restriction of $\Qout$ to $\tcA_{ZN}$ has 
  \begin{align}
    \label{eq:Qout-ZN-ind}
    &\ind\big(\Qout|\tcA_{ZN}\big) = 2|V_{ZN}| - 2(k-k_N-k_{ZZ}),\\
    \label{eq:Qout-ZN-ker}
    &\dim \ker \big(\Qout|\tcA_{ZN}\big) = |E_{ZN}| - 3|V_{ZN}|.
  \end{align}
\end{lem}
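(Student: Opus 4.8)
The plan is to realize $\Qout|\tcA_{ZN}$ as the pullback, through the linear map $B|\tcA_{ZN}$, of a Hermitian form on $\CC^{V_{ZN}}$ whose inertia is computed by the Haynsworth-type identity of \Cref{prop-compression}. By the block structure \eqref{eq:B-blocks}, the restriction $B|\tcA_{ZN}$ takes values in $\CC^{V_{ZN}}\subset\CC^V$, and by \Cref{prop:B_properties}\eqref{item:Bsurjective} this map is \emph{surjective} onto $\CC^{V_{ZN}}$. Denoting by $R_{ZN}$ the compression of $(h_\alpha-\lambda)^+$ to $\CC^{V_{ZN}}$, formula \eqref{eq:Qout_def} becomes, for $\gamma,\delta\in\tcA_{ZN}$,
\[
  \Qout(\gamma,\delta) = -2\Re\left\langle B\delta,\, R_{ZN}B\gamma\right\rangle_{\CC^{V_{ZN}}}.
\]
Since $B|\tcA_{ZN}$ is a surjection from the real space $\tcA_{ZN}$, of real dimension $|E_{ZN}|-|V_{ZN}|$ by \eqref{eq:Atilde_dim_ZN}, onto $\CC^{V_{ZN}}$, of real dimension $2|V_{ZN}|$, splitting off its kernel gives
\[
  \ind\!\left(\Qout|\tcA_{ZN}\right) = \ind\!\left(-2\Re\langle\cdot, R_{ZN}\,\cdot\rangle\right),
  \qquad
  \dim\ker\!\left(\Qout|\tcA_{ZN}\right) = \left(|E_{ZN}|-3|V_{ZN}|\right) + \dim\ker\!\left(-2\Re\langle\cdot, R_{ZN}\,\cdot\rangle\right).
\]

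Next I would compute the inertia of $R_{ZN}$. No vertex of $V_{ZZ}$ has a neighbor in $V_N$, so the restriction of $h_\alpha$ to the vertex set $V_N\cup V_{ZZ}$ is block-diagonal, equal to $h_N\oplus(h_\alpha|G_{ZZ})$; as $\lambda$ is a simple eigenvalue of $h_N$ and, by the standing hypothesis, not an eigenvalue of $h_\alpha|G_{ZZ}$, it is a simple eigenvalue of that restriction, with eigenvector $(\psi_N,0)$. This verifies the hypotheses of \Cref{prop-compression} applied with $V_0 = V_{ZN}$ and $V_1 = V_N\cup V_{ZZ}$, yielding
\[
  n_*(R_{ZN}) = n_*(h_\alpha-\lambda) - n_*(h_N-\lambda) - n_*\!\left(h_\alpha|G_{ZZ}-\lambda\right),
  \qquad *\in\{+,-,0\}.
\]
Using that $\lambda=\lambda_k(h_\alpha)$ and $\lambda=\lambda_{k_N}(h_N)$ are simple, so $n_-(h_\alpha-\lambda)=k-1$, $n_0=1$, $n_-(h_N-\lambda)=k_N-1$, $n_0=1$, while $n_-(h_\alpha|G_{ZZ}-\lambda)=k_{ZZ}$, $n_0=0$, one obtains $n_-(R_{ZN})=k-k_N-k_{ZZ}$, $n_0(R_{ZN})=0$, hence $n_+(R_{ZN})=|V_{ZN}|-(k-k_N-k_{ZZ})$.

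The last ingredient is the elementary fact that, for a Hermitian matrix $M$ on $\CC^m$, the real bilinear form $(u,v)\mapsto\Re\langle v,Mu\rangle$ on $\CC^m\cong\RR^{2m}$ has inertia $n_\pm^{\RR}=2n_\pm(M)$ and $n_0^{\RR}=2n_0(M)$ (diagonalize $M$ and evaluate on each complex eigenline $\Span_\RR\{e,ie\}$). The overall sign in $-2\Re\langle\cdot,R_{ZN}\,\cdot\rangle$ interchanges positive and negative indices, so $\ind\!\left(-2\Re\langle\cdot,R_{ZN}\,\cdot\rangle\right)=2n_+(R_{ZN})=2|V_{ZN}|-2(k-k_N-k_{ZZ})$ and its nullity equals $2n_0(R_{ZN})=0$. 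Substituting into the two displayed identities of the first paragraph gives both \eqref{eq:Qout-ZN-ind} and \eqref{eq:Qout-ZN-ker}.

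The step demanding the most care is the identification of $R_{ZN}$ with the quantity in \Cref{prop-compression}: one must compress to $V_{ZN}$ \emph{only} --- not to the full zero set $V_{ZN}\cup V_{ZZ}$ of $\psi$ --- which forces the observation that $h_\alpha$ decouples over $V_N\cup V_{ZZ}$ and the use of the hypothesis $\lambda\notin\spec(h_\alpha|G_{ZZ})$ to ensure that $\lambda$ is a simple eigenvalue there, so that the proposition applies and $R_{ZN}$ comes out nonsingular. The surjectivity of $B|\tcA_{ZN}$, the bookkeeping for the pullback of a form by a linear surjection, and the factor-of-two relation between complex Hermitian and real-quadratic inertia are then routine.
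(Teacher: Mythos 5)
Your argument is correct and follows the paper's proof in all essentials: you apply \Cref{prop-compression} with the same splitting $V_0=V_{ZN}$, $V_1=V_N\cup V_{ZZ}$, use the block-diagonality of $h_\alpha|V_1=h_N\oplus(h_\alpha|G_{ZZ})$ and the hypothesis $\lambda\notin\spec(h_\alpha|G_{ZZ})$ to compute $n_*(R_{ZN})$, and exploit surjectivity of $B|\tcA_{ZN}$ from \Cref{prop:B_properties}. The only cosmetic difference is that you separate the two steps --- splitting off $\ker(B|\tcA_{ZN})$, then doubling the complex inertia of a Hermitian form to get the real inertia --- that the paper packages together in \Cref{prop-Qreal}; your inline diagonalization argument is exactly the content of that proposition.
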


\begin{proof}
  Write
  $\Qout(\gamma,\delta) = \Re\left< B\gamma,
    -2(h_\alpha-\lambda)^+B\delta\right>$ and recall that by
  \Cref{prop:B_properties}(\ref{item:Bsurjective}), the operator
  $B$ maps $ \tcA_{ZN}$ surjectively onto $\CC^{V_{ZN}}$, which means that the kernel of $B$ has (real) dimension $\dim(\tcA_{ZN})-2|V_{ZN}|=|E_{ZN}|-3|V_{ZN}|$.
  Applying \Cref{prop-Qreal} with $H$ being the
  compression of $-2(h_\alpha-\lambda)^+$ to
  $\CC^{V_{ZN}}$, gives
\begin{align*}
    &\ind\big(\Qout|\tcA_{ZN}\big) 
    = 2\ind_\CC \big(H\big),\\
    &\dim \ker \big(\Qout|\tcA_{ZN}\big) 
    = 2\dim_\CC \ker \big(H\big)+|E_{ZN}|-3|V_{ZN}|.
  \end{align*}
We now aim to use \Cref{prop-compression} with $V_0 = \CC^{V_{ZN}}$ and $V_1
= \CC^{V_N}\oplus \CC^{V_{ZZ}}$.  The compression $D$ of $h=h_\alpha$
to $V_1$ is block diagonal because $V_{N}$ and $V_{ZZ}$ are not
connected by any edges.
This means that $\lambda$ is a simple eigenvalue number
$k'=k_{N}+k_{ZZ}$: $n_0(D-\lambda)=1$ and $n_-(D-\lambda) =
k_N+k_{ZZ}-1$.  From $n_0(h_\alpha-\lambda) = 1$ and
$n_-(h_\alpha-\lambda) = k-1$ we conclude that the compression of
$(h_\alpha-\lambda)^+$ to $\CC^{V_{ZZ}}$ has no kernel and has
$k-k_{N}-k_{ZZ}$ negative eigenvalues, so
\begin{align*}
  &\dim_\CC \ker \big(H\big)=0,\\
  &\ind_\CC \big(H\big)=|V_{ZN}|-(k-k_{N}-k_{ZZ}).\qedhere
\end{align*}
\end{proof}

\begin{lem}
  \label{lem:index_Qin_WN}  
 The restriction of $\Qin$ to $W_{N}$ has
  \begin{align}
    \label{eq:Qin_ind}
    &\ind\big(\Qin|W_N\big) = \nodcnt(h_\alpha,k) - (k_N-1) = \sigma(h_N,k_N), \\
    &\dim\ker\big(\Qin|W_N\big) = 0.
  \end{align}
\end{lem}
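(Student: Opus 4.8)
The plan is to compute the full inertia of $\Qin$ on $\cA_{N}$, where it is diagonal and transparent, and then to descend to $W_{N}$ by identifying $W_{N}$ with the $\Qin$-orthogonal complement of $d_{N}\RR^{V_{N}}$ inside $\cA_{N}$, so that the inertias simply subtract. First, restrict $\Qin$ to $\cA_{N}$. A $1$-form $\gamma\in\cA_{N}$ is supported on $E_{NN}$, and on those edges $h_{\alpha}$ agrees with the real matrix $h_{N}$ and $\psi$ with the real vector $\psi_{N}$, so \eqref{eq:Qin_def} specializes to $\Qin(\gamma,\gamma)=-\sum_{(rs)\in E_{NN}}(h_{N})_{rs}(\psi_{N})_{r}(\psi_{N})_{s}\gamma_{rs}^{2}$, i.e.\ $\Qin|\cA_{N}$ is diagonal in the edge basis. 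The genericity hypotheses (strict support of $h_{N}$ and non-vanishing of $\psi_{N}$ on $V_{N}$) force every diagonal coefficient to be nonzero, so $\Qin|\cA_{N}$ is non-degenerate and its index equals the number of edges $(rs)\in E_{NN}$ with $(h_{N})_{rs}(\psi_{N})_{r}(\psi_{N})_{s}>0$. Since every edge carrying $\psi_{r}\psi_{s}\neq0$ lies in $E_{NN}$, this count is exactly $\nodcnt(h_{\alpha},k)=\nodcnt(h_{N},k_{N})$, and so $\Qin|\cA_{N}$ has inertia $(n_{+},n_{-},n_{0})=\bigl(|E_{NN}|-\nodcnt,\ \nodcnt,\ 0\bigr)$.

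Next, restrict $\Qin$ to $d_{N}\RR^{V_{N}}$. Substituting $\gamma=\delta=d_{N}\theta$ in \eqref{eq:Qin_def} and symmetrizing the sum using $h_{N}\psi_{N}=\lambda\psi_{N}$ --- the same manipulation as in the proof of \Cref{lem:Qout_dR_index} --- gives $\Qin(d_{N}\theta,d_{N}\theta)=\langle\Theta\psi_{N},(h_{N}-\lambda)\Theta\psi_{N}\rangle$ with $\Theta=\diag(\theta)$. The substitution $\theta\mapsto\Theta\psi_{N}$ is a linear isomorphism of $\RR^{V_{N}}$ carrying the line of constants onto $\ker(h_{N}-\lambda)=\RR\psi_{N}$, so this form, viewed on $d_{N}\RR^{V_{N}}\cong\RR^{V_{N}}/\text{const}$, has the same inertia as $h_{N}-\lambda$ modulo its kernel; since $\lambda=\lambda_{k_{N}}(h_{N})$ is simple that inertia is $\bigl(|V_{N}|-k_{N},\ k_{N}-1,\ 0\bigr)$, and in particular $\Qin|d_{N}\RR^{V_{N}}$ is non-degenerate. (Equivalently, this form is $\tfrac12\bigl(-\Qout|d\RR^{V}\bigr)$ under the identification used in \Cref{lem:Qout_dR_index}, which has index $k_{N}-1$ and rank $|V_{N}|-1=\dim d_{N}\RR^{V_{N}}$.)

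The crux --- and the step I expect to be the main obstacle --- is to identify $W_{N}$ with the $\Qin$-orthogonal complement of $d_{N}\RR^{V_{N}}$ inside $\cA_{N}$. For $\gamma\in\cA_{N}$ and $\theta\in\RR^{V_{N}}$, I would expand $\Qin(\gamma,d_{N}\theta)$ from \eqref{eq:Qin_def}, use the antisymmetry of $\gamma$ to collapse the factor $\theta_{r}-\theta_{s}$, and recognize the inner sum $\sum_{s}\gamma_{rs}(h_{N})_{rs}(\psi_{N})_{s}$ as $-i(B\gamma)_{r}$, cf.\ \eqref{eqn-B}, obtaining $\Qin(\gamma,d_{N}\theta)=i\sum_{r\in V_{N}}\theta_{r}(\psi_{N})_{r}(B\gamma)_{r}$. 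Because $\psi_{N}$ is nowhere zero on $V_{N}$, because $B\gamma$ is supported on $V_{N}$ when $\gamma\in\cA_{N}$, and because $\theta$ is arbitrary, this vanishes identically in $\theta$ exactly when $B\gamma=0$, i.e.\ when $\gamma\in W_{N}$ by \eqref{eq:WN_def}. Thus $W_{N}=(d_{N}\RR^{V_{N}})^{\perp_{\Qin}}$ within $\cA_{N}$, which also re-derives the decomposition $\cA_{N}=W_{N}\oplus d_{N}\RR^{V_{N}}$ used in \Cref{lem:kerB}.

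Finally, since $\Qin|d_{N}\RR^{V_{N}}$ is non-degenerate, $\cA_{N}=d_{N}\RR^{V_{N}}\oplus W_{N}$ is a $\Qin$-orthogonal direct sum, so $\Qin|\cA_{N}$ is block-diagonal with respect to it and (Sylvester's law of inertia) the inertias add. Subtracting the data of the first two paragraphs yields $\dim\ker(\Qin|W_{N})=0-0=0$ and $\ind(\Qin|W_{N})=\nodcnt(h_{\alpha},k)-(k_{N}-1)$, which equals $\sigma(h_{N},k_{N})$ by the definition \eqref{eq:nodal_surplus_def} of the nodal surplus together with the identity $\nodcnt(h_{\alpha},k)=\nodcnt(h_{N},k_{N})$. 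The only genuinely computational ingredient is the identity $\Qin(\gamma,d_{N}\theta)=i\sum_{r}\theta_{r}(\psi_{N})_{r}(B\gamma)_{r}$, which is precisely what ties the analytic Hessian term $\Qin$ to the combinatorial definition of $W_{N}$; everything else is bookkeeping.
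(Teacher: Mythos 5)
Your proof is correct, but it takes a genuinely different route from the paper's. The paper's argument leans on the global block structure of $\Qin$ established in Proposition~\ref{prop-QplusQ}, which in turn uses gauge invariance of the total Hessian to identify the $d\RR^V$ block of $\Qin$ with $-\Qout|d\RR^V$; the index of the latter was computed separately in Lemma~\ref{lem:Qout_dR_index}. You avoid $\Qout$ entirely: you stay inside $\cA_N$, and the key observation is the identity $\Qin(\gamma,d_N\theta)=i\sum_{r}\theta_r\psi_r(B\gamma)_r$, which shows that $W_N$ is precisely the $\Qin$-orthogonal complement of $d_N\RR^{V_N}$. Together with the non-degeneracy of $\Qin$ on $\cA_N$ and on $d_N\RR^{V_N}$ (which you establish by a direct eigenvalue-equation computation), this gives a $\Qin$-orthogonal splitting $\cA_N = W_N\oplus d_N\RR^{V_N}$ and lets you subtract inertias. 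As a bonus, this re-derives the decomposition of Lemma~\ref{lem:kerB} (which the paper imports from \cite{Col_apde13,AloGor_jst23}) rather than citing it; in fact your identity is essentially the customized adjointness alluded to in the footnote to that lemma. The trade-off is that the paper's proof is shorter once Proposition~\ref{prop-QplusQ} is in hand, since the latter is needed elsewhere anyway, whereas your version is more self-contained and makes the structure of $\Qin$ on $\cA_N$ transparent without any reference to the second-order perturbation term $\Qout$.

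One small thing to make explicit: when you conclude that $W_N=(d_N\RR^{V_N})^{\perp_{\Qin}}$ gives an orthogonal \emph{direct sum} decomposition, you should note that non-degeneracy of $\Qin|\cA_N$ ensures $\dim (d_N\RR^{V_N})^{\perp_{\Qin}} = \dim\cA_N - \dim d_N\RR^{V_N}$, and non-degeneracy of $\Qin|d_N\RR^{V_N}$ ensures $W_N\cap d_N\RR^{V_N}=\{0\}$; both facts are available from your first two paragraphs, but the logical role of the two non-degeneracy statements is worth flagging.
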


\begin{proof}  
  The mixed partial derivatives $\partial^2_{\gamma,\delta}h$ vanish
  whenever $\gamma, \delta$ have disjoint support.  So the matrix for
  $\Qin$ is diagonal with respect to the coordinates given
  by the edges of the graph.  Explicitly, if $\gamma = \delta$ is the basis element corresponding to the edge $(rs) \in E$, 
  \begin{equation}
    \label{eq:Qin_diagonal}
    \Qin(\gamma,\delta)
    = -(h_{rs}\psi_r \bar \psi_s + h_{sr} \psi_s \bar \psi_r)
    = -2 \Re(h_{rs}\psi_r \bar \psi_s)
    = -2h_{rs}\psi_r\bar\psi_s,
  \end{equation}
  which is real by the criticality condition~\eqref{eqn-criticality}.
  The index of the diagonal matrix $\Qin$ is
  the number of negative diagonal entries, namely number of edges for which $h_{rs}\psi_r\bar\psi_s>0$. Since $\psi$ is supported on $V_{N}$ we need to count only $E_{N}$ edges, which means this is exactly the nodal count
  $\nodcnt(h_N,k_{N})$ as $\psi_{N}$ is the $k_{N}$-th eigenvector of $h_{N}$.  From
  \eqref{eq:Hessian_terms} and \eqref{eq:Qout_on_dR_index} we get
  \begin{equation}
    \label{eq:Qin_WN_compute}
    \ind\left(\Qin|W_N\right) = \ind(\Qin)
    - \ind\left( -\Qout | d\RR^V\right) =
    \nodcnt(h_N,k_{N}) - (k_N-1)=\sigma(h_N,k_{N}).
  \end{equation}
  The rank of $\Qin$ is the number of non-zero entries of the form
  \eqref{eq:Qin_diagonal}, namely $|E_{NN}|$.  Using
  \eqref{eq:Hessian_terms} and \Cref{lem:Qout_dR_index} again, we get
  \begin{equation*}
    \rank \left(\Qin|W_N\right) = \rank(\Qin)
    - \rank\left( -\Qout | d\RR^V\right) = 
    |E_{NN}| - \big(|V_N| - 1\big),   
  \end{equation*}
  which matches the dimension of $W_N$.  Therefore $\Qin|W_N$ is full rank.
\end{proof}

\subsection{Proof of Theorem \ref{thm:main}, part~\ref{item:MBindex}}
\label{subsec-index-Hessian}

We will compute the index of the Hessian, establishing
\eqref{eq:MBindex}.  Then we will compute the nullity of the Hessian,
to show it is non-degenerate in the directions transversal to $F$.

By \Cref{prop-QplusQ}, the Hessian of $\lambda_k$ as a function on
$\cP_h$ is 
\begin{equation}
  \label{eq:Hessian_blocks}
  \Hess_{\cP_h}\left( \lambda_k(h_\alpha) \right)
  =
  \begin{pmatrix}
    \Qin|{W_N} & 0 & 0 & 0 \\
    0 & \Qout|\tcA_{ZN} & 0 & 0\\
    0 & 0 & 0 & 0\\
    0 & 0 & 0 & 0
  \end{pmatrix}.
\end{equation}
To compute the Hessian of $\lambda_k$ on $\tcM_h$, we recall from
\S\ref{subsec-lift} that the tangent space of $\tcM_h$ is $\tcA(G)$.
Comparing \eqref{eq:direct_sum_N} with
\eqref{eq:WN_decomposition_again} we conclude that we need to ignore
the last row and column of \eqref{eq:Hessian_blocks}, i.e.
\begin{equation}
  \label{eq:Hessian_Mh}
  \Hess_{\tcM_h}\left( \lambda_k(h_\alpha) \right)
  =
  \begin{pmatrix}
    \Qin|{W_N} & 0 & 0  \\
    0 & \Qout|\tcA_{ZN} & 0\\
    0 & 0 & 0
  \end{pmatrix}.
\end{equation}

Therefore, using \Cref{lem:index-Qout-ZN} and \Cref{lem:index_Qin_WN}, 
\begin{align*}
  \ind \Hess_{\tcM_h}\left( \lambda_k \right)
  &= \ind\left(\Qin|W_N\right) + \ind\left(\Qout|\tcA_{ZN}\right)\\
  &=\sigma(h_{N},k_{N})+2|V_{ZN}|-2(k-k_{N}-k_{ZZ}),  
\end{align*}
proving \eqref{eq:MBindex}.

Similarly, we compute
\begin{align*}
  \dim \ker \Hess_{\tcM_h}(\lambda_k)
  &= \dim\ker\big(\Qin|_{W_N}\big) + \dim\ker\big(\Qout|\tcA_{ZN}\big)
    + \dim \tcA_{ZZ} \\
  &= 0 + |E_{ZN}| - 3|V_{ZN}| + |E_{ZZ}| - |V_{ZZ}|
    = \dim(F),
\end{align*}
where we used \Cref{lem:index_Qin_WN}, \ref{lem:index-Qout-ZN},
\eqref{eq:Atilde_dim_ZZ} and \eqref{eq:MBdimension} for the four
dimensions appearing in the computation.  In conclusion, the Hessian
is non-degenerate on any direct complement of the tangent space of
$F\subset \tcM_h$, as claimed.

\section{Proof of Theorem \ref{thm-existence}: existence of critical points}\label{proof-existence}

\subsection{}  We are given
\begin{enumerate}  
\item $G = (V,E)$ with admissible support $V_N$ and decomposition $V = V_N \cup V_{ZN}\cup V_{ZZ}$, 
\item $h_N \in \cS(G_N)$ with simple eigenvalue
  $\lambda = \lambda_{k_N}(h_N)$ and real eigenvector $\psi$,
\item $h_{ZN} \in \cS(G_{ZN})$ arbitrary, satisfying $\lambda \notin
  \spec(h_{ZN})$; $k_{ZN} = n_-(h_{ZN} - \lambda)$,
\item $h_{ZZ} \in \cS(G_{ZZ})$ arbitrary, satisfying $\lambda \notin
  \spec(h_{ZZ})$; $k_{ZZ} = n_-(h_{ZZ} - \lambda)$.
\end{enumerate}
The aim is to find $h \in \cS(g)$ which agrees with $h_N$, $h_{ZN}$
and $h_{ZZ}$ on the corresponding subgraphs, and $h_\alpha \in \tcM_h$
which is a critical point of $\lambda_{k_N+k_{ZN}+k_{ZZ}}$.

\subsection{}
For each $r \in V_{ZN}$ choose a collection $H_{rs} \in \RR$ (where $(rs) \in 
E_{ZN}$) such that
\begin{align}
  \label{eq:linkage_solvable}
  &|H_{rt}\psi_t| < \sum_{(rs) \in E_{ZN},\ s \ne t} |H_{rs}\psi_s|
    \quad \text{for all }(rt)\in E_{ZN}, \\
  \label{eq:linkage_generic}
  &\sum_{(rs) \in E_{ZN}} \epsilon_{s}|H_{rs}\psi_s| \neq 0
    \quad \text{for all }\epsilon_s=\pm1.
\end{align}
Note that a valid choice of $H_{rs}$ is always available (make all
$|H_{rs} \psi_s| \approx 1$ and rationally independent) and that any
scalar multiple of a valid collection $\{H_{rs}\}$ will also satisfy
\eqref{eq:linkage_solvable}.  Setting $H_{rs}=0$
whenever $s\in V_N$ but $(r,s) \notin E$, we obtain a
$|V_{ZN}| \times |V_N|$ matrix $H$.

By \Cref{thm-linkage}, inequalities
\eqref{eq:linkage_solvable}-\eqref{eq:linkage_generic} guarantee
existence of a non-zero (modulo $\pi$) solution $\{\alpha_{rs}\}$ to
the equation
\begin{equation}
  \label{eq:linkage_solved}
  \sum_{s:\ (rs)\in E_{ZN}} e^{i \alpha_{rs}}H_{rs}\psi_s =  0,  
\end{equation}
attached to the vertex $r$.  By adjusting the overall phase, we can
make one of the $\alpha_{rs}=0$; we do it for the unique edge
$(rr')\in E_{ZN} \setminus E^\f_{ZN}$.  Similarly set
$\alpha_{rs} = 0$ whenever $(rs) \notin E_{ZN}$ to obtain a non-zero
$\alpha \in \tcA_{ZN}$.

Finally, we introduce an arbitrary $|V_{ZN}| \times |V_{ZZ}|$ real
matrix $B$ such that $B_{sr} \neq 0$ if and only if
$(sr) \in E$.

\subsection{}
We claim that, for sufficiently small $\epsilon$, the existence
problem is solved by the following matrix, written here with respect
to the decomposition $V = V_N \sqcup V_{ZN} \sqcup V_{ZZ}$,
\[  h_\epsilon =
\begin{pmatrix}
h_N & \epsilon.H^t & 0\\
\epsilon.H & h_{ZN} & \epsilon.B^t\\
0 & \epsilon.B & h_{ZZ}
\end{pmatrix} 
\quad\text{and}\quad
h_{\epsilon,\alpha} = 
\begin{pmatrix} 
h_N & \epsilon.H_\alpha^* & 0\\
\epsilon.H_\alpha & h_{ZN} & \epsilon.B^t\\
0 & \epsilon.B & h_{ZZ}
\end{pmatrix}.
\]
When $\epsilon=0$ the matrix $h_{0,\alpha}$ is block diagonal and has a
simple eigenvalue $\lambda$ with the label $k_N+k_{ZN}+k_{ZZ}$; its
eigenvector is
\[ 
  \begin{pmatrix}
    \psi_N \\ 0 \\ 0
  \end{pmatrix}.
\]
By \eqref{eq:linkage_solved}, the same eigenvalue and eigenvector pair
remain valid for $h_{\epsilon,\alpha}$ for all $\epsilon$; by
continuity, $\lambda$ remains simple and retains its label
$k_N+k_{ZN}+k_{ZZ}$ for small $\epsilon$.  The criticality condition
\eqref{eqn-criticality} clearly holds since $\psi_N$ is real.

Finally, the genericity conditions of \S \ref{ass} are open and dense,
while conditions
\eqref{eq:linkage_solvable}-\eqref{eq:linkage_generic} are open.
Therefore, perturbing the non-zero entries of $h_\epsilon$ will result
in a generic $h_\epsilon$, with the rest of the construction still
being valid.
This concludes the proof of Theorem \ref{thm-existence}.  \qed

\section{Examples}
\label{sec:examples}

In this section we explore examples of graphs which illustrate that
conditions of part~\eqref{item:MBindex} of \Cref{thm:main} cannot be
weakened in general.  The conclusion is that the critical manifold $F$
is not, in general, Morse--Bott \cite{Bot_am54}: the negative
eigenspace of the Hessian may fail to form a vector bundle over $F$.

\subsection{Change of the index on \texorpdfstring{$F$}{the critical
    manifold} while the eigenvalue is simple}
\label{sec:example1}

\begin{figure}
  \centering
  \includegraphics[scale=1.0]{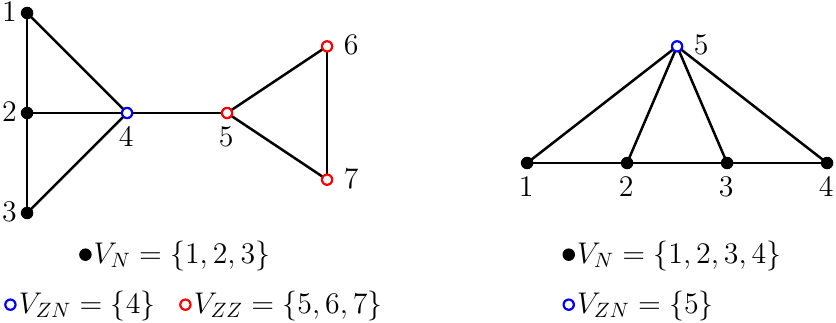}
  \caption{(Left) The graph in the example of \S\ref{sec:example1}.
    (Right) The graph in the example of \S~\ref{sec:ex2}.}
  \label{fig:ex_graphs}
\end{figure}

Consider a matrix $h$ strictly supported on the graph $G$ in
\Cref{fig:ex_graphs}(left), such that up to gauge equivalence,
$\tcM_h$ is parametrized by
\begin{equation}
  \label{eq:ha_ex1}
  h_\alpha :=
  \begin{pmatrix}
    1 &-1 &             0 &           -1 &          0 &   0 &           0\\
    -1 & 2 &          -1 &           -e^{i\alpha_1} & 0 &   0 &           0\\
    0 & -1 &           1 &           -e^{i\alpha_2} & 0 &   0  &          0\\
    -1 &-e^{-i\alpha_1} &-e^{-i\alpha_2} &  4 &         -1 &   0 &           0\\
    0 &  0 &           0 &           -1 &          1 &  -1 &          -1\\
    0 &  0 &           0 &            0 &         -1 &   2 &          -e^{i\alpha_3}\\
    0 &  0 &           0 &            0 &         -1 &  -e^{-i\alpha_3} & 2
  \end{pmatrix}
\end{equation}

Choosing $V_N = \{1,2,3\}$, we get $V_{ZN} = \{4\}$ and $V_{ZZ} =
\{5,6,7\}$.  We consider
\begin{equation}
  \label{eq:hN_ex1_def}
  h_N :=
  \begin{pmatrix}
    1 &-1 &             0 \\
    -1 & 2 &          -1 \\
    0 & -1 &           1 
  \end{pmatrix},
  \qquad
  \psi_N =
  \begin{pmatrix}
    1 \\ 1 \\ 1
  \end{pmatrix},
  \qquad
  \lambda = 0 = \lambda_1(h_N),
\end{equation}
so that the critical manifold is
\begin{align}
  \label{eq:critF_ex1}
  F &= F(V_N, h_N, \psi_N, \lambda)
  \cong M\big( [1, 1, 1]\big) \times \TT \\
  \nonumber
  &= \left\{ \left(\frac{2\pi}{3}, -\frac{2\pi}3, \alpha_3\right)
    \colon \alpha_3 \in \RR/ 2\pi\ZZ \right\}
  \sqcup
  \left\{ \left(-\frac{2\pi}{3}, \frac{2\pi}3, \alpha_3\right)
    \colon \alpha_3 \in \RR/ 2\pi\ZZ \right\},
\end{align}
in the coordinates $(\alpha_1, \alpha_2, \alpha_3)$ implicit in
\eqref{eq:ha_ex1}.

Numerical inspection of the spectrum of $h_\alpha$ on $F$, see
\Cref{fig:ex1_alpha3_spectra}(left), confirms that
$0 = \lambda_2(h_\alpha)$ is a simple eigenvalue, fulfilling one of
the two conditions of \Cref{thm:main}(\ref{item:MBindex}).

\begin{figure}
  \centering
  \includegraphics[scale=0.45]{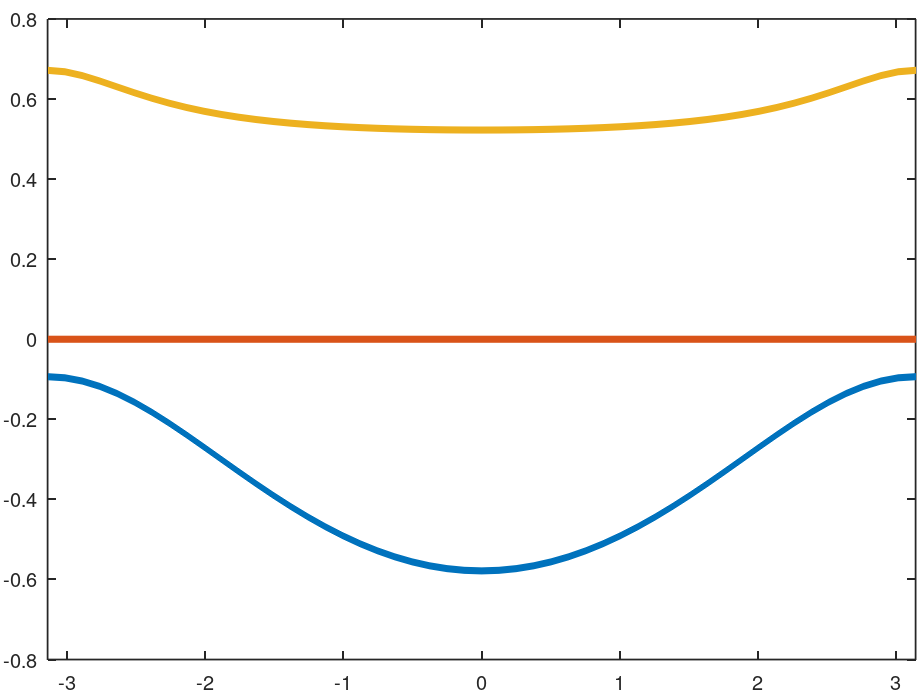}
  \includegraphics[scale=0.45]{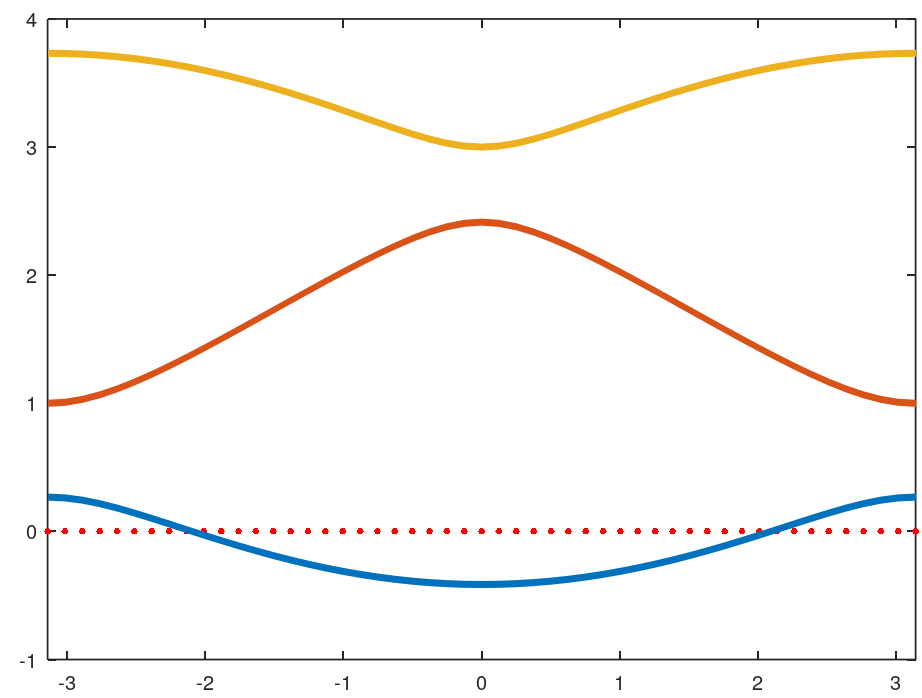}
  \caption{(Left) First three eigenvalues of $h_\alpha$ restricted to
    $F$, as functions of $\alpha_3$, confirming that
    $0=\lambda_2(h_\alpha)$ is simple on $F$.  (Right) Spectrum of
    $h_\alpha|G_{ZZ}$ is a function of $\alpha_3$; the dotted line is
    added at $\lambda=0$ to guide the eye, confirming that
    $0\in\spec\left(h_\alpha|G_{ZZ}\right)$ at two values of $\alpha_3$.}
  \label{fig:ex1_alpha3_spectra}
\end{figure}

The restriction of $h_\alpha$ to $G_{ZZ}$, the graph induced by
$V_{ZZ}$, still contains one phase,
\begin{equation}
  \label{eq:ZZ_restriction_ex1}
  h_\alpha|G_{ZZ} =
  \begin{pmatrix}
    1 &  -1 &          -1\\
    -1 &   2 &          -e^{i\alpha_3}\\
    -1 &  -e^{-i\alpha_3} & 2
  \end{pmatrix},
\end{equation}
therefore the condition
$0=\lambda \not\in \spec\left(h_\alpha |G_{ZZ}\right)$ appearing in
\Cref{thm:main}(\ref{item:MBindex}) may fail for some
$h_\alpha \in F$.  This is what happens in this example, as
illustrated in Figure~\ref{fig:ex1_alpha3_spectra}(right).  We denote
by $\pm\alpha_{3,c}$ the two points where
$0 \in \spec\left(h_\alpha |G_{ZZ}\right)$.  The data entering the
index formula \eqref{eq:MBindex} is $\sigma(h_N, k_N) = 0$, 
$|V_{ZN}| = 1$, $k = 2$, $k_N = 1$ and 
\begin{equation}
  \label{eq:ex1_index_data}
  k_{ZZ} =
  \begin{cases}
    1 & \alpha_3 \in (-\alpha_{3,c}, \alpha_{3,c}), \\
    0 & \alpha_3 \in (-\pi, -\alpha_{3,c}) \cup (\alpha_{3,c}, \pi).
  \end{cases}
\end{equation}
The resulting index in the direction normal to $F$ is
\begin{equation}
  \label{eq:ex1_ind}
  \ind(\lambda_2, h_\alpha) = 2 k_{ZZ} =
  \begin{cases}
    2 & \alpha_3 \in (-\alpha_{3,c}, \alpha_{3,c}), \\
    0 & \alpha_3 \in (-\pi, -\alpha_{3,c}) \cup (\alpha_{3,c}, \pi).
  \end{cases}
\end{equation}
This answer is confirmed by inspecting the eigenvalue
$\lambda_2(h_\alpha)$ as a function of $(\alpha_1, \alpha_2)$ at
$\alpha_3=0$ and $\alpha_3=\pi$, see \Cref{fig:ex1_eig_surf}.

\begin{figure}
  \centering
  \includegraphics[scale=0.45]{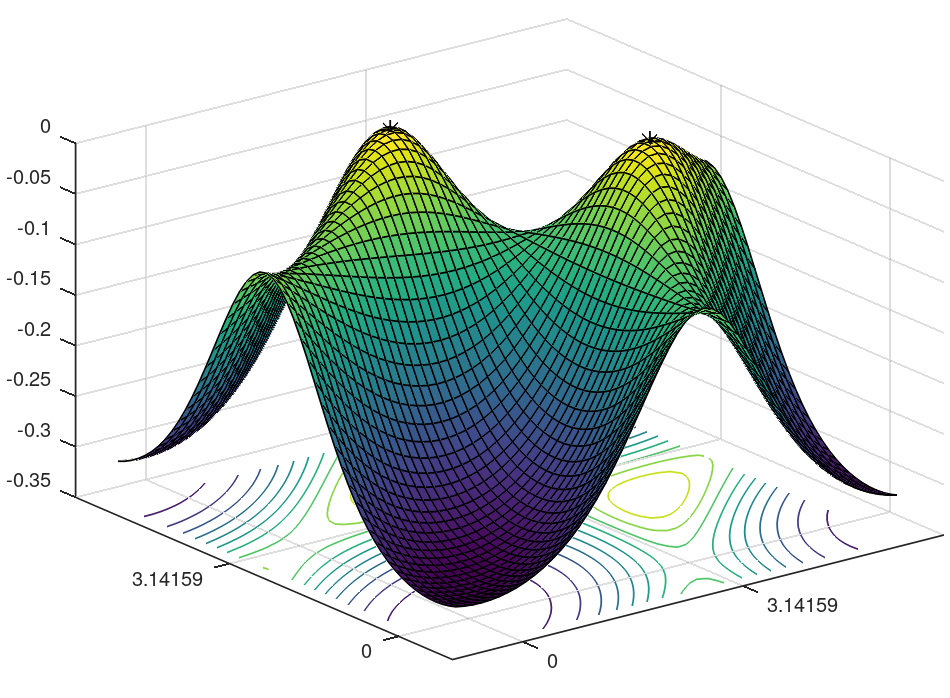}
  \includegraphics[scale=0.45]{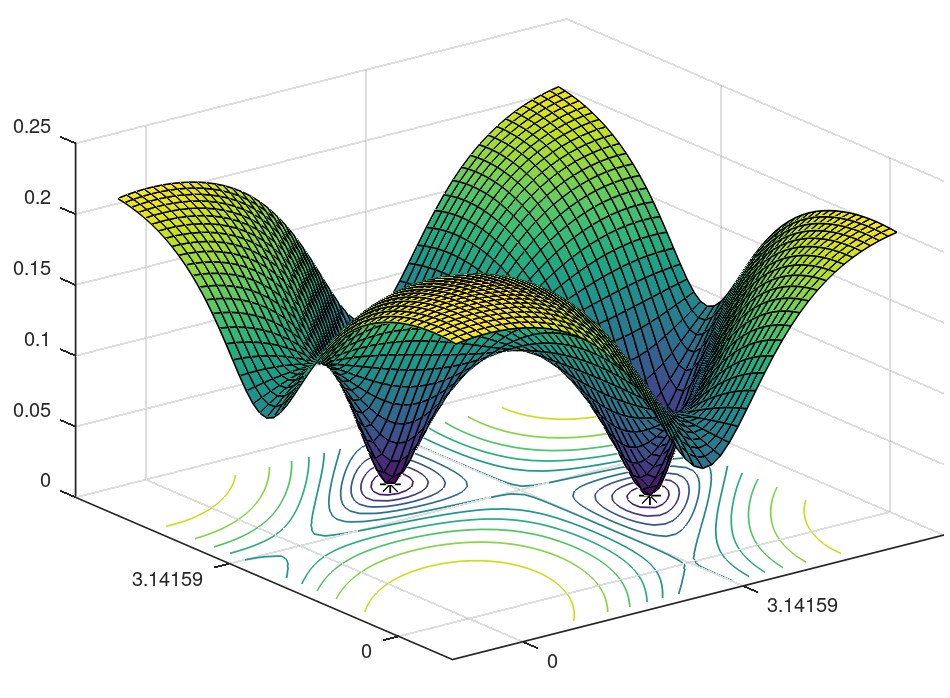}
  \caption{Eigenvalue $\lambda_2(h_\alpha)$ on slices normal to $F$:
    namely, $\alpha_1$ and $\alpha_2$ are varied while $\alpha_3$ is
    kept fixed at $0$ (left) or $\pi$ (right).  In both cases we have
    critical points $(\alpha_1,\alpha_2)=\left(\pm\frac{2\pi}3,
      \mp\frac{2\pi}3\right)$ with critical value $0$ (highlighted by
    star symbols on both graphs).  These points are maxima when
    $\alpha_3=0$ and minima when $\alpha_3=\pi$, in agreement with
    \eqref{eq:ex1_ind}.}
  \label{fig:ex1_eig_surf}
\end{figure}

Finally, when $\alpha_3 = \pm\alpha_{3,c}$, the eigenvalue
$\lambda_2(h_\alpha)$ as a function of $(\alpha_1, \alpha_2)$ is
identically 0.

\subsection{Change of the index on \texorpdfstring{$F$}{the critical
    manifold} due to multiplicity}
\label{sec:ex2}

The following example shows that $\lambda_k(h_\alpha)$ can
become multiple on $F$, breaking the first of
the two conditions of \Cref{thm:main}(\ref{item:MBindex}).  The graph
on which $h$ is strictly supported is shown in \Cref{fig:ex_graphs}(right).
In order to illustrate the logic behind our construction, the matrices
$h_\alpha$ are specified up to a free parameter $\gamma \in \RR$,
\begin{equation}
  \label{eq:ex2_matrix}
  h_\alpha =
  \begin{pmatrix}
    1 & -1 & 0 & 0 & -e^{i\alpha_1} \\
    -1 & 3 & -2 & 0 & -e^{i\alpha_2} \\
    0 & -2 & 10 & -4 & -e^{i\alpha_3} \\
    0 & 0 & -4 & 2 & -1\\
    -e^{-i\alpha_1} & -e^{-i\alpha_2} & -e^{-i\alpha_3} & -1 & \gamma
  \end{pmatrix}.
\end{equation}
Choose the admissible support $V_N=\{1,2,3,4\}$ and therefore
$V_{ZN} = \{5\}$ and $V_{ZZ} = \emptyset$.  Consider
\begin{equation}
  \label{eq:ex2_hN}
  h_N :=
  \begin{pmatrix}
    1 & -1 & 0 & 0 \\
    -1 & 3 & -2 & 0 \\
    0 & -2 & 10 & -4 \\
    0 & 0 & -4 & 2
  \end{pmatrix},
  \qquad
  \psi_N =
  \begin{pmatrix}
    1 \\ 1 \\ 1 \\ 2
  \end{pmatrix},
  \qquad
  \lambda = 0 = \lambda_1(h_N),
\end{equation}
so that the critical manifold is
\begin{align}
  \label{eq:critF_ex2}
  F = F(V_N, h_N, \psi_N, \lambda)
  &\cong M\big( [1, 1, 1, 2]\big)
  \\ \nonumber
  &=
  \left\{ (\alpha_1,\alpha_2,\alpha_3) \in (\RR / 2\pi \ZZ)^3
  \colon e^{i\alpha_1} + e^{i\alpha_2} + e^{i\alpha_3} +2 = 0\right\}. 
\end{align}
We stress that the critical manifold is present and identical for all
values of the parameter $\gamma$.  The linkage manifold $M\big( [1, 1,
1, 2]\big)$ is topologically a circle and is illustrated in \Cref{fig:linkage1112}.

\begin{figure}
  \centering
  \includegraphics[scale=0.75]{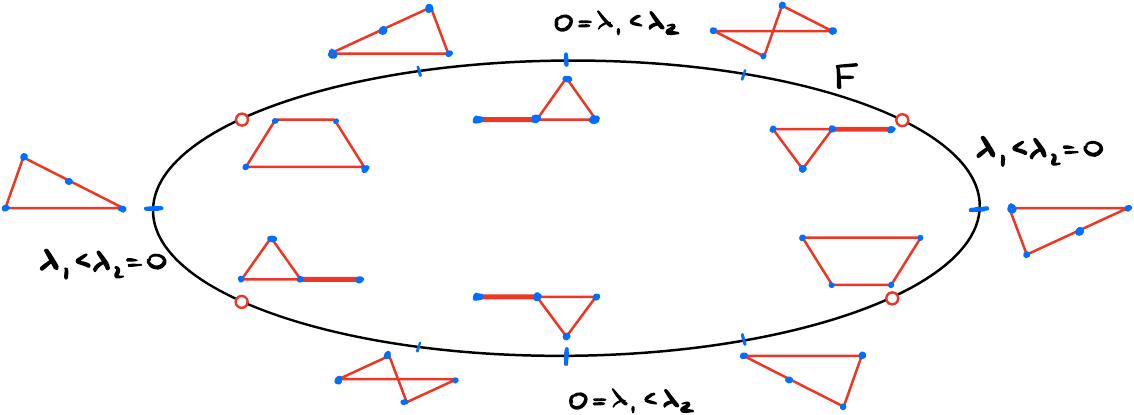}
  \caption{A schematic depiction of the linkage space from
    \eqref{eq:critF_ex2}.  The point \eqref{eq:ex2_alpha_ex} is the
    top left empty circle on the manifold $F$; the other points where
    $\lambda=0$ is a double eigenvalue are also denoted by empty
    circles.}
  \label{fig:linkage1112}
\end{figure}

Fixing a point in $F$, for example
\begin{equation}
  \label{eq:ex2_alpha_ex}
  \alpha = \left(\frac{2\pi}3, \pi, -\frac{2\pi}3\right)  
\end{equation}
in parametrization \eqref{eq:ex2_matrix}, we consider the spectrum of
$h_\alpha$ as a function of $\gamma$.  For different values of
$\gamma \in \RR$, these matrices are rank-1 perturbations of each
other (and of $h_N$), and it can be shown that the ordered eigenvalues
$\lambda_k(\gamma) = \lambda_k\big(h_\alpha(\gamma)\big)$ have the
following properties (see \Cref{fig:ex2_spec}(left)):
\begin{itemize}
\item Each $\lambda_1(\gamma),\ldots, \lambda_1(\gamma)$ is
  continuous and nondecreasing in $\gamma$.
\item For all $k=1,\ldots,4$,
  \begin{align*}
    &\lim_{\gamma \to -\infty}
    \lambda_{k+1}(\gamma)
    = \lim_{\gamma \to \infty}
    \lambda_k(\gamma) = \lambda_k(h_N) \\
    &\lim_{\gamma \to -\infty} \lambda_1(\gamma) =
    -\infty,
    \qquad
    \lim_{\gamma \to \infty} \lambda_5(\gamma) =
    \infty.
  \end{align*}
\end{itemize}
Since $\lambda_1(h_N)=0$ is in the spectrum of $h_\alpha(\gamma)$ for all
$\gamma$, we conclude that it is the eigenvalue $\lambda_2(\gamma)$
for $\gamma\to-\infty$ and the eigenvalue $\lambda_1(\gamma)$ for
$\gamma\to\infty$.  Therefore, there must be a value of $\gamma$ for
which $\lambda_1(\gamma) = \lambda_2(\gamma) = 0$.  If we use this
$\gamma$ in the definition of $h$, then we are guaranteed at least one
point on the critical manifold $F$ with a multiple eigenvalue $0$.

\begin{figure}
  \centering
  \includegraphics[scale=0.45]{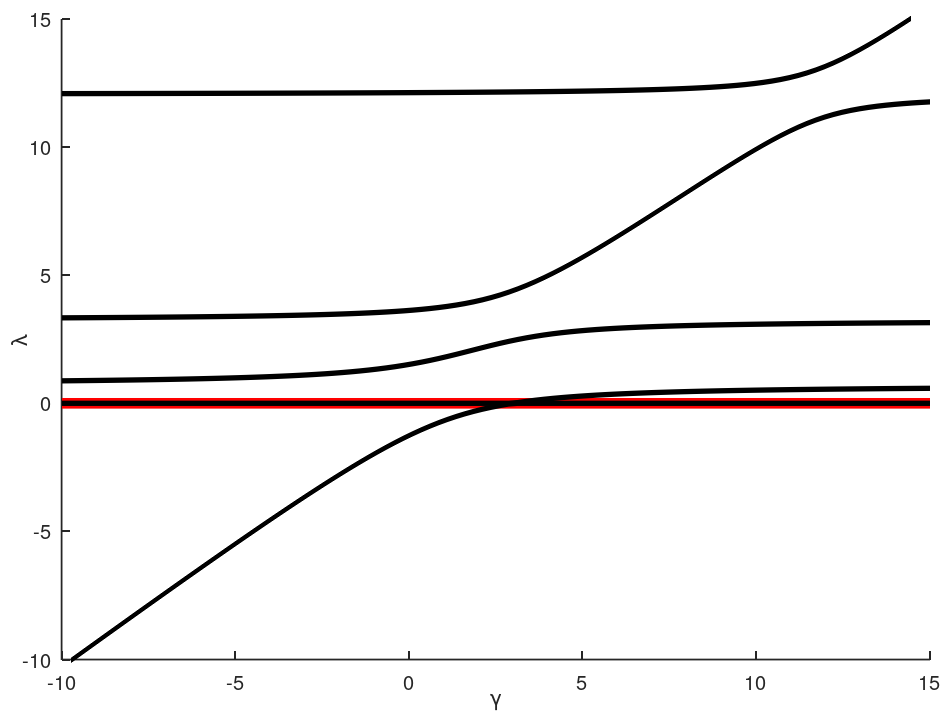}
  \includegraphics[scale=0.45]{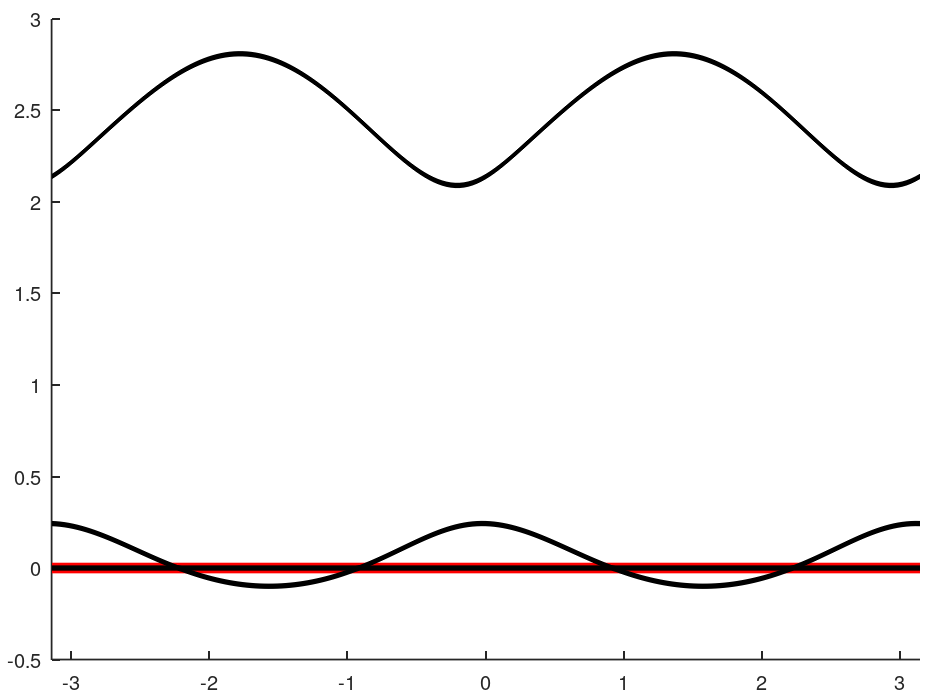}
  \caption{(Left) Spectrum of $h_\alpha$ with $\alpha$ given by
    \eqref{eq:ex2_alpha_ex}, as a function of the parameter $\gamma$.
    The value $0$ (highlighted) is in the spectrum for all
    $\gamma$. (Right) The first three eigenvalues of $h_\alpha$ along
    the critical submanifold $F$.  The eigenvalue $0$ (highlighted)
    alternates between being $\lambda_1$ and $\lambda_2$.}
  \label{fig:ex2_spec}
\end{figure}

For the particular choice \eqref{eq:ex2_alpha_ex}, this value of
$\gamma$ turns out to be $\gamma=3$, see \Cref{fig:ex2_spec}(left).
Fixing $\gamma$, we can inspect the spectrum of $h_\alpha$ along $F$,
see \Cref{fig:ex2_spec}(right).  Since the critical value $\lambda=0$ alternates between
being eigenvalue number 1 and eigenvalue number 2, we conclude that
the presence of points of multiplicity on the critical manifold $F$
persist under small perturbations of the underlying matrix $h$.

\section{Universality of nodal distribution and critical points}
\label{sec-experiments}

In this section the universal nodal distribution conjecture is stated.
Its history is reviewed.  Experimental evidence for this conjecture is provided
in examples of $3$-regular graphs with many non-symmetry critical points. 

\subsection{Numerical results and conjectures}\label{intro-distribution}

The motivation for understanding the nodal count distribution came
from the Bogomolny--Schmit conjecture, due to Smilansky, Gnutzmann,
and Blum \cite{BluGnuSmi_prl02} and Bogomolony, and Schmit
\cite{bogomolny2002percolation}, that predicts a Gaussian behavior of
the appropriately normalized nodal count, as we range over all
eigenfunctions of a quantum chaotic system. Smilansky, Gnutzmann, and
Weber \cite{GnuSmiWeb_wrm04} investigated the same question on a more
approachable model, a quantum graph, where they conjectured that
under some mild assumptions on the graph structure and edge lengths,
the number of zeros of the $k$-th eigenfunction should look like a
Gaussian shifted by $k-1$.  Alon, Band and Berkolaiko
\cite{AloBanBer_cmp17, AloBanBer_em22} provided a rigorous formulation
of the conjecture in the context of quantum graphs.

\quash{
For a quantum graph $(G,\ell)$,
where $G$ is the graph structure and $\ell$ is the vector of edge
lengths, the nodal surplus random variable is the limit
$\sigma(G,\ell):=\lim_{N\to\infty}\sigma_N(G,\ell)$, where
$\sigma_N(G,\ell)$ is the nodal surplus of the $k$-th eigenfunction of
$(G,\ell)$ for a random $k$ is chosen uniformly from $1,2,\ldots, N $.
It was shown in \cite{AloBanBer_cmp17, AloBanBer_em22} that
$\sigma(G,\ell)$ exists and is symmetric around its mean
$\mathbb{E}(\sigma(G,\ell))=\beta(G)/2$. It was also proven that if
the graph has disjoint cycles, then $\sigma(G,\ell)$ is the binomial
distribution Bin$(\beta(G),\frac{1}{2})$, which supports the
conjecture.  Denote the normalized nodal surplus by
 \begin{equation}\label{eqn-normalized-qg}
   \Sigma(G,\ell):=\frac{\sigma(G,\ell)-\beta(G)/2}{\mathrm{std}(\sigma(G,\ell))} \end{equation}
 In \cite{AloBanBer_em22}, the conjecture was formulated as follows: Given any sequence of connected graphs graphs $G_{j}$ with $\beta(G_{j})\to\infty$, consider all possible choices of $\QQ$-linearly independent $\ell_{j}$ for $G_{j}$, then
\[\sup_{\ell_{j}}\mathrm{dist} \left(\Sigma(G_{j},\ell_{j}) - N(0,1) \right) \longrightarrow 0,\]
where $\mathrm{dist}$ denotes the KS-distance, and  $N(0,1)$ denotes the normal distribution. Numerical evidence and analytical partial results were given in \cite{AloBanBer_em22}. 

\GB{I would reduce the amount of detail on the quantum graph formulation.}
}

The analogous question was asked in \cite{AloGor_jst23, AloGor_prep24}
for discrete graphs and matrices $h\in\cS(G)$, where it was shown to
be beneficial to consider $\sigma(h',k)$ for a random signing $h'$ of
$h$ as well as random $k$ from $1,\ldots,n$. 
It was conjectured in \cite{AloGor_jst23} that this random variable
$\sigma(h',k)$, for random signing and $k$, should have mean $\beta(G)/2$ and that the normalized nodal surplus distribution
\begin{equation}
  \label{eqn-normalized}
  \Sigma(h',k)
  :=
  \frac{\sigma(h',k)-\mathbb{E}(\sigma(h',k))}{\mathrm{std}(\sigma(h',k))}
\end{equation}
should converge to $N(0,1)$ as $\beta(G)\to\infty$.  This conjecture
was confirmed for graphs $G_{oo}$ with disjoint cycles\footnote{For quantum
  graphs with disjoint cycles this was established earlier, in
  \cite{AloBanBer_cmp17}.}: in \cite{AloGor_prep24} it was shown for
generic $h\in\cS(G_{oo})$ that the nodal surplus distribution is binomial:
\begin{equation}
  \label{eq:binomial}
  \sigma(h',k) \sim \textrm{Bin}\left(\beta(G_{oo}),\frac{1}{2}\right).   
\end{equation}

However, this conjecture is refuted by a new result of
\cite{AloMikUrs_prep24}: if $G$ is the complete graph and $h$ is a
random GOE matrix, then $\Sigma(h',k)$, for a random signing
\textbf{and a random $k$}, converges to the semi-circle (rather than
normal) distribution.

Nevertheless, there is overwhelming numerical evidence that if one
keeps $k$ fixed and takes the distribution over random signings only,
the conjecture is still valid.  The results of \cite{AloMikUrs_prep24}
are reconciled by observing that the mean and variance of such
distribution depend on $k$, see \Cref{fig:Mean_surplus_curves4}, and
merging the data over a set of $k$ creates a Gaussian mixture (which
can approximate any distribution).  To summarize, the following
refinement of our conjecture appears to hold in all known cases:

\begin{conjecture}
  \label{conj:nodal_universality}
  Given a graph G and a generic real
  symmetric matrix $h$ in $\cS(G)$, consider the random variable
  $\sigma(h',k)$ defined as the nodal surplus of the $k$-th
  eigenfunction of a random signing $h'$ of $h$. 

  Then, for any sequence of graphs $(G_j, h_j)$ with first Betti numbers
  $\beta(G_j) \to \infty$,
  \begin{equation}
    \label{eq:univ_conjecture}
    \sup_k \mathrm{dist}_{KS} \left( \Sigma (h_j',k), N(0,1) \right) \longrightarrow 0,    
  \end{equation}
  where $\mathrm{dist}_{KS}$ denotes the Kolmogorov--Smirnov distance,
  $\Sigma(h',k)$ is the normalized version of $\sigma(h',k)$, see
  \eqref{eqn-normalized}, and $N(0,1)$ denotes the standard normal
  distribution.
\end{conjecture}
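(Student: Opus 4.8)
Since \Cref{conj:nodal_universality} is stated as a conjecture, what follows is a plan of attack rather than a proof. The natural entry point is the nodal magnetic theorem of \cite{Ber_apde13,Col_apde13}, recovered here as the case $V_N=V$ of \Cref{thm:main} (see \Cref{rem:signings}): if $\lambda_k(h')$ is simple for a signing $h'$, then $\sigma(h',k)$ equals the Morse index of $\lambda_k\colon\tcM_h\to\RR$ at the point $[h']\in\Sig_h$. Thus the random variable in question is the Morse index of $\lambda_k$ at a uniformly random point of the $2^\beta$-element set $\Sig_h$ of complex-conjugation fixed points of the torus $\tcM_h\cong\TT^\beta$. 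The idea is to study the index generating function $P_k(t):=\sum_{h'\in\Sig_h}t^{\sigma(h',k)}$ through Morse theory on $\TT^\beta$, and then to upgrade the resulting combinatorial information to a central limit theorem, uniformly in $k$.

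\textbf{Step 1 (Morse bookkeeping) and the disjoint-cycle case.}
The critical set of $\lambda_k$ on $\tcM_h$ splits into (a) the signings $\Sig_h$, (b) the smooth non-symmetry critical submanifolds $F(V_N,h_N,\psi_N,\lambda)$ with $V_N\subsetneq V$, classified by \Cref{thm:main}, and (c) the locus of eigenvalue multiplicity. The contributions of (a) and (b) are given by part~\eqref{item:MBindex} of \Cref{thm:main} (for (b) one combines the normal Morse index with the internal topology of $F$ from part~\eqref{item:Ftopology}, keeping track of the index jumps of \Cref{rem:jump_of_index}); the contribution of (c) is handled by the stratified Morse theory for generic self-adjoint families of \cite{BerZel_im24}. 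This should yield an identity of the form $P_k(t)+R_k(t)=(1+t)^\beta+(1+t)Q_k(t)$ with $Q_k$ having non-negative coefficients, where $R_k$ is the contribution of (b)--(c). When $G$ has disjoint cycles, \Cref{cor:disjoint_cycles} eliminates (b), and once (c) is controlled one is forced to $P_k(t)=(1+t)^\beta$, i.e.\ $\sigma(h',k)\sim\mathrm{Bin}(\beta,\tfrac12)$ (reproving \cite{AloGor_prep24}); asymptotic normality uniform in $k$ is then the classical central limit theorem. This settles the conjecture in the disjoint-cycle family, modulo the stratified-Morse input, and is the model to keep in mind.

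\textbf{Step 2 (general graphs).}
For a general $G$ the correction $R_k$ is genuinely large --- the mean of $\sigma(h',k)$ depends on $k$, see \Cref{fig:Mean_surplus_curves4} --- so no exact binomial law holds and the Morse polynomial alone is insufficient. I would instead fix a $\ZZ_2$-cycle basis $\xi_1,\dots,\xi_\beta$ of $H_1(G,\ZZ_2)$, so that a signing is determined up to gauge by its (uniform) vector of fluxes $\epsilon=(\epsilon_1,\dots,\epsilon_\beta)\in\ZZ_2^\beta$. Flipping a single flux $\epsilon_j$ from $0$ to $\pi$ is a low-rank modification of the signing, and along such a path the nodal magnetic variational picture shows that $\lambda_k$ crosses a bounded number of levels, so its index changes by a bounded amount localized near $\xi_j$. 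This suggests a decomposition $\sigma(\epsilon)=\sum_{j=1}^\beta X_j(\epsilon)$ into bounded summands whose dependence on distant coordinates $\epsilon_m$ decays, after which a central limit theorem for weakly dependent sums (a martingale CLT in the natural filtration, Stein's method on a dependency graph, or a Walsh--Fourier concentration estimate) would give asymptotic normality of the normalized sum with a Berry--Esseen rate; taking the supremum of that rate over $k$ produces the uniform bound \eqref{eq:univ_conjecture}. An alternative route, parallel to the quantum-graph treatment of \cite{AloBanBer_cmp17,AloBanBer_em22}, is to express $\sigma(h',k)$ through the winding of a scattering-type determinant as the fluxes traverse $\TT^\beta$ and to prove a central limit theorem for that winding.

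\textbf{Main obstacle.}
The crux is Step 2, with two intertwined hard parts. First, one must control the variance from below: the conjecture is \emph{false} if $k$ is also randomized --- it produces the semicircle law for $K_n$, \cite{AloMikUrs_prep24} --- so the argument must use fixed $k$ essentially and must establish $\mathrm{std}\big(\sigma(h',k)\big)\asymp\sqrt\beta$ uniformly in $k$, which is exactly the delicate quantity whose $k$-dependence is visible in \Cref{fig:Mean_surplus_curves4}. Second, making ``weak dependence of the $X_j$'' rigorous requires understanding how the \emph{position} of $\lambda_k$ in the spectrum of a signing --- a global quantity --- responds to a single flux flip; this is where the classification of critical submanifolds in \Cref{thm:main}, which pinpoints where and how the index can jump as fluxes vary, would have to be leveraged to show that the accumulated effect of the non-symmetry critical points behaves like asymptotically Gaussian fluctuations rather than a structured bias. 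By comparison, the uniform treatment of the multiplicity strata via \cite{BerZel_im24} is a substantial but more routine ingredient.
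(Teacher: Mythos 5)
There is no proof in the paper to compare your proposal against: \Cref{conj:nodal_universality} is stated precisely as a conjecture, supported only by the numerical experiments of \S\ref{sec-experiments} (\Cref{fig:Mean_surplus_curves4} and \Cref{fig:MaxKS_vs_n}) and by the one rigorous case of graphs with disjoint cycles (\eqref{eq:binomial}, from \cite{AloGor_prep24}). You recognize this explicitly, so your text is a program, not a proof, and should be evaluated as such.

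Your Step~1 is essentially the route the authors themselves sketch at the end of \S\ref{sec:intro_nodal}: \Cref{cor:disjoint_cycles} eliminates the non-symmetry simple critical points when cycles are disjoint, and the remaining multiplicity strata ``(c)'' would be handled by the stratified Morse theory of \cite{BerZel_im24}, forcing the Morse polynomial $P_k(t)=(1+t)^\beta$ and hence a binomial law, with CLT giving normality uniformly in $k$ (the binomial law is $k$-independent). This matches the paper's intent and would re-derive \cite{AloGor_prep24}. Two caveats. First, your bookkeeping identity $P_k(t)+R_k(t)=(1+t)^\beta+(1+t)Q_k(t)$ is written as if the smooth critical set contributes in the usual Morse--Bott fashion, but \Cref{rem:jump_of_index} and \S\ref{sec:examples} show the $F$'s are generically \emph{not} Morse--Bott (the normal index jumps along $F$, and $\lambda_k$ can develop multiplicity on $F$); so even the smooth contribution ``(b)'' is not a clean Morse--Bott term and needs the same stratified technology as ``(c)''. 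Second, you would need to confirm that in the disjoint-cycle family the stratified Morse inequalities actually collapse to an equality $P_k(t)=(1+t)^\beta$ rather than merely a lower bound — the paper only says this ``opens a path'', it does not claim the step is routine.

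Step~2 is your own speculation and goes well beyond anything in the paper. It is a reasonable direction, and you are right that the crux is a lower bound on $\mathrm{std}(\sigma(h',k))$ uniform in $k$ together with a quantitative weak-dependence statement, and right to flag that \cite{AloMikUrs_prep24} forbids any argument that secretly averages over $k$. The most fragile claim in Step~2 is that a single flux flip ``changes the index by a bounded amount localized near $\xi_j$''. A flux flip is gauge-equivalent to flipping the sign of one off-diagonal entry, which is a rank-$2$ perturbation; interlacing bounds the shift in the \emph{spectral position} of $\lambda$, but the nodal count $\nodcnt(h',k)$ in \eqref{eqn-nodal-def} depends on the eigenvector, which is a global object, and there is no a priori bound keeping the change in $\sigma(h',k)$ small or localized to the neighborhood of $\xi_j$. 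Establishing such a bound — or some workable surrogate — is precisely the missing key lemma; as written it is asserted, not argued. The parallel route via a winding-number/scattering determinant would face the analogous difficulty of proving a local-dependence estimate. So your plan correctly names the obstacles, but at this level of detail it is an outline of a research program rather than a near-proof of \eqref{eq:univ_conjecture}.
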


\subsection{Experimenting with 3-regular graphs}
\label{sec-experiments-3reg}

If $G$ is $3$-regular and $V_{N}$ is an admissible support, then \Cref{subsec-admissible} says that any $r\in V_{ZN}$ has exactly three neighbors in $V_{N}$ and therefore no other neighbors. We conclude the following lemma:

\begin{lem}
  \label{lem:VN_3reg}
If $G$ is 3-regular, then a set of vertices $V_{N}$ is an admissible support if and only if the following two conditions hold:
\begin{enumerate}
    \item The graph $G_N$ induced by the vertex set $V_{N}$ is connected.
    \item The complement of $V_{N}$ is an independent set in $G$. That is, $E_{ZZ}=\emptyset$ and $V_{ZZ}=\emptyset$.
\end{enumerate}
\end{lem}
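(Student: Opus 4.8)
The plan is to unwind the definition of admissible support (\S\ref{subsec-admissible}) in the special case where $G$ is 3-regular. Recall that $V_N \subset V$ is admissible if and only if (a) the induced subgraph $G_N$ is connected, and (b) every vertex $r \notin V_N$ has either zero neighbors in $V_N$ or at least three neighbors in $V_N$. Condition (a) is identical in both formulations, so the entire content of the lemma is to show that, under 3-regularity, condition (b) is equivalent to ``$V_{ZN}$ consists of vertices with exactly three neighbors in $V_N$, and $E_{ZZ} = \emptyset$, $V_{ZZ}=\emptyset$.''

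First I would observe the easy direction of the rephrasing of (b): since $G$ is 3-regular, a vertex $r \notin V_N$ has degree exactly $3$ in $G$, so ``at least three neighbors in $V_N$'' forces \emph{all three} of its neighbors to lie in $V_N$. Thus condition (b) says precisely: every $r \notin V_N$ either has no neighbor in $V_N$ or has all three of its neighbors in $V_N$. In the notation of \eqref{eq:VZN_def}--\eqref{eq:VZZ_def}, the vertices of the first kind are exactly $V_{ZZ}$ and those of the second kind are exactly $V_{ZN}$, so (b) is equivalent to: every vertex of $V_{ZN}$ has all three neighbors in $V_N$.

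Next I would translate the condition $V_{ZN}$-vertices-have-all-neighbors-in-$V_N$ into a statement about $V_{ZZ}$ and $E_{ZZ}$. If every $r \in V_{ZN}$ has all neighbors in $V_N$, then no edge of $E_{ZZ}$ can be incident to a vertex of $V_{ZN}$; combined with $V = V_N \sqcup V_{ZN} \sqcup V_{ZZ}$ and the fact that a vertex in $V_{ZZ}$ has all neighbors outside $V_N$, this forces every $r \in V_{ZZ}$ to have all three neighbors in $V_{ZZ}$. But then, summing degrees inside $G_{ZZ}$, the subgraph induced by $V_{ZZ}$ is 3-regular, hence has $|E_{ZZ}| = \tfrac{3}{2}|V_{ZZ}| \ge |V_{ZZ}|$; yet by \eqref{eq:EzzVzz} we only know $|E_{ZZ}| - |V_{ZZ}| \ge 0$. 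To force $V_{ZZ} = \emptyset$ I would instead argue: if $V_{ZZ} \neq \emptyset$, pick a shortest path in the connected graph $G$ from $V_{ZZ}$ to $V_N$; its last ``outside-$V_N$'' vertex $r$ lies in $V_{ZN}$ (it has a neighbor in $V_N$) and, because the path entered $r$ from outside $V_N$, $r$ also has a neighbor outside $V_N$ — contradicting that $r \in V_{ZN}$ has all neighbors in $V_N$. Hence $V_{ZZ} = \emptyset$, and then $E_{ZZ} = \emptyset$ since every edge with both endpoints outside $V_N$ would be an $E_{ZZ}$ edge. Conversely, if $E_{ZZ} = V_{ZZ} = \emptyset$, then every vertex outside $V_N$ is in $V_{ZN}$ and, having no neighbor in $V \setminus V_N = V_{ZN}$ (as $E_{ZZ}=\emptyset$) other than possibly within $V_{ZN}$, actually has all its neighbors in $V_N$; combined with connectedness this is exactly admissibility.

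The only mildly delicate point — and the one I would flag as the main obstacle — is the argument that $V_{ZZ} = \emptyset$ follows from the boundary-degree condition together with \emph{connectedness} of $G$; it is tempting to forget that one must use connectedness of the ambient $G$ here (not just of $G_N$), since otherwise a 3-regular component disjoint from $V_N$ would satisfy all local conditions while giving $V_{ZZ} \neq \emptyset$. The shortest-path argument above cleanly uses connectedness and avoids any case analysis. Everything else is bookkeeping with the partition \eqref{eq:V_decomp} and the degree constraint, so the proof should be short.
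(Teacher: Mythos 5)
Your proof is correct and rests on the same central observation the paper makes in the sentence immediately preceding the lemma: in a $3$-regular graph, any $r \in V_{ZN}$ with at least three neighbors in $V_N$ must have \emph{all three} of its neighbors there and therefore none elsewhere. The paper treats the remaining consequences ($V_{ZZ} = \emptyset$ and $E_{ZZ} = \emptyset$, using connectedness of $G$) as immediate and leaves them implicit, whereas you write out the shortest-path argument; both routes are sound and essentially the same.
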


Next we characterize possible critical manifolds $F$, cf.\ \Cref{thm:main}. 

\begin{lem}\label{lem:F 3reg}
If $G$ is 3-regular, $h\in\cS(G)$ is generic (as in \ref{ass}), $V_{N}$ is an admissible support, and $\psi$ is an eigenvector of some signing of $h|G_N$, then for any $r\notin V_{N}$ define
\[
b_{r}=(|h_{rs}\psi_{s}|)_{s\sim r}\in \mathbb{R}_{+}^3.
\]
The corresponding critical manifold $F$ then has two possibilities:
\begin{enumerate}
    \item If some 3-vector $b_{r}$ does not satisfy the triangle inequality \eqref{eqn-triangle}, then $F=\emptyset$;
    \item If every 3-vector $b_{r}$ satisfies the triangle inequality, then $F$ consists of exactly $2^{n-|V_{N}|}$ points.
\end{enumerate}
\end{lem}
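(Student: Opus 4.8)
The plan is to read both cases off Theorem~\ref{thm:main}\eqref{item:Ftopology}, using $3$-regularity to collapse the topological model \eqref{eqn-topology} to a product of $3$-link linkage spaces. First I would invoke Lemma~\ref{lem:VN_3reg}: for $3$-regular $G$ with admissible $V_N$ we have $V_{ZZ}=\emptyset$ and $E_{ZZ}=\emptyset$, so $V\setminus V_N=V_{ZN}$, every $r\in V_{ZN}$ has exactly three neighbours (all in $V_N$), and $|V_{ZN}|=n-|V_N|$. In particular the vector $b_r=(|h_{rs}\psi_s|)_{s\sim r}\in\RR_+^3$ of the statement is precisely the vector \eqref{eq:br_def} occurring in Theorem~\ref{thm:main}. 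Substituting $|E_{ZZ}|=|V_{ZZ}|=0$ into \eqref{eqn-topology}, the torus factor is trivial, so Theorem~\ref{thm:main}\eqref{item:Ftopology} gives
\[
  F\;\cong\;\prod_{r\in V_{ZN}}M(b_r),\qquad b_r\in\RR_+^3,
\]
and by \eqref{eq:MBdimension} this is $0$-dimensional, since each $r\in V_{ZN}$ contributes exactly three edges to $E_{ZN}$, whence $|E_{ZN}|-3|V_{ZN}|=0$.

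Next I would analyse a single factor $M(b_r)$ for a $3$-vector $b_r$. As in the proof of Theorem~\ref{thm:main}, after replacing $h$ by a signing of itself --- which preserves both the genericity hypotheses \S\ref{ass} and the torus $\tcM_h$ --- we may assume $h_N=h|G_N$, so that $\psi$ is an eigenvector of $h|G_N$ and Lemma~\ref{lem:generic_linkage} applies with $V'=V_N$: the vector $b_r$ defines a generic linkage, i.e.\ \eqref{eq:evec_generic_linkage} holds, which for a $3$-vector says exactly that $b_r$ does \emph{not} satisfy the triangle inequality with equality. The non-emptiness criterion \eqref{eq:nontriv_F}, which for a $3$-vector reads $2\max_j(b_r)_j\le\sum_j(b_r)_j$, is the (non-strict) triangle inequality. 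So two cases arise: if $b_r$ fails the triangle inequality then $M(b_r)=\emptyset$; if $b_r$ satisfies it, then by genericity it does so strictly, and by \Cref{thm-linkage} (dimension and number of connected components of a linkage space) $M(b_r)$ is a compact $0$-manifold with $\beta_0(b_r)=2$ components, i.e.\ exactly two points. Concretely, once the phase on the tree edge $(rr')$ is normalised to $0$, the equation $(b_r)_1+e^{i\alpha_2}(b_r)_2+e^{i\alpha_3}(b_r)_3=0$ has precisely the two complex-conjugate solutions, the two reflections of a nondegenerate triangle with side lengths $b_r$.

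Finally I would assemble the product. If some $b_r$ violates the triangle inequality, then $M(b_r)=\emptyset$ and hence $F=\prod_{r\in V_{ZN}}M(b_r)=\emptyset$, which is case~(1). Otherwise every factor has exactly two elements, so $F$ is a finite set of $\prod_{r\in V_{ZN}}2=2^{|V_{ZN}|}=2^{\,n-|V_N|}$ points, which is case~(2). The argument is essentially bookkeeping on top of Theorem~\ref{thm:main} and the linkage appendix; the one step requiring care is the promotion of ``triangle inequality'' to ``strict triangle inequality'' in the second paragraph, which is exactly where the genericity hypothesis \S\ref{ass} enters (via Lemma~\ref{lem:generic_linkage}) --- without it a degenerate collinear $b_r$ would give a one-point $M(b_r)$ and the count in case~(2) would fail.
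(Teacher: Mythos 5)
The paper states Lemma~\ref{lem:F 3reg} without supplying a proof, so there is no argument to compare against; your proof is the natural one and it is correct. You correctly combine Lemma~\ref{lem:VN_3reg} (to kill the $\TT^{|E_{ZZ}|-|V_{ZZ}|}$ factor and identify $V_{ZN}=V\setminus V_N$) with Theorem~\ref{thm:main}\eqref{item:Ftopology}, reduce to the $d=3$ case of Theorem~\ref{thm-linkage}, and --- crucially --- invoke Lemma~\ref{lem:generic_linkage} to rule out the degenerate collinear case, which is exactly where the genericity hypothesis is needed for the count $2^{n-|V_N|}$ to hold. Two minor points worth making explicit if you write this up: (a) you should say in one line why replacing $h$ by a signing $h'$ is harmless, namely that $\tcM_{h'}=\tcM_h$, that the $b_r$ are unchanged since $|h'_{rs}|=|h_{rs}|$, and that assumption~\ref{ass}\ref{item-subset} ranges over all signings and so is preserved; and (b) the verification that $\beta_0(b_r)=2$ for a generic $3$-vector satisfying \eqref{eqn-triangle} reduces to the arithmetic identity $M_2+M_3>\tfrac12\sum b_j \iff M_2+M_3>M_1$, which is worth stating rather than leaving implicit.
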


It is only necessary to check $2^{\beta(G_N)}$ signings for each
choice of $G_N$, rather than $2^{|E_{NN}|}$ signings.  As a
consequence of \Cref{lem:VN_3reg}, $\beta(G_N) = \beta(G) - 2
(n-|V_N|)$.

\begin{figure}
    \centering
    \includegraphics[width=0.6\textwidth]{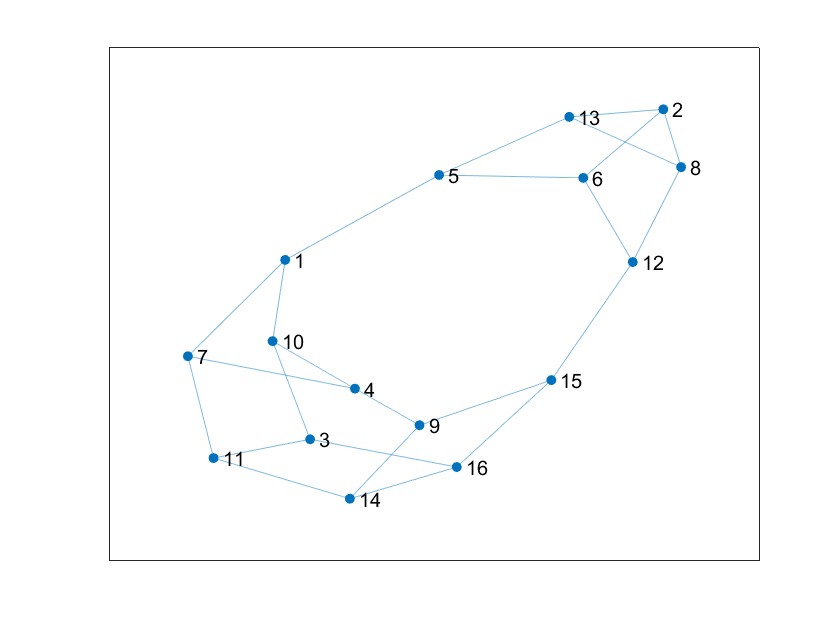}
    \caption{An example of one of the randomly chosen 3-regular graphs with $n=16$.}
    \label{fig:graph_n16_1}
\end{figure}

Two experiments were conducted. In the first, we considered three random
3-regular graphs with $n=16$, $18$ and $20$, correspondingly; The
$n=16$ graph is shown in Figure~\ref{fig:graph_n16_1} for one
example. We considered matrices of the form $h=A+D$, where $A$ is the
adjacency matrix and $D$ is diagonal with
\begin{equation}
  \label{eq:diag_entries}
  D_{ss}=\sqrt{2}s \mod{1},
\end{equation}
chosen to minimize the chances of $h$ being non-generic.
The number of admissible supports
found was found to be $\num{566}$, $\num{1292}$, and $\num{2602}$,
respectively.  For each admissible $V_{N}$ and each signing of
$h|G_N$, the eigenvectors were computed and checked for compliance
with the triangle inequality, see \Cref{lem:F 3reg}.

\begin{figure}
  \centering
  \includegraphics[scale=0.75]{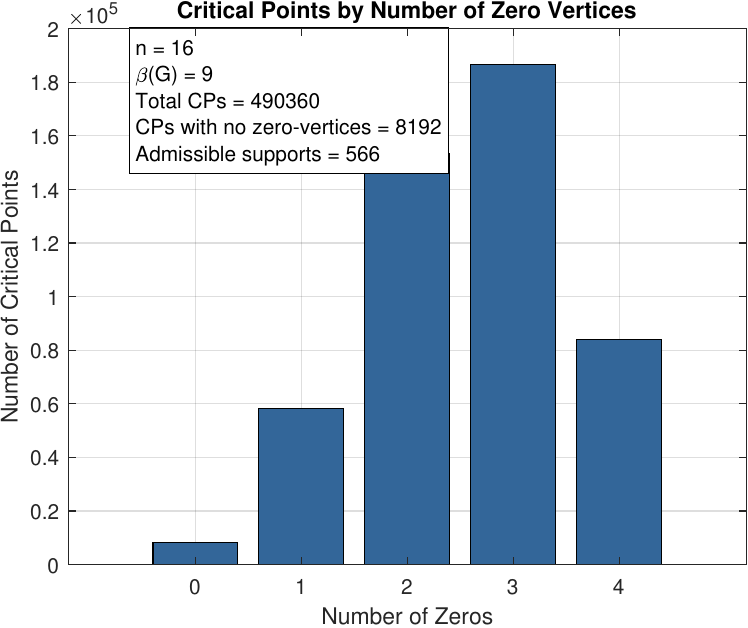}
  \caption{Number of critical points in a 3-regular graph against the
    size of the zero set $V \setminus V_N$ of the corresponding
    eigenvector (the symmetry critical points are the 0-th bar).}
  \label{fig:CP_hist}
\end{figure}

As a result, $\num{490360}$, $\num{1582176}$, and $\num{5908840}$
critical points were identfied in the three graphs, as compared to
$n2^{\beta(G)}=\num{8192}$, $\num{18432}$ and $\num{40960}$ symmetric
critical points, respectively.  For the graph with $n=16$, we further
classify the critical points by the size of their support (or
equivalently, the zero set $V \setminus V_N$), presenting the
histogram in \Cref{fig:CP_hist}.  We conclude that critical points due
to eigenvectors with vanishing entries appear in
great abundance.

In the second experiment, we aimed to examine larger graphs with higher values of $n$ and to range over more graphs. Here, we calculated the nodal count for all eigenvectors of all signings. Due to Theorem 3.2 in \cite{AloGor_jst23}, we know that if $\lambda_{k}$ has no other critical points except for the symmetry points, then the nodal surplus of a random signing $\sigma(h',k)$ is binomial with mean $\mathbb{E}(\sigma(h',k))=\beta(G)/2$. As a result, whenever $\mathbb{E}(\sigma(h',k))\ne\beta(G)/2$, we know that there must be other critical points of $\lambda_{k}$.

In this experiment, we considered $n=10,14,18,22,26,30,34,38,42$. For each $n$, we chose five random graphs and calculated the nodal count for all eigenfunctions of all signings. In Figure~\ref{fig:Mean_surplus_curves4}, we plot ${\mathbb{E}(\sigma(h',k))}/{\beta(G)}$ against $k/n$ (both bounded between $0$ and $1$) for all 35 graph samples with $n\ge 18$. Clearly, these points concentrate around a curve far from the constant line ${\mathbb{E}(\sigma(h',k))}/{\beta(G)}\equiv \frac{1}{2}$.

 Figure~\ref{fig:MaxKS_vs_n} provides evidence supporting  the conjecture that
\[
\sup_k \mathrm{dist}_{KS}\left( \Sigma(h_j',k), N(0,1) \right)\to 0,\quad\text{as}\quad \beta(G)\to\infty,
\]
where the distance is the Kolmogorov–-Smirnov distance, and
$\Sigma(h',k)$ is defined in \eqref{eqn-normalized}.  We plot the
values of $\sup_k \mathrm{dist}\left( \Sigma(h_j',k), N(0,1) \right)$
for each graph as a function of $n$. The plot strongly suggests these
distances tend uniformly to zero as $n$ grows.

\begin{figure}
    \centering
    \includegraphics[width=0.75\textwidth]{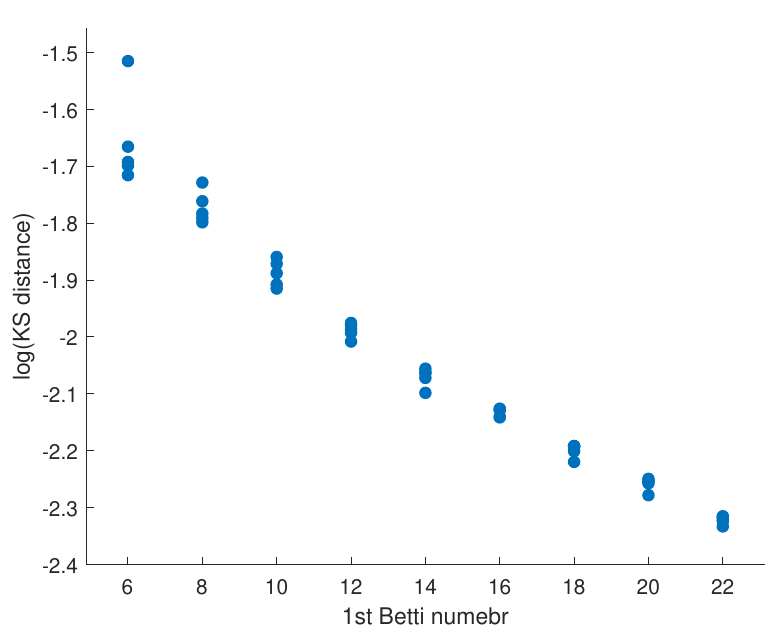}
    \caption{Plot of the logarithm of the maximal distance $\max_k
      \mathrm{dist} \left( \Sigma (h',k), N(0,1) \right)$ against $\beta(G)=\frac{n}{2}+1$, for $45$  randomly chosen 3-regular
      graphs, $5$ graphs for each $n\in\{10,14,18,\ldots,42\}$, supporting the conjecture. }
    \label{fig:MaxKS_vs_n}
  \end{figure}

\appendix

\section{Planar linkages}
\label{sec:linkage}

\subsection{Planar linkages}
\label{subsec-linkage}
Let $b = (b_1, b_2, \cdots, b_d) \in \RR^d$ with each $b_j > 0$ (abbreviated $b>0$).  
Consider the set of solutions $(\theta_1, \theta_2,
\cdots, \theta_d) \in (S^1)^d$ of the {\em planar linkage equation}
\begin{equation}\label{eqn-linkage} \sum_{j=1}^d b_j e^{i \theta_j} = 0\end{equation}
which arises in the context of this paper as the eigenvalue equation
$(h_\alpha \psi)_r = 0$ at a vertex $r$ with $\psi_r = 0$.  Equation
\eqref{eqn-linkage} represents a collection of $d$ vectors in
$\RR^2 = \CC$, of lengths $b_j$, placed tail to head, beginning and
ending at the origin, that is, a {\em planar linkage}.  The {\em
  pre-linkage set} $\widetilde{M}(b)$ is the set of solutions
$\theta \in (S^1)^d$.  The set of solutions, modulo rotations of the
whole linkage, is a semialgebraic set denoted $M(b)$
\cite{FarFro_tams13,FarSch_gd07,Walker_thesis85} and referred to as
the planar linkage space attached to $b$.  Equations for $M(b)$ in the
quotient $(S^1)^d/S^1 \cong (S^1)^{d-1}$ (after dividing by rotations)
may be obtained by choosing the rotation equal to $-\theta_d$ so that
the last linkage lies on the real axis.  Then $M(b)$ is the zero set
of the function $f: (S^1)^{d-1} \to \RR$,
\begin{equation}\label{eqn-Mb}
f(\theta) := \sum_{j=1}^{d-1} b_j e^{i \theta_j} + -b_d.
\end{equation}

Let us say that $b \in \RR^d_{>0}$ is \emph{generic} if 
\begin{equation}\label{eqn-generic-linkage}
\sum_{j=1}^d \epsilon_j b_j \ne 0  \text{ for every } \epsilon_j \in \{ +1, -1\}. \end{equation}
Let $(M_1, M_2, \ldots, M_d)$ be the a permutation of $(b_1,\ldots,
b_d)$ ordered in decreasing order (in particular, $M_s = \max b_j$).  Let $\beta_0(b)$ denote the number of connected components of $M(b)$.

\begin{thm}
  \label{thm-linkage}
Suppose $b>0$ satisfies the generic linkage equation \eqref{eqn-generic-linkage}. 
Then $\beta_0(b) \in \{0,1,2\}$.
The linkage manifold $M(b)$ is non empty ($\beta_0(b) \ne 0$) if and only if $d \ge 3$ and
\begin{equation}\label{eqn-triangle}
M_1 < \frac12 \sum_r b_r. \end{equation}  
In this case,
\begin{enumerate}
\item \label{item:surjective_differential}
  The complex valued map $f$ defining $M(b)$, equation~\eqref{eqn-Mb},
  has a surjective differential $df_\theta:\RR^{d-1}\to\CC$ at every point $\theta\in M(b)$. 
\item \label{item-nonempty} the linkage space $M(b)$ is an analytic manifold of dimension $d-3$.
\item \label{item:connected_components}
  If $M_2 + M_3 < \frac{1}{2}\sum_r b_r$ then $M(b)$ is connected ($\beta_0(b) = 1$).  Otherwise it has two connected components ($\beta_0(b) = 2$),
  exchanged by complex conjugation, each diffeomorphic to the torus $\TT^{d-3}$.
\item \label{item:continuous} If $b' >0$ is sufficiently close to $b$ then the manifolds $M(b)$, $M(b')$ are diffeomorphic.
\end{enumerate}
\end{thm}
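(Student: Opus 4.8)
The plan is to dispatch non-emptiness first, then the regularity statements (i)--(ii), then the connectivity/component count (iii), and finally the perturbation stability (iv), which will turn out to be a formal consequence of (i).

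\emph{Non-emptiness.} I would first record the elementary fact that for positive reals $c_1,\dots,c_m$ the set of all sums $\sum_{j=1}^m c_j e^{i\varphi_j}$, $\varphi_j\in S^1$, is the closed centred annulus with outer radius $\sum_j c_j$ and inner radius $\max\!\big(0,\ 2\max_j c_j-\sum_j c_j\big)$; this follows by induction on $m$, since Minkowski-summing a centred circle with a centred annulus again gives a centred annulus. Ordering $b_1\ge\dots\ge b_d$, the equation $\sum_j b_j e^{i\theta_j}=0$ is solvable iff the circle of radius $b_1$ meets the annulus $S_2$ for $(b_2,\dots,b_d)$, i.e. iff $\max(0,2b_2-\sum_{j\ge2}b_j)\le b_1\le\sum_{j\ge2}b_j$. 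The lower bound holds automatically because $b_2\le b_1$ and $\sum_{j\ge2}b_j\ge b_2$; the upper bound is $b_1\le\tfrac12\sum_r b_r$, and \eqref{eqn-generic-linkage} excludes equality, giving precisely the strict inequality \eqref{eqn-triangle} with $M_1=b_1$. For $d\le2$ a direct check shows no solution exists ($d=1$ is impossible since $b_1>0$; $d=2$ would force $b_1=b_2$, contradicting \eqref{eqn-generic-linkage}), consistently with the observation that \eqref{eqn-triangle} cannot hold when $d\le 2$; so $\beta_0(b)=0$ whenever \eqref{eqn-triangle} fails.

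\emph{Surjective differential and manifold structure (parts (i)--(ii)).} Work with $f:(S^1)^{d-1}\to\CC$ of \eqref{eqn-Mb}, whose differential at $\theta$ is the $\RR$-linear map $v\mapsto i\sum_{j=1}^{d-1}b_j e^{i\theta_j}v_j$. This map $\RR^{d-1}\to\CC$ fails to be onto only if the vectors $b_j e^{i\theta_j}$, $1\le j\le d-1$, are all real multiples of a single complex number, i.e. $e^{i\theta_j}\in\{e^{i\phi},-e^{i\phi}\}$ for a common $\phi$ and signs $\epsilon_j=\pm1$. But then $f(\theta)=0$ reads $e^{i\phi}\sum_{j\le d-1}\epsilon_j b_j=b_d>0$, forcing $e^{i\phi}=\pm1$ and a vanishing signed sum $\sum_{j\le d-1}\epsilon_j b_j\mp b_d=0$, contradicting \eqref{eqn-generic-linkage}. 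Hence $df_\theta$ is onto at every $\theta\in M(b)=f^{-1}(0)$, which is (i); the regular value theorem then realises $M(b)$ as a real-analytic submanifold of $(S^1)^{d-1}$ of dimension $(d-1)-2=d-3$, which is (ii). I emphasise that this argument uses only \eqref{eqn-generic-linkage}, not \eqref{eqn-triangle}.

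\emph{Connectivity and component count (part (iii)) — the main obstacle.} First, complex conjugation $\theta\mapsto-\theta$ is a \emph{free} involution on $M(b)$: a fixed point in $(S^1)^d/S^1$ would have all $\theta_j$ equal modulo $\pi$, hence be a collinear configuration, and these are excluded by exactly the sign-sum argument used in (i). I would then induct on $d$ using the ``fold the last two edges'' map $q:M(b)\to[D_-,D_+]$ sending a closed polygon to the length $D$ of the diagonal that skips edges $d-1$ and $d$. Over the interior of its image $q$ is a fibre bundle whose fibre is the disjoint union of two copies — the two placements of the vertex between edges $d-1$ and $d$ — of the lower polygon space $M(b_1,\dots,b_{d-2},D)$, while over each endpoint of the interval the two copies merge. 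Tracking, via the non-emptiness criterion already proved, whether the fibre is nonempty at each endpoint (which amounts to comparing $b_{d-1}\pm b_d$ with the range of the diagonal of the first $d-2$ edges) yields $\beta_0(b)\in\{1,2\}$, the stated dichotomy in terms of $M_2+M_3$ versus $\tfrac12\sum_r b_r$, and the fact that in the disconnected regime each component is a torus $\TT^{d-3}$, the two being swapped by conjugation; together with the first paragraph this gives $\beta_0(b)\in\{0,1,2\}$ in all cases. Alternatively the component count and these diffeomorphism types are part of the standard classification of planar polygon spaces and may simply be quoted from \cite{Walker_thesis85,FarSch_gd07,FarFro_tams13}. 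I expect this inductive bookkeeping (or the matching against the cited classification) to be the only genuinely delicate point.

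\emph{Perturbation stability (part (iv)).} Let $\mathcal G\subset\RR^d_{>0}$ be the open set of $b$ satisfying \eqref{eqn-generic-linkage}, and put $\widehat M=\{(b,\theta)\in\mathcal G\times(S^1)^{d-1}:\ f_b(\theta)=0\}$ with $f_b$ as in \eqref{eqn-Mb}. By the computation in (i), which used only \eqref{eqn-generic-linkage}, the partial differential $\partial_\theta f_b$ is onto at every point of $\widehat M$; hence $\widehat M$ is a smooth manifold and the projection $\widehat M\to\mathcal G$ is a submersion with fibres $M(b)$ compact in $(S^1)^{d-1}$. By Ehresmann's fibration theorem it is a locally trivial fibre bundle, so $M(b')$ is diffeomorphic to $M(b)$ for every $b'$ in a neighbourhood of $b$, proving (iv).
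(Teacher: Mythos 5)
Your proof is correct, and for parts (i)--(ii) it is identical to the paper's argument (compute $df_\theta$, observe that failure of surjectivity forces a collinear configuration, rule this out via the genericity condition \eqref{eqn-generic-linkage}). Part (iii) is deferred to the polygon-space literature in both accounts (the paper cites \cite{KapMil_jdg95,FarSch_gd07,HauKnu_aif98}; your ``fold the last two edges'' induction is a reasonable sketch of what those sources do, though you rightly flag it as the delicate part rather than carrying it out). Where you genuinely diverge from the paper is in the non-emptiness criterion and in part (iv). For non-emptiness the paper disposes of the ``only if'' direction in a sentence and is silent on ``if''; your Minkowski-sum lemma --- that the range of $\sum c_j e^{i\varphi_j}$ is the centred annulus with outer radius $\sum c_j$ and inner radius $\max(0,\,2\max_j c_j - \sum_j c_j)$ --- furnishes a clean, self-contained proof of both directions and is a nice addition, though you should note it requires $m\ge 2$ (for $m=1$ the image is a circle, not an annulus; the induction still starts correctly). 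For part (iv) the paper reduces to perturbing one coordinate of $b$ at a time and invokes the implicit function theorem, whereas you set up the universal family $\widehat M \to \mathcal G$, check it is a proper submersion, and apply Ehresmann's fibration theorem; this is cleaner, gives local triviality over all of $\mathcal G$ at once rather than in one coordinate direction, and correctly exploits that the surjectivity established in (i) used only \eqref{eqn-generic-linkage} and not \eqref{eqn-triangle}. Both routes are valid; the Ehresmann argument is the more structural one, at the cost of importing a slightly heavier theorem.
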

\begin{proof}
If the largest link is greater than the sum of the others then the linkage cannot close so there
is no solution.  (This also implies $d \ge 3$ since the only solution with $d = 2$ fails the
genericity assumption). The defining complex function $f$ can be replaced by the real vector valued function $f_{\RR}=(\Re(f),\Im(f))$ which is real analytic, so $M(b)$ is a real analytic variety. To conclude both (\ref{item:surjective_differential}) and $\dim(M(b))\le (d-1)-2=d-3$, we need to show that the differentials of the real and imaginary parts $d(\Re(f(\theta)))=\Re( df(\theta)) $ and $d(\Im(f(\theta)))=\Im(df(\theta)) $ are linearly independent over $\R$ at every point $\theta\in M(b)$. We calculate
\begin{align}
\label{eqn-dF}
d f_\theta &= i\left(b_1 e^{i\theta_1}, b_2 e^{i \theta_2}, \cdots, b_{d-1}e^{i \theta_{d-1}} \right)\\
\nonumber
&= i e^{i\theta_1}\left(b_1, b_2e^{i(\theta_2-\theta_1)}, \cdots, b_{d-1}e^{i(\theta_{d-1}-\theta_1)}\right).\end{align}
Since $b_1\neq0$, the real and imaginary parts are linearly dependent
if and only if the vector $\left(b_1, b_2e^{i(\theta_2-\theta_1)},
  \cdots, b_{d-1}e^{i(\theta_{d-1}-\theta_1)}\right)$ is real, which
means $e^{i(\theta_j-\theta_1)}\in\{-1,1\}$ for all $j$.  Defining
$\epsilon_{j}=e^{i(\theta_j-\theta_1)}$, the equation $f(\theta) = 0$ becomes
\[ f(\theta)= e^{i\theta_1} \sum_{j=1}^{d-1} \epsilon_{j} b_j +
  b_d=0,\]
implying $e^{i\theta_1}\in\{-1,1\}$ and therefore violating the genericity condition.

Item (3) in the theorem is proven in \cite{KapMil_jdg95,FarSch_gd07,HauKnu_aif98}.
For part (4) it suffices to consider the case when $b, b'$ differ in a single coordinate which we may
take to be the last coordinate.  If $b_d$ is sufficiently close to $b'_d$ then
equation \eqref{eqn-Mb} and the implicit function theorem
gives a diffeomorphism $M(b) \cong M(b')$.
\end{proof}

\section{The derivatives}\label{sec-derivatives}
\subsection{}\label{subsec-criticality}

We summarize several calculations from \cite{AloGor_jst23}, see also \cite{BerCanCoxMar_paa22}. Let $h \in \cH(G)$.  Assume $\lambda_k$
is a simple eigenvalue of $h$.  Consider varying $h$
within the torus of magnetic perturbations, in the analytic 1-parameter family $h(t)=(t\alpha *h)$.  
There is an associated analytic family of normalized eigenvectors $\psi(t)$
 with $h_t\psi(t) = \lambda_k(t)\psi(t)$. Using Leibniz dot notation for
the derivative at $t=0$, we have $\dot h \psi + h \dot\psi = \dot\lambda \psi + \lambda \dot \psi$.
So $(\dot h)_{rs} = i \alpha_{rs}h_{rs}$ and
\begin{equation}\label{eqn-derivative}
    \dot \lambda= \left< \dot h \psi, \psi\right> = 2 \sum_{r \sim s}  \alpha_{rs}\Im ( h_{rs} \bar\psi_r \psi_s).
    \end{equation}
Therefore $h$ is a critical point of $\lambda_k$ if and only if the following {\em criticality
condition} holds,
\[ h_{rs} \bar\psi_r \psi_s \in \RR \ \text{ for all } r \sim s.\]
\subsection{}
To focus on the variation of $\lambda_k$ with respect to $\alpha$,
let $\mu(\alpha) = \lambda_k(h_\alpha)$ and assume $\alpha = 0$ is a critical point of $\mu$.
To determine the second derivative consider two tangent vectors
 $\gamma,\delta \in T_{\alpha} \cA(G) =\cA(G)$ and the corresponding two parameter family
$h(s,t) = (s\gamma + t\delta)*h$.  Set
\[ \mu(s,t) = \lambda_k((s\gamma + t \delta)*h).\]
Abusing notation, we use $\partial_{\gamma}$ and $\partial_{\delta}$ to denote derivatives in $s, t$ at $s = t = 0$, that is,
\[ \partial_{\gamma}h = \frac{d}{ds}(s\gamma *h)\vert_{s=0}\ \text{ with }\ (\partial_{\gamma}h)_{rs} = i \gamma_{rs}h_{rs}.\]  

The Hessian of $\mu$ was computed in \cite{AloGor_jst23} (5.3), 
\[ \left< \gamma, \Hess(\mu)\delta \right> = 2 \Re(\left<\partial_{\gamma}\psi, (\partial_{\delta} h) \psi \right>)
+ \left< \psi, (\partial^2_{\gamma,\delta}h)\psi\right>\]
Using $\left< \psi, \psi \right> =1$ and
$(h-\lambda)\psi = 0$ gives 
\begin{equation} 0=\partial_{\delta}\lambda = \left< \psi, (\partial_{\delta}h) \psi \right> \end{equation} and 
\[ (\partial_{\gamma}h) \psi + (h-\lambda) \partial_{\gamma}\psi = 0. \]
Therefore $\partial_{\gamma}\psi + (h-\lambda)^+(\partial_{\gamma}h )\psi = c(\gamma) \psi$
for some constant $c(\gamma)$ since $\ker(h-\lambda)$ is 1-dimensional and spanned by $\psi$.  Taking the inner product with $(\partial_{\delta}h) \psi$ leaves
\[ \left<(\partial_{\delta}h) \psi, \partial_{\gamma} \psi\right> = - \left<(\partial_{\delta}h) \psi, (h-\lambda)^+ (\partial_{\gamma} h) \psi \right>.\]
 We obtain for the Hessian,
\begin{equation}\label{eqn-Hessian1}
\left< \gamma, \Hess(\mu) \delta \right> = -2\Re \left< B\delta, (h-\lambda)^+B\gamma \right> + \left< \psi, \partial^2_{\gamma,\delta}h \psi \right>\end{equation}
where $B$ is defined in equation \eqref{eqn-B}.

\section{The real part of a Hermitian form}

\begin{prop}\label{prop-Qreal}
  Let $H \in \cH_n(\CC)$ be Hermitian and let $B:\RR^m \to \CC^n$ be a
  real linear surjective mapping.  For $x, y \in \RR^m$ let
  \[ Q(x,y) = \Re\left< Bx, HBy\right>_{\CC},\]
where $\left< *, * \right>_{\CC}$ denotes the standard Hermitian form on $\CC^n$.  Then the inertia index of this form is:
\begin{align}
  \label{eq:Qreal_index}
  n_-(Q) & = 2 n_-(H) \\
  \label{eq:Qreal_positive}
  n_+(Q) & = 2n_+(H)  \\
  \label{eq:Qreal_nullity}
  n_0(Q) &= 2n_0(H) + m - 2n.
\end{align}
\end{prop}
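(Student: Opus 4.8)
The plan is to reduce, in three easy stages, to a one-dimensional computation.

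\textbf{Stage 1: quotient out $\ker B$.} First I would record that $Q$ is symmetric (from $H=H^{*}$ one gets $\Re\langle Bx,HBy\rangle=\Re\langle HBx,By\rangle=\Re\langle By,HBx\rangle$), and that $\ker B$ lies in the radical of $Q$: if $Bx=0$ then $Q(x,y)=Q(y,x)=0$ for all $y$. Since $B$ is surjective onto $\CC^{n}$, which has real dimension $2n$, we have $\dim_{\RR}\ker B=m-2n$ (in particular $m\ge 2n$). Choosing any real complement $W$ of $\ker B$ in $\RR^{m}$ gives a $Q$-orthogonal splitting $\RR^{m}=W\oplus\ker B$ on which $Q|_{\ker B}\equiv 0$, so $n_{\pm}(Q)=n_{\pm}(Q|_{W})$ and $n_{0}(Q)=n_{0}(Q|_{W})+(m-2n)$. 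The restriction $\phi:=B|_{W}\colon W\to\CC^{n}$ is a real-linear isomorphism and $Q|_{W}(x,y)=Q_{0}(\phi x,\phi y)$ for the form $Q_{0}(u,v):=\Re\langle u,Hv\rangle$ on $\CC^{n}$ viewed as a real $2n$-dimensional space; hence by Sylvester's law of inertia it suffices to prove $n_{\pm}(Q_{0})=2n_{\pm}(H)$ and $n_{0}(Q_{0})=2n_{0}(H)$.

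\textbf{Stage 2: diagonalize $H$.} Write $H=U^{*}\Lambda U$ with $U$ unitary and $\Lambda=\diag(\lambda_{1},\dots,\lambda_{n})$, $\lambda_{j}\in\RR$. The map $u\mapsto Uu$ is a complex-linear, hence real-linear, automorphism of $\CC^{n}$, and using unitarity one checks $Q_{0}(u,v)=\Re\langle Uu,\Lambda Uv\rangle$; thus $Q_{0}$ is congruent to $(u,v)\mapsto\Re\langle u,\Lambda v\rangle$, and by Sylvester again I may assume $H=\Lambda$ is real diagonal.

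\textbf{Stage 3: one complex coordinate at a time.} With $H$ diagonal, $Q_{0}$ is the $Q_{0}$-orthogonal direct sum over the $n$ coordinate lines, the $j$-th summand being $q_{j}(u_{j},v_{j})=\lambda_{j}\Re(\overline{u_{j}}v_{j})$ on $\CC\cong\RR^{2}$. Writing $u_{j}=a+ib$, $v_{j}=c+id$ gives $q_{j}=\lambda_{j}(ac+bd)$, whose Gram matrix is $\lambda_{j}I_{2}$; so $q_{j}$ contributes $(n_{+},n_{-},n_{0})=(2,0,0)$, $(0,2,0)$, or $(0,0,2)$ according as $\lambda_{j}>0$, $<0$, or $=0$. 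Summing over $j$ gives $n_{\pm}(Q_{0})=2n_{\pm}(H)$ and $n_{0}(Q_{0})=2n_{0}(H)$, which combined with Stage 1 yields the three claimed identities.

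\textbf{On the difficulty.} None of the stages is genuinely hard; the only things needing care are that $\ker B$ really sits in the radical (so the Stage 1 splitting is $Q$-orthogonal and contributes $m-2n$ purely to the nullity) and the bookkeeping that converts every real-dimension count into a factor of $2$. The entire content of the proposition is the elementary observation in Stage 3 that the real part of the standard Hermitian form on a complex line has Gram matrix $I_{2}$, so passing from a Hermitian form to its real part doubles each inertia index.
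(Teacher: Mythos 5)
Your proof is correct, and it takes a genuinely different route from the paper's. You factor through the reduced form $Q_{0}(u,v)=\Re\langle u,Hv\rangle$ on $\CC^{n}\cong\RR^{2n}$, then diagonalize $H$ by a unitary congruence and split $Q_{0}$ into $n$ rank-two blocks $\lambda_{j}I_{2}$, so the index-doubling is read off one complex line at a time. The paper instead picks coordinates $\RR^{m}\cong\RR^{n}\oplus\RR^{n}\oplus\RR^{m-2n}$ adapted to $B(u\oplus v\oplus w)=u+iv$, writes the Gram matrix of $Q$ as a $3\times3$ block matrix built from $H_{0}=\Re H$ and $H_{1}=\Im H$, and shows directly that each unit eigenvector $\psi=\xi+i\eta$ of $H$ produces the orthogonal pair $(\xi,\eta,0)$, $(-\eta,\xi,0)$ of eigenvectors of $Q$ with the same eigenvalue, so an orthonormal eigenbasis of $H$ inflates to one of $Q$. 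The two arguments make the same observation — restricting the standard Hermitian form to a complex eigenline and taking real parts yields a scalar multiple of $I_{2}$ — but your version buys a cleaner modular structure (kernel, congruence, block-diagonalize), while the paper's buys an explicit eigenbasis of $Q$ without invoking the spectral theorem for $H$ as a congruence, only as a choice of eigenbasis. Both are complete; neither has a gap.
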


\begin{proof}
  Expand $H = H_0 + i H_1$ into its (symmetric) real and
  (skew-symmetric) imaginary parts.  By choosing appropriate
  coordinates\footnote{By Sylvester's law of inertia, indices of $Q$
    do not depend on the choice of basis.} on $\RR^m$ we can assume
  \begin{equation}
    \label{eq:B_choice_basis}
    B(u\oplus v\oplus w) = u+ iv
    \qquad\text{for}\quad
    u\oplus v\oplus w \in \RR^n \oplus \RR^n \oplus \RR^{m-2n}.
  \end{equation}
  In these coordinates, $Q$
  is represented by the symmetric matrix
  \[
    Q =
    \begin{pmatrix}
      H_0 & -H_1 & 0\\ H_1 & H_0 & 0\\ 0&0&0
    \end{pmatrix}
  \]
  If $\psi=\xi + i\eta \in \CC^n$ is an eigenvector of $H$ with
  eigenvalue $\lambda \in \RR$ then
  \[
    R\psi = \begin{pmatrix}\xi \\ \eta \\ 0 \end{pmatrix}\in \RR^m
    \ \text{ and }\
    R(i\psi) = \begin{pmatrix} -\eta \\ \xi \\ 0 \end{pmatrix}\in \RR^m
  \]
  are (orthogonal) eigenvectors of $Q$ with the same eigenvalue.
  Furthermore, if
  $\left\{ \psi_1, \ldots, \psi_n \right\} \subset \CC^n$ are
  linearly independent (resp.~orthogonal) over $\CC$ then
  $\left\{R\psi_1, R(i\psi_1), \ldots, R\psi_n,
    R(i\psi_n)\right\}$ are linearly independent (resp. orthogonal)
  over $\RR$.  Together with a basis of $0\oplus0\oplus \RR^{m-2n}$
  they span $\RR^m$ and equations
  \eqref{eq:Qreal_index}-\eqref{eq:Qreal_nullity} follow.
\end{proof}

We remark that the imaginary part $\Im \left< Bx, HBy\right>_{\CC}$ is
skew-symmetric with purely imaginary eigenvalues which may be distinct
but come in complex conjugate pairs.

\bibliography{bk_bibl}
\bibliographystyle{siam}
\end{document}